\documentclass[12pt,reqno,a4paper]{amsart}
\usepackage[british]{babel}
\usepackage{amssymb,amsthm}
\usepackage[utf8]{inputenc}
\usepackage[margin=2.8cm]{geometry}
\usepackage[small]{eulervm}
\usepackage{tgpagella}
\usepackage{bbold} 
\usepackage{array}
\usepackage{enumitem}
\usepackage[vcentermath]{youngtab}
\Yboxdim{4pt}
\usepackage[unicode]{hyperref}
\hypersetup{%
  pdftitle   = {Spencer cohomology and eleven-dimensional supergravity},
  pdfkeywords = {Spencer cohomology, Lie superalgebras, Poincaré, supergravity, maximal supersymmetry},
  pdfauthor  = {José Figueroa-O'Farrill, Andrea Santi},
  pdfcreator = {\LaTeX\ with package \flqq hyperref\frqq}
}
\theoremstyle{plain}
\newtheorem{lemma}{Lemma}
\newtheorem{proposition}[lemma]{Proposition}
\newtheorem{theorem}[lemma]{Theorem}

\theoremstyle{definition}
\newtheorem*{remark}{Remark}
\newcommand{\ten}{\natural}
\newcommand{\Hom}{\mathrm{Hom}}
\newcommand{\End}{\mathrm{End}}
\newcommand{\AdS}{\mathrm{AdS}}
\newcommand{\Gr}{\mathrm{Gr}}
\newcommand{\GL}{\mathrm{GL}}
\newcommand{\SO}{\mathrm{SO}}
\newcommand{\CSO}{\mathrm{CSO}}
\newcommand{\Spin}{\mathrm{Spin}}
\newcommand{\CSpin}{\mathrm{CSpin}}
\renewcommand{\Im}{\mathrm{Im}}
\newcommand{\Cl}{C\ell}
\newcommand{\ad}{\mathrm{ad}}
\newcommand{\dvol}{\mathrm{dvol}}
\newcommand{\stab}{\mathfrak{stab}}
\newcommand{\fosp}{\mathfrak{osp}}
\newcommand{\fso}{\mathfrak{so}}
\newcommand{\fm}{\mathfrak{m}}
\newcommand{\fg}{\mathfrak{g}}
\newcommand{\fh}{\mathfrak{h}}
\newcommand{\1}{\mathbb{1}} 
\newcommand{\PP}{\mathbb{P}}
\newcommand{\RR}{\mathbb{R}}
\newcommand{\ZZ}{\mathbb{Z}}
\newcommand{\NN}{\mathbb{N}}
\newcommand{\CC}{\mathbb{C}}
\newcommand{\be}{\boldsymbol{e}}
\newcommand{\sbar}{{\overline s}}

\allowdisplaybreaks
\begin{document}

\title{Spencer cohomology and eleven-dimensional supergravity}
\author{José Figueroa-O'Farrill}
\author{Andrea Santi}
\address{Maxwell Institute and School of Mathematics, The University
  of Edinburgh, James Clerk Maxwell Building, Peter Guthrie Tait Road,
  Edinburgh EH9 3FD, United Kingdom}
\thanks{EMPG-15-22}
\begin{abstract}
  We recover the classification of the maximally supersymmetric
  bosonic backgrounds of eleven-di\-men\-sional supergravity by Lie
  algebraic means. We classify all filtered deformations of the
  $\mathbb Z$-graded subalgebras
  $\fh=\fh_{-2}\oplus\fh_{-1}\oplus\fh_{0}$ of the Poincaré
  superalgebra $\fg=\fg_{-2}\oplus\fg_{-1}\oplus\fg_{0}=V\oplus
  S\oplus \fso(V)$ which differ only in zero degree, that is
  $\fh_0\subset\fg_0$ and $\fh_j=\fg_j$ for $j<0$. Aside from the
  Poincaré superalgebra itself and its $\mathbb Z$-graded subalgebras,
  there are only three other Lie superalgebras, which are the symmetry
  superalgebras of the non-flat maximally supersymmetric backgrounds.
  In passing we identify the gravitino variation with (a component of)
  a Spencer cocycle.
\end{abstract}
\maketitle
\tableofcontents

\section{Introduction}
\label{sec:introduction}

The work described in this paper is an attempt at breaking new ground
in the classification problem of supersymmetric backgrounds of
eleven-di\-men\-sional supergravity.  This problem as such has been
pursued on and off for the last 15 years; although its roots date back
to the 1980s and the classification results for Freund--Rubin-like
backgrounds (see, e.g., the review \cite{DNP}) in the context of
Kaluza--Klein supergravity.

A convenient organising principle in the classification of
supersymmetric supergravity backgrounds is the fraction $\nu$ of
supersymmetry preserved by the background, which is ``categorified''
as the dimension $N = 32 \nu$ of the odd subspace of its Killing
superalgebra.  At present there exist a classification for $N=32$
\cite{FOPMax}, non-existence results for $N=31$
\cite{NoMPreons,FigGadPreons} and $N=30$ \cite{Gran:2010tj}, a
structure result for $N=1$ \cite{GauPak,GauGutPak}, and a huge zoo of
solutions for other values of $N$, but no claim of classification.  No
solutions are known for $30>N>26$, but there is a pp-wave background
with $N=26$ \cite{Michelson26}.  This ``supersymmetry gap'' is
reminiscent of the gap phenomenon in geometric structures (see, e.g.,
\cite{2013arXiv1303.1307K,deMedeiros:2014hla}) and, indeed, part of
the motivation to explore the approach presented in this paper was to
understand the nature of this gap.

The consensus seems to be that, at present, the classification of all
supersymmetric backgrounds is inaccessible, whereas that of highly
supersymmetric backgrounds seems tantalisingly in reach.  In
particular, backgrounds with $N>16$ are now known to be locally
homogeneous \cite{HomogThm} and this brings to bear the techniques of
homogeneous geometry to classify certain kinds of backgrounds; e.g.,
symmetric \cite{FigueroaO'Farrill:2011fj,Hustler:2015lca} or
homogeneous under a given Lie group
\cite{FOUM2,2014arXiv1409.2664L}, at least when the group is
semisimple.

This paper is a first step in a Lie algebraic approach at the
classification problem.  The proposal, to be made more precise in a
forthcoming paper, is to take the Killing superalgebra as the
organising principle.  As we will show in that forthcoming paper, the 
Killing superalgebra of a supersymmetric eleven-di\-men\-sional
supergravity background (and also, indeed, of other supergravity
theories) is a filtered deformation of a subalgebra of the relevant
Poincaré superalgebra.  The classification problem of filtered
deformations of Lie superalgebras seems tractable via cohomological
techniques \cite{MR1688484,MR2141501} which extend the use of Spencer
cohomology in the theory of $G$-structures or, more generally, Tanaka
structures.

Therefore in this paper we will present a Lie algebraic derivation of (the
symmetry superalgebras of) the maximally supersymmetric bosonic
backgrounds of eleven-di\-men\-sional supergravity by purely
representation theoretic means.  In so doing we will actually
``rediscover'' eleven-di\-men\-sional supergravity from a cohomological
calculation.

Our point of departure will be the Poincaré superalgebra
$\fg = \fg_{-2} \oplus \fg_{-1} \oplus \fg_{0} = V \oplus S \oplus
\fso(V)$ or, more precisely, its supertranslation ideal $\fm =
\fm_{-2}\oplus\fm_{-1} = V\oplus S$. At first, it might seem
overoptimistic to expect that such a derivation is possible. How does
the supertranslation ideal (or even the Poincaré superalgebra) know
about the maximally supersymmetric supergravity backgrounds?  We can
give at least two heuristic answers to this question.

The physicist's answer is that, in a sense, this has always been
possible, albeit via a rather circuitous route.  That route starts by
searching for massless irreducible unitary representations of the
Poincaré superalgebra.  Following Nahm \cite{Nahm}, we would find the
``supergravity multiplet'': the unitary irreducible representation
induced from the (reducible) representation of the ``little group''
$\Spin(9)$ isomorphic to
\begin{equation*}
  \odot^2_0W \oplus \Lambda^3 W \oplus (W \otimes \Sigma)_0~,
\end{equation*}
where $W$ and $\Sigma$ are, respectively the real 9-di\-men\-sional vector
and 16-di\-men\-sional spinor representation of $\Spin(9)$, $\odot^2_0$
means symmetric traceless and the subscript $0$ on the last term means
the kernel of the Clifford action $W \otimes \Sigma \to \Sigma$ or,
equivalently, ``gamma traceless''.  (More generally, we use the
notation $\odot^n$ to mean the $n$-th symmetric tensor power.)  In this
data, a physicist would recognise at once the physical degrees of
freedom corresponding to a Lorentzian metric $g$, a 3-form potential
$A$ and a gravitino $\Psi$ and would set to construct a supergravity
theory with that field content.  It turns out that there is a unique
such supergravity theory, which was constructed by Cremmer, Julia and
Scherk in \cite{CJS}.  The action (with $\Psi=0$) is given by the sum
\begin{equation*}
  I = I_{\textrm{EH}} + I_{\textrm{M}} + I_{\textrm{CS}} = \tfrac12
  \int R \dvol + \tfrac14 \int F \wedge \star F + \tfrac1{12} \int F
  \wedge F \wedge A~,
\end{equation*}
where $F =dA$, of an Einstein--Hilbert, Maxwell and Chern--Simons
actions. The full action (including the terms depending on the
gravitino $\Psi$) is invariant under local supersymmetry. The
transformation of the gravitino under local supersymmetry defines a
connection $D$ on the spinor bundle, which encodes most of the
geometric data of the supergravity theory. For all vector fields $X$
and spinor fields $\varepsilon$, the connection $D$ is defined
by
\begin{equation}
  \label{eq:D}
  D_X \varepsilon = \nabla_X \varepsilon + \tfrac16 \iota_X F \cdot
  \varepsilon + \tfrac1{12} X^\flat \wedge F \cdot \varepsilon~,
\end{equation}
with $X^\flat$ the dual one-form to $X$ and $\cdot$ denoting the
Clifford action. 

A maximally supersymmetric bosonic background is one where $\Psi=0$
and $D$ is flat (one checks that $D$-flatness actually implies the
field equations). The $D$-flatness equations can be solved and one
finds, as was done in \cite{FOPMax}, that besides Minkowski spacetime
(with $F=0$) there are three further families of backgrounds: two
one-parameter families of Freund--Rubin backgrounds --- the original
background $\AdS_4 \times S^7$ found by Freund and Rubin in
\cite{FreundRubin} and $\AdS_7 \times S^4$, found by Pilch, Townsend
and van~Nieuwenhuizen in \cite{AdS7S4} --- and a symmetric pp-wave
found by Kowalski-Glikman in \cite{KG} and interpreted in
\cite{ShortLimits} as the Penrose limit of the Freund--Rubin
backgrounds. The calculation of the symmetry superalgebra of these
backgrounds is then straightforward and we arrive at the Poincaré
superalgebra itself for the Minkowski background, orthosymplectic Lie
superalgebras $\fosp(8|4)$ for $\AdS_4 \times S^7$ and $\fosp(2,6|4)$
for $\AdS_7 \times S^4$, and a contraction thereof for the
Kowalski-Glikman wave (see
\cite{JMFKilling,FOPflux,HatKamiSaka}). Although all of these
backgrounds are maximally supersymmetric, it is the Minkowski
background which has the largest symmetry: the Poincaré superalgebra
has dimension $(66|32)$, whereas the symmetry superalgebras of the
other maximally supersymmetric bosonic backgrounds have dimension
$(38|32)$.

That would be the physicist's answer, but there is also a geometer's
answer to the question of how the supertranslation ideal knows about
(the symmetry superalgebras of) the maximally supersymmetric
backgrounds, stemming from the integrability problem for geometric
structures.  The point of departure in this story is the fact that
eleven-di\-men\-sional supergravity also admits, besides the traditional
``component'' formulation, a geometric presentation in terms
of supermanifolds.  This usually amounts to giving a reduction to
$\Spin(V)$ of the linear frame bundle of a supermanifold $M$ of
dimension $(11|32)$ or, in other words, a $G$-structure $\pi:P\to M$
where the structure group $G=\Spin(V)$ acts on the vector space direct
sum $V\oplus S$ of its vector and spinor representations (see
\cite{Brink:1980az,Cremmer:1980ru}). The geometric structure under
consideration is \emph{not arbitrary} but it satisfies some
constraints, expressed in terms of appropriate nondegeneracy
conditions on the intrinsic torsion of $\pi:P\to M$ (see, e.g.,
\cite{MR1079796} for a geometric motivation of the constraints). The
constraints put the theory ``on-shell'', in the sense that every
$G$-structure as above gives rise to a solution of the field equations
(see \cite{Howe:1997he}).

It appears therefore that the ``vacuum solution'' given by the super
Minkowski spacetime is described by a geometric structure which is not
integrable or flat, at least in the sense of $G$-structures. In
particular (the super-analogues of) the classical Spencer cohomology
groups and their associated intrinsic curvatures considered in
\cite{MR0203626} are not applicable to the study of the deformations
of these structures and, ultimately, to the quest for supergravity
backgrounds.

In \cite{MR3056953,MR2798219} a description of eleven-di\-men\-sional
supergravity based on the notion of a super Poincaré structure is
proposed. This is an odd distribution $\mathcal D\subset TM$ on a
supermanifold $M$ of dimension $(11|32)$ which is of rank $(0|32)$ and
with Levi form
\begin{equation}
  \mathcal L : \mathcal D \otimes \mathcal D \to TM/\mathcal
  D\;,\qquad\mathcal L(X,Y)=[X,Y]\mod\mathcal D\;,
\end{equation}
locally identifiable with the bracket $S\otimes S\to V$ of the
supertranslation algebra $\fm$. Note that $\mathcal D$ is a maximally
nonintegrable distribution and it is of depth $d=2$, in the sense that
$\Gamma(\mathcal D)+[\Gamma(\mathcal D),\Gamma(\mathcal
D)]=\mathfrak{X}(M)$.  These structures can be studied with (the
analogues for supermanifolds of) the standard techniques of the theory
of Tanaka structures, a powerful generalisation of $G$-structures
found by Tanaka in \cite{MR0266258,MR533089} to deal with geometries
supported over non-integrable distributions.

Let us briefly recall the main points of Tanaka's approach. It builds
on the observation that a distribution $\mathcal D$ on a manifold $M$
determines a filtration
\begin{equation*}
  T_xM = \mathcal D_{-d}(x) \supset \mathcal D_{-d+1}(x) \supset
  \cdots \supset \mathcal D_{-2}(x) \supset\mathcal D_{-1}(x) =
  \mathcal D_x
\end{equation*}
of each tangent space $T_xM$, $\mathcal D_{-i}(x)$ being the subspace
of $T_xM$ given by the values of the vector fields in
$\Gamma(\mathcal D)_{-i}=\Gamma(\mathcal D)_{-i+1}+[\Gamma(\mathcal
D)_{-1},\Gamma(\mathcal D)_{-i+1}]$
and $\Gamma(\mathcal D)_{-1}=\Gamma(\mathcal D)$. He then noticed that
the ``symbol space''
\begin{equation*}
  \fm(x) = \operatorname{gr}(T_xM) = \fm_{-d}(x) \oplus \fm_{-d+1} (x)
  \oplus \cdots \oplus \fm_{-2}(x) \oplus \fm_{-1}(x)
\end{equation*}
inherits the structure of a $\mathbb Z$-graded Lie algebra by the
commutators of vector fields and assumed, as a regularity condition,
that all $\fm(x)$ are isomorphic to a fixed $\mathbb Z$-graded Lie
algebra $\fm=\fm_{-d}\oplus\cdots\oplus\fm_{-1}$ which is generated by
$\fm_{-1}$.  We call such $\ZZ$-graded Lie algebras
\emph{fundamental}.

To any fundamental Lie algebra $\fm$ one can associate a unique
\emph{maximal transitive prolongation} in positive degrees
\begin{equation*}
  \fg^\infty=\bigoplus_{p\in \mathbb Z} \fg^\infty_p\;.
\end{equation*}
This is a (possibly infinite-di\-men\-sional) $\mathbb{Z}$-graded Lie algebra
which satisfies:
\begin{enumerate}
\item $\fg^\infty_p$ is finite-di\-men\-sional for every $p\in\mathbb Z$;
\item $\fg^\infty_p=\fm_p$ for every $-d\leq p\leq -1$ and $\fg^\infty_p=0$ for every $p<-d$;
\item for all $p\geq 0$, if $x\in\fg^\infty_p$ is an element such that
  $[x,\fg^\infty_{-1}]=0$, then $x=0$ (this property is called \emph{transitivity});
\item $\fg^\infty$ is \emph{maximal} with these properties.
\end{enumerate}
Finally he introduced the concept of a Tanaka structure, a
$G_0$-reduction $\pi:P\rightarrow M$ of an appropriate
$G_0^\infty$-principal bundle, $Lie(G^\infty_0)=\fg^\infty_0$,
consisting of linear frames defined just on the subspaces
$\mathcal D_x$ of the $T_xM$'s (in particular the usual $G$-structures
are the Tanaka structures of depth $d=1$), and also the analogs of the
Spencer cohomology groups and their associated intrinsic curvatures
for $\mathbb Z$-graded Lie algebras of depth $d>1$. In this context
the integrable model, which realises the maximum dimension of the
algebra of symmetries, is the nilpotent and simply connected Lie group
with Lie algebra $\fm$.
 
In the relevant case of supermanifolds and eleven-di\-men\-sional
supergravity, the symbol is just the supertranslation algebra $\fm$,
the integrable model is the super Minkowski spacetime and a Tanaka
structure on a supermanifold with symbol $\fm$ and structure group
$G_0=\Spin(V)\subset G^\infty_0=\CSpin(V)$ is the same as a
supergravity background.

This paper considers the deformations of the super Minkowski spacetime
from Tanaka's perspective and recovers the classification of maximally
supersymmetric bosonic backgrounds of eleven-di\-men\-sional supergravity
by Lie algebraic means. Our starting point is the supertranslation
algebra $\fm=\fm_{-2}\oplus\fm_{-1}=V\oplus S$ and the nontrivial result 
\cite{MR3218266,MR3255456} that its maximal transitive prolongation is
the extension
\begin{equation*}
  \fg^\infty = \fg_{-2}^\infty \oplus \fg_{-1}^\infty \oplus
  \fg_0^\infty = V \oplus S \oplus \left(\fso(V) \oplus \mathbb R E\right)
\end{equation*}
of the Poincaré superalgebra
$\fg=\fg_{-2}\oplus\fg_{-1}\oplus\fg_{0}=V\oplus S\oplus \fso(V)$ by 
the grading element
\begin{equation*}
  E\in\fg^\infty_0 \qquad\text{where}\qquad
  \left.\ad(E)\right|_{\fg^\infty_j}=j \, Id_{\fg^\infty_j}~.
\end{equation*}

More precisely we will show that the symmetry superalgebras of the
maximally supersymmetric bosonic backgrounds correspond exactly to the
filtered deformations of the $\mathbb Z$-graded subalgebras
$\fh=\fh_{-2}\oplus\fh_{-1}\oplus\fh_{0}$ of the Poincaré
superalgebra which differ only in zero degree, that is
$\fh_0\subset\fg_0$ and $\fh_j=\fg_j$ for $j<0$. We will make evident
that a gap phenomenon arises, the dimension of the symmetry
superalgebra dropping when considering non-integrable geometries, and,
in doing so, we also recover the connection \eqref{eq:D} by cohomological
methods.  In other words, we rediscover the basic geometric object of
the supergravity theory by a cohomological calculation.

We remark that similar gaps and upper bounds on the submaximal
dimension for (non-super) geometric Tanaka structures were recently
derived in \cite{2013arXiv1303.1307K} using Kostant's version
\cite{MR0142696} of Borel--Bott--Weil theory for semisimple Lie
algebras, whereas in our case we require different cohomological
techniques, developed for general $\ZZ$-graded Lie superalgebras by
Cheng and Kac in \cite{MR1688484,MR2141501}.

The paper is organised as follows.  In Section \ref{sec:poinc-super}
we introduce the problem by defining the notion of a filtered
deformation of $\mathbb Z$-graded subalgebras $\fh$ of the Poincaré
superalgebra $\fg$ differing only in degree $0$.  We observe that
infinitesimal filtered deformations can be interpreted in terms of
Spencer cohomology.  In Section~\ref{sec:spencer-complex} we introduce
the Spencer differential complex $C^{\bullet,\bullet}(\fm,\fg)$ and
prove that $H^{p,2}(\fm,\fg)=0$ for all even $p\geq 4$.  The main
result of Section~\ref{sec:infin-deform-g} is
Proposition~\ref{prop:beta}, giving an explicit isomorphism of
$\fso(V)$-modules between the group $H^{2,2}(\fm,\fg)$ and
$\Lambda^4 V$.  With only a modicum of hyperbole, we explain that we
may interpret this result as a cohomological derivation of
eleven-di\-men\-sional supergravity.  In
Section~\ref{sec:infin-deform-h} we consider the subalgebras $\fh$,
determine the corresponding Spencer groups $H^{p,2}(\fm,\fh)$ for all
even $p\geq 2$ and prove Theorem~\ref{thm:first}, which states that
infinitesimal filtered deformations of $\fh$ are classified by
$\fh_0$-invariant elements in $H^{2,2}(\fm,\fh)$.  In
Section~\ref{sec:integ-deform} we determine the $\fh_0$-invariant
elements in $H^{2,2}(\fm,\fh)$ and integrate the corresponding
infinitesimal deformations.  The classification of infinitesimal
deformations is contained in Proposition~\ref{thm:second} and their
integrability is proved in Section~\ref{sec:first-order-integr-deform}
and Section~\ref{sec:all-order-integr}.  Our results are summarised in
Theorem~\ref{thm:final}. The paper ends with some discussions in
Section~\ref{sec:disc-concl}.  Finally,
Appendix~\ref{sec:clifford-conventions} and
Appendix~\ref{sec:some-representations} set our conventions and basic
results on Clifford algebras, spinors and representations of
$\fso(V)$.

\section{The deformation complex}
\label{sec:deformation-cohomology}

In this section we give the basic definitions and then prove the first
results on the Spencer cohomology of the Poincaré superalgebra.

\subsection{The Poincaré superalgebra}
\label{sec:poinc-super}

Let $V$ denote a real eleven-di\-men\-sional vector space with a
Lorentzian inner product $\eta$ of signature $(1,10)$; that is, $\eta$
is ``mostly minus''.  The corresponding Clifford algebra $\Cl(V) \cong
\Cl(1,10) \cong \End(S_+) \oplus \End(S_-)$ where $S_\pm$ are irreducible
Clifford modules, real and of dimension $32$.  They are distinguished
by the action of the volume element $\Gamma_{11}\in\Cl(V)$, but are
isomorphic as $\Spin(V)$ representations.  We will work with $S=S_-$
in what follows, that is we assume $\Gamma_{11}\cdot s=-s$ for all
$s\in S$.

On $S$ there is a symplectic structure $\left<-,-\right>$ satisfying
\begin{equation}
  \label{eq:symplectic}
  \left<v\cdot s_1, s_2\right> = - \left<s_1, v \cdot s_2\right>~,
\end{equation}
for all $s_1,s_2 \in S$ and $v \in V$, where $\cdot$ refers to the
Clifford action.  In particular, $\left<-,-\right>$ is $\Spin(V)$-invariant,
making $S$ into a real symplectic representation of $\Spin(V)$.
Taking adjoint with respect to the symplectic structure defines an
anti-involution $\sigma$ on $\Cl(V)$ which, by \eqref{eq:symplectic},
is characterised by $\left.\sigma\right|_V = - \operatorname{Id}_V$.

Let $\fso(V)$ denote the Lie algebra of $\Spin(V)$.  The
(eleven-di\-men\-sional) Poincaré superalgebra is the $\mathbb Z$-graded Lie
superalgebra
\begin{equation*}
  \fg = \fg_{-2} \oplus \fg_{-1} \oplus \fg_{0}~,
\end{equation*}
where $\fg_0 = \fso(V)$, $\fg_{-1} = S$ and $\fg_{-2} = V$. The
$\ZZ$-grading is \emph{compatible} with the parity, in the sense that
$\fg_{\bar 0}=\fg_{-2}\oplus\fg_{0}$ and $\fg_{\bar 1}=\fg_{-1}$,
and it allows only the following brackets:
\begin{itemize}
\item $[-,-]: \fg_0 \times \fg_i \to \fg_i$, which consists of the
  adjoint action of $\fso(V)$ on itself and its natural actions on $V$
  and $S$;
\item $[-,-]: \fg_{-1} \times \fg_{-1} \to \fg_{-2}$, which is the
  construction of the Dirac current of a spinor:
  \begin{equation}
    \label{eq:DiracCurrent}
    \eta(v,[s_1,s_2]) = \left<v\cdot s_1, s_2\right>~,
  \end{equation}
  for all $s_1,s_2 \in S$ and $v \in V$.  Notice that from
  \eqref{eq:symplectic}, it follows that $[s_1,s_2] = [s_2,s_1]$ and
  hence that it is determined by its restriction $[s,s]$ to the
  diagonal.  It is a fact that for all $s \in S$, the vector $v=[s,s]
  \in V$, satisfies $\eta(v,v) \geq 0$, i.e. it is either null or timelike.
\end{itemize}
Note that the even Lie subalgebra $\fg_{\bar 0}=\fso(V) \oplus V$ is
the Poincaré algebra.  We will let $\fm= \fm_{-2} \oplus \fm_{-1}$,
$\fm_{-2}=\fg_{-2}=V$, $\fm_{-1}=\fg_{-1}=S$ denote the (2-step
nilpotent) supertranslation ideal.  As it is generated by $\fm_{-1}$,
$\fm$ is a fundamental $\mathbb Z$-graded Lie superalgebra.

We consider $\mathbb Z$-graded subalgebras
$\fh=\fh_{-2}\oplus\fh_{-1}\oplus\fh_0$ of the Poincaré superalgebra
which differ only in zero degree, that is, $\fh \subset \fg$ with
$\fh_0 \subset \fg_0$ and $\fh_j = \fg_j$ for $j<0$ and we seek
filtered deformations of $\fh$. These are the Lie superalgebras $F$
with an associated compatible filtration
$F_\bullet=\cdots \supset F_{-2} \supset F_{-1} \supset F_0\supset
\cdots$
such that the corresponding $\mathbb Z$-graded Lie superalgebra agrees
with $\fh$ (see, e.g., \cite{MR1688484,MR2141501}).  Any such
filtration $F_\bullet$ is isomorphic \emph{as a vector space} to the
canonical filtration of $\fh$ given by $F_{i} = \fh$ for all $i<-2$,
$F_{i} = 0$ for all $i>0$ and
\begin{equation*}
  F_{-2} =\fh= \fh_{-2} \oplus \fh_{-1} \oplus \fh_{0}~,\qquad
  F_{-1} = \fh_{-1} \oplus \fh_{0}~,\qquad F_0 = \fh_0~.
\end{equation*}
The Lie superalgebra structure on $F$ satisfies $[F_i,F_j] \subset
F_{i+j}$ and we are interested in those structures such that the
components of the Lie brackets of zero filtration degree coincide with
the Lie brackets of $\fh$.

For the Lie superalgebras of interest we can be very concrete and
describe the most general filtered deformation of $\fh$ by the
following brackets:
\begin{equation*}
  \begin{aligned}[m]
	  & [\fh_{0}, V] & \subset  V\oplus\fh_{0}\;\\
    & [S, S] & \subset  V\oplus \fh_{0}\;\\
    & [V, S] & \subset S\phantom{\oplus\fh_0c}\,\;\\
    & [V, V] & \subset V \oplus \fh_{0}\;
  \end{aligned}
\end{equation*}
and the condition that the associated graded Lie superalgebra should
be isomorphic to $\fh$ translates into the condition that the
component in $V$ of the brackets $[\fh_{0},V]$
and $[S,S]$ should
not be modified from the ones in the Poincaré superalgebra.
The components of the Lie brackets of non-zero filtration degree are
as follows:

\begin{enumerate}[label=(\roman*)]
\item the even component $\mu$ is the sum
  $\mu=\alpha+\beta+\gamma+\rho$ of the degree-$2$ maps
  \begin{equation} 
    \label{eq:filteredef}
    \begin{split}
      \alpha &:\Lambda^2 V\to V~,\qquad\beta:V\otimes S\to S~,\\
      \gamma &:\odot^2 S\to \fh_0~,\qquad\rho:\fh_0\otimes V\to\fh_0~;
    \end{split}
  \end{equation}
\item and the even component $\delta:\Lambda^2 V\to \fh_0$ of degree $4$.
\end{enumerate}

Calculating the deformations involves, at first order, the
calculation of the cohomology of an appropriate refinement of the
Chevalley--Eilenberg complex which we now describe.  It is a
refinement (by degree) $H^{\bullet,\bullet}(\fm,\fg)$ of the usual
Chevalley-Eilenberg cohomology $H^\bullet(\fg,\fg)$ associated with a
Lie (super)algebra $\fg$ and its adjoint representation to the case of
$\ZZ$-graded Lie (super)algebras $\fg=\bigoplus_{j\in\mathbb Z}\fg_j$
with negatively graded part $\fm=\bigoplus_{j<0}\fg_{j}$.  We will
consider first the case of the full Poincaré superalgebra $\fg$.

\subsection{The Spencer complex}
\label{sec:spencer-complex}

The cochains of the Spencer complex are even linear maps
$\Lambda^p \fm \to \fg$ or, equivalently, even elements of
$\fg \otimes \Lambda^p \fm^*$, where $\Lambda^\bullet$ is meant here
in the super sense. One extends the degree in $\fg$ to such cochains
by declaring that $\fg_j^*$ has degree $-j$. Since the $\ZZ$- and
$\ZZ_2$ gradings are compatible, even (resp. odd) cochains have even
(resp. odd) degree. It is not hard to see that the even $p$-cochains
of highest degree are the maps $\Lambda^p V \to \fso(V)$, which have
degree $2p$. The even $p$-cochains of lowest degree are those in
$\Hom(\odot^p S, V)$, for $p \equiv 0 \pmod 2$, which have degree
$p-2$, and those in $\Hom(\odot^p S, S)$ and
$\Hom(\odot^{p-1} S \otimes V, V)$, for $p \equiv 1 \pmod 2$, which
have degree $p-1$. As we will see below, the Spencer differential has
degree $0$, so the complex breaks up into a direct of sum of
\emph{finite} complexes for each degree.

The spaces in the complexes of even cochains for small degree are
given in Table~\ref{tab:even-cochains-small}; although the complex in
degree $4$ has cochains also for $p=5,6$ which the table omits.
We shall be mainly interested in $p=2$ in this paper, which, as we
will see in Theorem~\ref{thm:first} later on, corresponds to
infinitesimal  deformations.

\begin{table}[h!]
  \centering
  \caption{Even $p$-cochains of small degree}
  \label{tab:even-cochains-small}
  \begin{tabular}{c*{5}{|>{$}c<{$}}}
    \multicolumn{1}{c}{} & \multicolumn{5}{|c}{$p$} \\\hline
    deg & 0 & 1 & 2 & 3 & 4 \\\hline
    0 & \fso(V) & \begin{tabular}{@{}>{$}c<{$}@{}} S \to S\\ V \to
                    V\end{tabular} & \odot^2 S \to V & & \\\hline
    2 & & V \to \fso(V) & \begin{tabular}{@{}>{$}c<{$}@{}}
                            \Lambda^2 V \to V\\ V \otimes S \to S \\
                            \odot^2 S \to \fso(V) \end{tabular}
         & \begin{tabular}{@{}>{$}c<{$}@{}} \odot^3 S \to S\\
             \odot^2 S \otimes V \to V\end{tabular} & \odot^4 S \to V
        \\\hline
    4 & & & \Lambda^2 V \to \fso(V) & \begin{tabular}{@{}>{$}c<{$}@{}}
   \odot^2 S \otimes V \to \fso(V) \\
   \Lambda^2 V \otimes S \to S \\ 
   \Lambda^3 V \to V \end{tabular} & \begin{tabular}{@{}>{$}c<{$}@{}}
  \odot^4 S \to \fso(V) \\
  \odot^3 S \otimes V \to S \\
  \odot^2 S\otimes \Lambda^2 V\to V \end{tabular} \\\hline
  \end{tabular}
\end{table}

Let $C^{d,p}(\fm,\fg)$ denote the space of $p$-cochains of degree
$d$.  The Spencer differential $\partial: C^{d,p}(\fm,\fg) \to
C^{d,p+1}(\fm,\fg)$ is the Chevalley--Eilenberg differential for the
Lie superalgebra $\fm$ relative to its module $\fg$ with respect to
the adjoint action.  For $p=0,1,2$ and $d\equiv 0 \pmod 2$ it is
explicitly given  by the following expressions:
\begin{align}
  \begin{split}\label{eq:Spencer0}
    &\partial : C^{d,0}(\fm,\fg) \to C^{d,1}(\fm,\fg)\\
    &\partial\varphi(X) = [X,\varphi]~,
  \end{split}
  \\
  \begin{split}\label{eq:Spencer1}
    &\partial : C^{d,1}(\fm,\fg)\to C^{d,2}(\fm,\fg)\\
    &\partial\varphi(X,Y) = [X,\varphi(Y)] - (-1)^{xy} [Y,\varphi(X)] - \varphi([X,Y])~,
  \end{split}
\\
  \begin{split}\label{eq:Spencer2}
    &\partial:C^{d,2}(\fm,\fg) \to C^{d,3}(\fm,\fg)\\
    &\partial\varphi(X,Y,Z) = [X,\varphi(Y,Z)]+(-1)^{x(y+z)}[Y,\varphi(Z,X)] + (-1)^{z(x+y)} [Z,\varphi(X,Y)] \\
    & {} \qquad\qquad\qquad - \varphi([X,Y],Z) - (-1)^{x(y+z)} \varphi([Y,Z],X) -(-1)^{z(x+y)} \varphi([Z,X],Y)~,
  \end{split}
\end{align}
where $x,y,\dots$ denote the parity of elements $X,Y,\dots$ of $\fm$
and $\varphi\in C^{d,p}(\fm,\fg)$ with $p=0,1,2$ respectively.

The space of cochains $C^{d,p}(\fm,\fg)$ is an $\fso(V)$-module and
the same is true for the spaces of cocycles and coboundaries, as
$\partial$ is $\fso(V)$-equivariant; this implies that each cohomology
group $H^{d,p}(\fm,\fg)$ is an $\fso(V)$-module, in a natural
way. This equivariance is very useful in calculations, as we will have
ample opportunity to demonstrate.

Many of the components of the Spencer differential turn out to be
injective. For instance, for all $\varphi\in \Hom(\Lambda^2 V,
\fso(V))$ in degree $4$, one has
\begin{align*}
  &\partial\varphi(s_1,s_2,v_1)=-\varphi([s_1,s_2],v_1)\\
  &\partial\varphi(v_1,v_2,s_1)=[s_1,\varphi(v_1,v_2)]
\end{align*}
where $s_1,s_2\in S$ and $v_1,v_2\in V$ and the two components
\begin{align*}
  &\Hom(\Lambda^2 V, \fso(V)) \rightarrow \Hom(\odot^2S \otimes
  V,\fso(V))\\
  &\Hom(\Lambda^2 V, \fso(V)) \rightarrow \Hom(\Lambda^2V \otimes
  S, S)
\end{align*}
of $\partial: C^{4,2}(\fm,\fg) \to C^{4,3}(\fm,\fg)$ are injective (in
the first case one uses that $\fm$ is fundamental).  We also note for
completeness that the third component is surjective but has nonzero
kernel, giving rise to the short exact sequence
\begin{equation*}
  0 \rightarrow V^{\yng(2,2)} \rightarrow \Hom(\Lambda^2 V, \fso(V))
  \rightarrow \Hom(\Lambda^3 V, V) \rightarrow 0~,
\end{equation*}
where $V^{\yng(2,2)}$ is the space of algebraic curvature operators;
that is, the subspace of $S^2\Lambda^2 V$ satisfying the algebraic
Bianchi identity.  One has the following

\begin{lemma}
  \label{lem:p4}
  The group $H^{d,2}(\fm,\fg)=0$ for all even $d\geq 4$.
\end{lemma}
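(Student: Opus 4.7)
The plan is to reduce the lemma to an easy degree count combined with the injectivity observation established immediately above. First I would enumerate the even $2$-cochains of degree $d\geq 4$. Since a cochain valued in $\fg_k$ whose domain sits in $\fm_{-i_1}\otimes\fm_{-i_2}$ has total degree $k+i_1+i_2$, and since $\fg_k=0$ for all $k>0$, a direct case check on the five even $2$-cochains --- $\Lambda^2 V\to V$, $\Lambda^2 V\to\fso(V)$, $\odot^2 S\to V$, $\odot^2 S\to\fso(V)$ and $V\otimes S\to S$ --- shows that the only one of degree $\geq 4$ is $\Lambda^2 V\to\fso(V)$, of degree exactly $4$. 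In particular $C^{d,2}(\fm,\fg)=0$ for every even $d>4$, which settles the lemma in those degrees.

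For $d=4$ it remains to show that $\partial:C^{4,2}(\fm,\fg)\to C^{4,3}(\fm,\fg)$ is injective and that $C^{4,1}(\fm,\fg)=0$. The latter follows from the same enumeration, since no even $1$-cochain can have degree $4$ once $\fg_k=0$ for $k>0$ is imposed. The former is precisely what the displayed formulas just above the lemma provide: either of the components
\begin{align*}
\Hom(\Lambda^2 V,\fso(V)) &\to \Hom(\odot^2 S\otimes V,\fso(V))~,\\
\Hom(\Lambda^2 V,\fso(V)) &\to \Hom(\Lambda^2 V\otimes S,S)
\end{align*}
is injective --- the first because $\fm$ is fundamental, so $[S,S]$ spans $V$ and hence $\varphi(\,[s_1,s_2],v_1)=0$ for all arguments forces $\varphi=0$; the second because $\fso(V)$ acts faithfully on the spin representation $S$, so $[s_1,\varphi(v_1,v_2)]=0$ for all $s_1$ forces $\varphi(v_1,v_2)=0$. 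Either injectivity alone forces any $2$-cocycle of degree $4$ to vanish, and combined with $C^{4,1}(\fm,\fg)=0$ this yields $H^{4,2}(\fm,\fg)=0$.

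The main (and only mild) obstacle is the bookkeeping of parities and $\ZZ$-degrees in the first step; once this is in place the remainder is essentially immediate from what has already been proved. The substantive cohomological content sits in the next degree down, $d=2$, which is the subject of Section~\ref{sec:infin-deform-g}.
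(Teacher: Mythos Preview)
Your argument is correct and follows the same route as the paper: for $d>4$ the cochain space $C^{d,2}(\fm,\fg)$ is empty by degree count, and for $d=4$ the Spencer differential restricted to $C^{4,2}(\fm,\fg)=\Hom(\Lambda^2V,\fso(V))$ is injective by either of the two components singled out just above the lemma. One small redundancy: once you know $\partial\bigr|_{C^{4,2}}$ is injective, the kernel is zero and $H^{4,2}=0$ follows immediately; the separate verification that $C^{4,1}(\fm,\fg)=0$ is not needed.
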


\begin{proof}
  If $d=4$ then $\operatorname{Ker}\partial|_{C^{4,2}(\fm,\fg)}=0$
  from the previous observations; if $d>4$ then the space of cochains
  $C^{d,2}(\fm,\fg)=0$ and the claim is immediate.
\end{proof}

In degree $2$ it is convenient to consider the decomposition of
$\fso(V)$-modules
\begin{equation*}
C^{2,2}(\fm,\fg)=\Hom(\Lambda^2 V, V)\oplus \Hom(V \otimes S, S)
\oplus \Hom(\odot^2S, \fso(V))
\end{equation*}
and the corresponding $\fso(V)$-equivariant projections
\begin{equation}
  \label{eq:projectors}
  \begin{split}
    \pi^{\alpha} &: C^{2,2}(\fm,\fg)\to \Hom(\Lambda^2 V, V)\\
    \pi^{\beta} &: C^{2,2}(\fm,\fg)\to \Hom(V \otimes S, S)\\
    \text{and}\qquad\pi^{\gamma} &: C^{2,2}(\fm,\fg)\to \Hom(\odot^2S,
    \fso(V))~.
  \end{split}
\end{equation}
We find that
\begin{equation*}
  \partial\varphi(s_1,s_2)=-\varphi([s_1,s_2]) \qquad\text{and}\qquad
  \partial\varphi(v_1,s_1)=[\varphi(v_1),s_1]
\end{equation*}
for all $\varphi\in \Hom(V,\fso(V))$, and that two of the three
components of $\partial : C^{2,1}(\fm,\fg) \to C^{2,2}(\fm,\fg)$ are
injective:
\begin{align*}
  &\Hom(V,\fso(V)) \hookrightarrow \Hom(\odot^2S, \fso(V)) \\
  &\Hom(V,\fso(V)) \hookrightarrow \Hom(V \otimes S, S)\;.
\end{align*}
On the other hand, the image of $\varphi$ under $\partial^\alpha$ is given by
  \begin{equation*}
    \partial^\alpha\varphi(v_1,v_2) = [v_1, \varphi(v_2)] -
    [v_2,\varphi(v_1)]~,
  \end{equation*}
  for $v_1,v_2 \in V$. This easily implies the following
\begin{lemma}
  \label{lem:iso}
  The component
  \begin{equation*}
    \partial^\alpha:=\pi^\alpha\circ\partial : \Hom(V,\fso(V)) \to \Hom(\Lambda^2 V, V)
  \end{equation*}
  of the Spencer differential is an isomorphism.
\end{lemma}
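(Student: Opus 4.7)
The plan is to reduce the lemma to the classical vanishing of the first prolongation of $\fso(V)$ inside $\End(V)$, and then invoke a dimension count to upgrade injectivity to an isomorphism.

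First I would rewrite $\partial^\alpha$ explicitly. In the Poincaré superalgebra one has $[v,A]=-Av$ for $A\in\fso(V)$ acting in the defining representation on $v\in V$, so the formula in the statement becomes
\begin{equation*}
  \partial^\alpha\varphi(v_1,v_2) = \varphi(v_1)(v_2)-\varphi(v_2)(v_1)~.
\end{equation*}
The metric $\eta$ identifies $\fso(V)\cong\Lambda^2 V$ as $\fso(V)$-modules, so both $\Hom(V,\fso(V))$ and $\Hom(\Lambda^2 V,V)$ are finite-dimensional of equal dimension $11\cdot 55$. It therefore suffices to establish injectivity of $\partial^\alpha$.

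For injectivity I would attach to a kernel element $\varphi$ the trilinear form $T(v_1,v_2,v_3):=\eta(\varphi(v_1)(v_2),v_3)$. Two symmetry properties hold: $T$ is skew-symmetric in the last pair of slots, since $\varphi(v_1)\in\fso(V)$ is skew-adjoint with respect to $\eta$; and $T$ is symmetric in the first pair of slots by the hypothesis $\partial^\alpha\varphi=0$. Alternating these two symmetries in the standard cyclic chain of six steps forces $T(v_1,v_2,v_3)=-T(v_1,v_2,v_3)$, whence $T\equiv 0$ and $\varphi=0$.

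The main obstacle is essentially nil: the argument above is precisely the textbook proof of uniqueness of the Levi-Civita connection, or equivalently of the vanishing of the first prolongation of any Lie subalgebra of $\End(V)$ preserving a nondegenerate bilinear form. The only genuine item requiring care is the sign convention for the bracket between $\fg_0=\fso(V)$ and $\fg_{-2}=V$ in the Poincaré superalgebra, which must be tracked correctly when rewriting $\partial^\alpha\varphi$ in the symmetric/skew-symmetric form displayed above.
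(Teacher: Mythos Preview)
Your proof is correct and is precisely the standard argument the paper leaves implicit: the text simply records the formula $\partial^\alpha\varphi(v_1,v_2)=[v_1,\varphi(v_2)]-[v_2,\varphi(v_1)]$ and says ``This easily implies the following'', so your dimension count together with the classical Levi--Civita/first-prolongation argument is exactly what fills in that ``easily''. There is nothing to add.
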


\section{Infinitesimal deformations}
\label{sec:infin-deform}

In this section we first calculate the cohomology group
\begin{equation*}
  H^{2,2}(\fm,\fg) = \frac{\ker\partial: C^{2,2}(\fm,\fg) \to
    C^{2,3}(\fm,\fg)}{\partial C^{2,1}(\fm,\fg)}~,
\end{equation*}
and then consider the $\mathbb Z$-graded Lie subalgebras $\fh$ of the
Poincaré superalgebra. Using the results obtained for $\fg$, we will
describe  the groups $H^{d,2}(\fm,\fh)$ for all $d\geq 2$ even and
then prove Theorem~\ref{thm:first} about the infinitesimal
deformations of $\fh$.

\subsection{Infinitesimal deformations of \texorpdfstring{$\fg$}{g}}
\label{sec:infin-deform-g}

We depart from the following observation.

\begin{lemma}\label{lem:cocyclerep}
  Every cohomology class $[\alpha + \beta + \gamma] \in
  H^{2,2}(\fm,\fg)$ with $\alpha \in \Hom(\Lambda^2V,V)$, $\beta \in
  \Hom(V \otimes S, S)$ and $\gamma \in \Hom(\odot^2S, \fso(V))$, has a
  unique cocycle representative with $\alpha = 0$.
\end{lemma}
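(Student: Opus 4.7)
The plan is to use Lemma~\ref{lem:iso} as the entire engine of the proof. From Table~\ref{tab:even-cochains-small} the space $C^{2,1}(\fm,\fg)$ is exactly $\Hom(V,\fso(V))$, so coboundaries in degree $2$ and arity $2$ are precisely the elements of $\partial(\Hom(V,\fso(V)))\subset C^{2,2}(\fm,\fg)$. Under the decomposition $C^{2,2}(\fm,\fg)=\Hom(\Lambda^2 V,V)\oplus \Hom(V\otimes S,S)\oplus \Hom(\odot^2 S,\fso(V))$, writing $\partial = \partial^\alpha + \partial^\beta + \partial^\gamma$ via the projectors \eqref{eq:projectors}, Lemma~\ref{lem:iso} asserts that $\partial^\alpha$ is an isomorphism onto $\Hom(\Lambda^2 V, V)$.

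For \emph{existence}, I would take a cocycle representative $\alpha+\beta+\gamma$ of a given class and invoke Lemma~\ref{lem:iso} to produce the unique $\varphi\in \Hom(V,\fso(V))$ with $\partial^\alpha\varphi=\alpha$. Then $\alpha+\beta+\gamma-\partial\varphi$ is a cocycle representing the same class, and its $\alpha$-component equals $\alpha-\partial^\alpha\varphi=0$ by construction.

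For \emph{uniqueness}, I would argue that if $\beta+\gamma$ and $\beta'+\gamma'$ are two cocycle representatives of the same class with vanishing $\alpha$-component, their difference is a coboundary $\partial\varphi$ for some $\varphi\in \Hom(V,\fso(V))$. Projecting onto $\Hom(\Lambda^2 V,V)$ gives $\partial^\alpha \varphi = 0$, and Lemma~\ref{lem:iso} (injectivity of $\partial^\alpha$) forces $\varphi=0$, hence $\partial\varphi=0$ and $\beta=\beta'$, $\gamma=\gamma'$.

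There is essentially no obstacle: the content is entirely packed into Lemma~\ref{lem:iso}, and once that lemma is available the statement is a short two-line consequence (normalise by subtracting $\partial\varphi$ for the unique preimage of $\alpha$ under $\partial^\alpha$; uniqueness follows from injectivity). The only thing to keep explicit is that $C^{2,1}(\fm,\fg)=\Hom(V,\fso(V))$ carries no other components, so the coboundary is genuinely determined by its $\alpha$-part.
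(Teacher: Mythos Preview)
Your proposal is correct and follows exactly the same approach as the paper: both use Lemma~\ref{lem:iso} to subtract the unique coboundary whose $\alpha$-component matches that of a given cocycle, and uniqueness follows from the injectivity half of that isomorphism. If anything, your write-up is slightly more explicit than the paper's, which spells out only the existence step and leaves the uniqueness argument to the reader.
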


\begin{proof}
  It follows from Lemma~\ref{lem:iso} that given any
  $\alpha \in \Hom(\Lambda^2V,V)$, there is a unique
  $\tilde\alpha \in C^{2,1}(\fm,\fg)$ such that
  $\partial \tilde\alpha = \alpha + \tilde\beta + \tilde\gamma$, for
  some $\tilde\beta \in \Hom(V \otimes S, S)$ and some
  $\tilde\gamma \in \Hom(\odot^2S,\fso(V))$.  Therefore given any
  cocycle $\alpha + \beta + \gamma$ we may add the coboundary
  $\partial(-\tilde\alpha)$ without changing its cohomology class,
  resulting in the cocycle $(\beta - \tilde\beta) + (\gamma -
  \tilde\gamma)$, which has no component in $\Hom(\Lambda^2 V ,V)$.
\end{proof}

In other words, $H^{2,2}(\fm,\fg)$ is isomorphic as an
$\fso(V)$-module to the kernel of the Spencer differential restricted
to $\Hom(V\otimes S, S) \oplus \Hom(\odot^2S,\fso(V))$.  It follows
from equation~\eqref{eq:Spencer2} for the Spencer differential, that a
cochain $\beta + \gamma \in \Hom(V\otimes S, S) \oplus
\Hom(\odot^2S,\fso(V))$ is a cocycle if and only if the following pair of
``cocycle conditions'' are satisfied:
\begin{equation}
  \label{eq:cc1}
  [\gamma(s,s),v] = - 2 [s,\beta(v,s)] \qquad\forall s\in S, v\in V~,
\end{equation}
and
\begin{equation}
  \label{eq:cc2}
  [\gamma(s,s),s] = -\beta([s,s],s) \qquad \forall s\in S~.\qquad\quad~
\end{equation}
We note that the cocycle condition \eqref{eq:cc1} fully expresses
$\gamma$ in terms of $\beta$, once the fact that $\gamma$ takes values
into $\fso(V)$ has been taken into account. To this aim, we define for
any $v\in V$ the endomorphism $\beta_v \in \End(S)$ by
$\beta_v(s) = \beta(v,s)$ and rewrite \eqref{eq:cc1} as
\begin{equation*}
  [\gamma(s,s),v] = -2 [s,\beta_v(s)]~.
\end{equation*}
Take the inner product with $v$ and use equations
\eqref{eq:DiracCurrent} and \eqref{eq:symplectic} to arrive at
\begin{equation}
  \label{eq:symmetric-endo}
  0 = 2 \left<s, v \cdot \beta_v(s)\right> \qquad \forall s\in S, v\in V~.
\end{equation}
This says that for all $v\in V$ the endomorphism $v\cdot \beta_v$ of
$S$ is symmetric relative to the symplectic form
$\left<-,-\right>$.  Equivalently, it is fixed by the anti-involution
$\sigma$ defined by the symplectic form: $\sigma(v\cdot \beta_v) = v
\cdot \beta_v$.

We now observe that if $\Theta \in \Cl(V)$ is fixed by $\sigma$, then
so are $v \cdot \Theta \cdot v$ and (trivially) $v \cdot v \cdot
\Theta = -\eta(v,v)\Theta$, so that we have an immediate class of
solutions to equation~\eqref{eq:symmetric-endo}: namely, $\beta_v = v
\cdot \Theta + \Theta' \cdot v$, where $\Theta,\Theta'\in \Cl(V)$ are
fixed by $\sigma$.

Following our conventions on Clifford algebras and spinors in Appendix
\ref{sec:clifford-conventions}, we have
\begin{equation}
\label{eq:decomposition}
  \End(S) \cong \bigoplus_{p=0}^5 \Lambda^p V~
\end{equation}
as $\fso(V)$-modules.  The anti-involution $\sigma$ preserves each
submodule and acts on the submodule of
type $\Lambda^p V$ as $(-1)^{p(p+1)/2}\1$, so that the submodule fixed
by $\sigma$ correspond to $\Lambda^2S \cong \Lambda^0 V \oplus
\Lambda^3 V \oplus \Lambda^4 V$.  In other words,
equation~\eqref{eq:symmetric-endo} says that for all $v\in V$,
\begin{equation}
  \label{eq:cc3}
  v \cdot \beta_v \in \Lambda^0V \oplus \Lambda^3V \oplus \Lambda^4V~;
\end{equation}
that is, strictly speaking, in the image of $\Lambda^0V \oplus
\Lambda^3V \oplus \Lambda^4V$ in $\End(S)$.

As we have seen above, we can exhibit solutions to
equation~\eqref{eq:cc3} of the form
\begin{equation*}
  \beta_v = v \cdot \Theta + \Theta' \cdot v~,
\end{equation*}
for $\Theta, \Theta' \in \Lambda^0 V \oplus \Lambda^3 V \oplus
\Lambda^4 V$.  Remarkably, it turns out that these are all the
solutions to equation~\eqref{eq:cc3}.

\begin{proposition}
  \label{prop:solutioncc3}
  The general solution of equation~\eqref{eq:cc3} is
  \begin{equation*}
    \beta_v = \theta_0 v + v \cdot \theta_3 + \theta'_3 \cdot v + v
    \cdot \theta_4 + \theta'_4 \cdot v~,
  \end{equation*}
 where $\theta_0 \in \Lambda^0V$, $\theta_3,\theta'_3 \in \Lambda^3
  V$ and $\theta_4,\theta'_4 \in \Lambda^4V$.
\end{proposition}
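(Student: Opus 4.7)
The plan is the following. The ``if'' direction reduces to the Clifford computations $v \cdot (v \cdot \Theta) = -\eta(v,v) \Theta$ and $\sigma(v \cdot \Theta' \cdot v) = \sigma(v) \cdot \sigma(\Theta') \cdot \sigma(v) = v \cdot \Theta' \cdot v$, which use $\sigma(v) = -v$ and the $\sigma$-invariance of elements of $\Lambda^0 V \oplus \Lambda^3 V \oplus \Lambda^4 V$; this is a routine verification. The substance of the proof lies in the converse.

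First, I would split $\beta_v = \beta_v^+ + \beta_v^-$ along the $\sigma$-eigenspace decomposition, so that $\beta_v^+ \in \Lambda^0 V \oplus \Lambda^3 V \oplus \Lambda^4 V$ (the $\sigma$-fixed subspace of $\End(S)$) and $\beta_v^- \in \Lambda^1 V \oplus \Lambda^2 V \oplus \Lambda^5 V$. Using the anti-homomorphism property of $\sigma$ together with $\sigma(v) = -v$, the single condition $\sigma(v \cdot \beta_v) = v \cdot \beta_v$ (equivalently \eqref{eq:cc3}) decouples into the two independent Clifford-algebra identities $\{v, \beta_v^+\} = 0$ and $[v, \beta_v^-] = 0$, valid for all $v \in V$.

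Next, I would further decompose $\beta^{\pm}_v = \sum_p \beta_v^{(p)}$ into components $\beta_v^{(p)} \in \Lambda^p V$. Using the standard Clifford identities $\{v, \omega\} = 2\, v \wedge \omega$ and $[v, \omega] = 2\, \iota_v \omega$ for $\omega \in \Lambda^p V$ with $p$ even (and swapped for $p$ odd), these two conditions split by form degree into six scalar equations: $v \wedge \beta_v^{(p)} = 0$ for $p \in \{0, 1, 4, 5\}$ and $\iota_v \beta_v^{(p)} = 0$ for $p \in \{2, 3\}$. The low-degree cases are elementary: $\beta^{(0)} = 0$ and $v \wedge \beta_v^{(1)} = 0$ forces $\beta_v^{(1)} = \theta_0 v$ for a constant $\theta_0 \in \mathbb{R}$ (by linearity in $v$). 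For $p \in \{2, 3, 4, 5\}$, I would invoke the $\fso(V)$-equivariant decomposition $V \otimes \Lambda^p V \cong \Lambda^{p+1} V \oplus \Lambda^{p-1} V \oplus W_p$ (with $W_p$ the complementary hook component) and apply Schur's lemma: the constraint kills the ``wrong'' summands, yielding $\beta_v^{(2)} = \iota_v \alpha$, $\beta_v^{(3)} = \iota_v \gamma$ for constants $\alpha \in \Lambda^3 V$, $\gamma \in \Lambda^4 V$, and $\beta_v^{(4)} = v \wedge \delta$, $\beta_v^{(5)} = v \wedge \epsilon$ for constants $\delta \in \Lambda^3 V$, $\epsilon \in \Lambda^4 V$.

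The main obstacle is verifying that the hook summand $W_p$ does not lie in the kernel: by Schur's lemma this reduces to a single non-vanishing calculation on a highest-weight vector in each of the four cases. A minor subtlety arises at $p = 5$, where one must account for the identification $\Lambda^6 V \cong \Lambda^5 V$ inside $\End(S)$ coming from $\Gamma_{11} = -1$ on $S_-$. Once all six components of $\beta_v$ are determined, one reassembles the solution via the Clifford identities $v \cdot \theta = v \wedge \theta + \iota_v \theta$ and $\theta \cdot v = (-1)^p(v \wedge \theta - \iota_v \theta)$ and reparameterizes by $\theta_3 = (\alpha + \delta)/2$, $\theta_3' = (\alpha - \delta)/2$, and the analogous formulas for $\theta_4, \theta_4'$ to match the stated expression.
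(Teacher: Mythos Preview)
Your decoupling step is incorrect, and this is a genuine gap. Writing $\beta_v = \beta_v^+ + \beta_v^-$ along the $\sigma$-eigenspaces and using $\sigma(v) = -v$, the condition $\sigma(v\cdot\beta_v) = v\cdot\beta_v$ becomes the \emph{single} identity
\[
\{v,\beta_v^+\} + [v,\beta_v^-] = 0,
\]
not two separate ones. Both $\{v,\beta_v^+\}$ and $[v,\beta_v^-]$ lie in the $\sigma=-1$ eigenspace $\Lambda^1V\oplus\Lambda^2V\oplus\Lambda^5V$, so there is no parity or eigenvalue argument forcing them to vanish independently. Projecting onto form degree therefore yields only \emph{three} equations, each coupling a pair of your components: the $\Lambda^1$-part couples $\beta^{(0)}$ with $\beta^{(2)}$, the $\Lambda^2$-part couples $\beta^{(1)}$ with $\beta^{(3)}$, and the $\Lambda^5$-part couples $\beta^{(4)}$ with $\beta^{(5)}$ (via $\Lambda^6V\cong\Lambda^5V$ in $\End S$). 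Your six decoupled conditions are strictly stronger than the true ones, so the analysis you carry out afterwards---which is correct for that stronger system---only shows that the stated $\Lambda^0V\oplus 2\Lambda^3V\oplus 2\Lambda^4V$ family \emph{is} a solution, not that it is the \emph{general} one.

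This coupling is exactly what the paper's proof handles. The paper introduces the polarised map $\Phi(\beta)(v,w)=v\cdot\beta_w+w\cdot\beta_v$, proves it is injective, and then analyses the three coupled equations $\Phi(\beta_0)_1+\Phi(\beta_2)_1=0$, $\Phi(\beta_1)_2+\Phi(\beta_3)_2=0$, $\Phi(\beta_4)_5+\Phi(\beta_5)_5=0$. The crucial point is that for each pair the relevant isotypical components---for instance the two copies of $V$ sitting in $\Hom(V,\Lambda^0V)\oplus\Hom(V,\Lambda^2V)$---are mapped \emph{injectively} into the target by the combined map, so no nontrivial linear combination survives. Your argument would need an analogous step: in degree $\Lambda^2$, say, you must rule out a nonzero $(\beta^{(1)},\beta^{(3)})$ with $v\wedge\beta_v^{(1)} = \iota_v\beta_v^{(3)}$ but neither side zero. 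The representation-theoretic strategy you outline (Schur plus a check on an explicit vector in each hook) is the right tool, but it has to be applied to the coupled system rather than to each $\beta^{(p)}$ in isolation.
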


Although a more combinatorial proof of
Proposition~\ref{prop:solutioncc3} is also possible, we give here a
proof which uses representation theory and the $\fso(V)$-equivariance
of the condition~\eqref{eq:cc3}. To do so, we will use freely the
notation in Appendix~\ref{sec:some-representations} and identify the
$\fso(V)$-modules $\Hom(V\otimes S, S)$ and $\Hom(V,\End(S))$.

We start by reformulating slightly Proposition~\ref{prop:solutioncc3}. 
Let 
\begin{equation*}
  \Phi: \Hom(V,\End(S)) \to \Hom(\odot^2V,\End(S))
\end{equation*}
denote the $\fso(V)$-equivariant map which sends $\beta \in
\Hom(V,\End(S))$ to $\Phi(\beta)$, given for all $v,w \in V$ by
\begin{equation}
  \label{eq:equivariant} 
  \Phi(\beta)(v,w) = v\cdot \beta_w + w \cdot \beta_v~,
\end{equation}
where $\cdot$ stands, as usual, for the Clifford product. We start
with a useful observation.

\begin{lemma}
  \label{le:injective}
  The map $\Phi:\Hom(V,\End(S)) \to \Hom(\odot^2V,\End(S))$ is
  injective.
\end{lemma}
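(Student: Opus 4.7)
The plan is to reduce the statement to showing that $\beta_v = 0$ for all $v \in V$ whenever $v \cdot \beta_v = 0$ for all $v$, which is a simple Clifford-algebraic fact exploiting the invertibility of non-null vectors.

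First I would observe that $\Phi(\beta)$ is, by construction, symmetric in its two slots, so the equation $\Phi(\beta) = 0$ is equivalent to its polarisation $\Phi(\beta)(v,v) = 0$ for all $v \in V$. Thus I would set $v = w$ in \eqref{eq:equivariant} to obtain the single condition
\begin{equation*}
  2\, v \cdot \beta_v = 0 \qquad \text{for all } v \in V~,
\end{equation*}
so the injectivity of $\Phi$ reduces to showing that $v \cdot \beta_v = 0$ for all $v \in V$ forces $\beta = 0$.

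Next I would fix a vector $v \in V$ with $\eta(v,v) \neq 0$ and use the Clifford relation $v \cdot v = -\eta(v,v)\, \1_{\Cl(V)}$ to see that $v$ is invertible in $\Cl(V)$ with inverse $-v/\eta(v,v)$. Multiplying the identity $v \cdot \beta_v = 0$ on the left by $v$ therefore yields $\beta_v = 0$ for every non-null $v \in V$.

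Finally, the set of non-null vectors is the complement of the quadric $\{\eta(v,v)=0\} \subset V$, hence Zariski-dense in $V$; since the map $v \mapsto \beta_v$ is linear (it is, after all, an element of $\Hom(V,\End(S))$), vanishing on a Zariski-dense subset forces $\beta_v = 0$ for all $v \in V$, and thus $\beta = 0$. There is no real obstacle here: the whole argument hinges on the invertibility of non-null vectors in the Clifford algebra, and the rest is a one-line density/linearity remark.
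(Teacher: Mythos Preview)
Your proof is correct and follows essentially the same route as the paper's: reduce to $v\cdot\beta_v=0$, left-multiply by $v$ to conclude $\beta_v=0$ for non-null $v$, then extend by linearity. The only cosmetic difference is that the paper invokes the existence of a basis of non-null vectors rather than Zariski-density, but both arguments are equivalent here.
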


\begin{proof}
  Suppose that $\Phi(\beta) =0$.  This means that for all $v \in V$,
  $v \cdot \beta_v = 0$.  By Clifford-multiplying on the left with
  $v$, we learn that $\beta_v = 0$ for all $v \in V$ with $\eta(v,v)
  \neq 0$.  But $v \mapsto \beta_v$ is linear and there exists a basis
  for $V$ whose elements have nonzero norm, hence $\beta_v = 0$ for
  all $v\in V$ and hence $\beta = 0$.
\end{proof}

Using the $\fso(V)$-module decomposition \eqref{eq:decomposition}
\begin{equation*}
  \Hom(V,\End(S)) \cong \bigoplus_{p=0}^5 \Hom(V,\Lambda^p V)~,
\end{equation*}
we may decompose $\beta = \beta_0 + \beta_1 + \cdots + \beta_5$, where
$\beta_p$ belongs to the $\fso(V)$-submodule $\Hom(V,\Lambda^p V)$ of
$\Hom(V,\End(S))$.  Similarly we have an $\fso(V)$-equivariant
isomorphism
\begin{equation*}
  \Hom(\odot^2 V,\End(S)) \cong \bigoplus_{q=0}^5
  \Hom(\odot^2V,\Lambda^q V)
\end{equation*}
and a corresponding decomposition
$\theta = \theta_0 + \theta_1 + \cdots +
\theta_5$ of $\theta \in \Hom(\odot^2 V,\End(S))$, with $\theta_q$ belonging to the
$\fso(V)$-submodule $\Hom(\odot^2 V, \Lambda^q V)$
of $\Hom(\odot^2 V, \End(S))$.

We now observe that equation~\eqref{eq:cc3} for $\beta$, which 
says that $\Phi(\beta)(v,w)$ is a symmetric endomorphism of $S$ for all $v,w\in V$,
is equivalent to $\Phi(\beta)_q = 0$ for $q=1,2,5$ and recall that the
solution space of these three equations contains a submodule of type
\begin{equation}
  \label{eq:solutionmodule}
  \Lambda^0 V \oplus 2 \Lambda^3 V \oplus 2 \Lambda^4 V~.
\end{equation}
From Table~\ref{tab:betacomps}, which lists the decomposition of
$\Hom(V,\Lambda^p V)$, for $p=0,1,\dots,5$, into irreducible
$\fso(V)$-modules, we see that there is a unique $\fso(V)$-submodule
isomorphic to \eqref{eq:solutionmodule}, whose irreducible components
appear inside boxes.  As explained in
Appendix~\ref{sec:some-representations}, the notation
$(V \otimes \Lambda^p V)_0$ stands for the kernel of Clifford
multiplication.

\begin{table}[h!]
  \centering
  \caption{Irreducible components of $\Hom(V,\Lambda^pV)$ for
    $p=0,\dots,5$.}
  \label{tab:betacomps}
  \begin{tabular}{>{$}c<{$}|>{$}l<{$}}
    \multicolumn{1}{c|}{p} & \multicolumn{1}{c}{$\Hom(V,\Lambda^p V)$}\\\hline
    0 & V\\
    1 & \boxed{\Lambda^0 V} \oplus \Lambda^2 V \oplus \odot^2_0 V\\
    2 & V \oplus \boxed{\Lambda^3 V} \oplus (V \otimes
        \Lambda^2V)_0\\
    3 & \Lambda^2 V \oplus \boxed{\Lambda^4 V} \oplus (V \otimes
        \Lambda^3V)_0\\
    4 & \boxed{\Lambda^3 V} \oplus \Lambda^5 V \oplus (V \otimes
        \Lambda^4V)_0\\
    5 & \boxed{\Lambda^4 V} \oplus \Lambda^5 V \oplus (V \otimes
        \Lambda^5V)_0
  \end{tabular}
\end{table}

It follows from this discussion that
Proposition~\ref{prop:solutioncc3} is equivalent to the following.

\begin{proposition}
\label{prop:first-coc-equivalent}
  The solution space of equation~\eqref{eq:cc3} is the unique
  submodule of $\Hom(V,\End(S))$ isomorphic to
  \eqref{eq:solutionmodule}.
\end{proposition}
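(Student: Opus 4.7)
I would read condition~\eqref{eq:cc3} as $\Phi_-(\beta)=0$, where $\Phi_-:=\pi_-\circ\Phi$ and $\pi_-$ is the $\fso(V)$-equivariant projection of $\End(S)$ onto its $\sigma$-antisymmetric part $\End(S)_-=\Lambda^1V\oplus\Lambda^2V\oplus\Lambda^5V$.  Then $\ker\Phi_-$ is an $\fso(V)$-submodule of $\Hom(V,\End(S))$, and by Schur's lemma and the irreducible decomposition in Table~\ref{tab:betacomps}, it is enough to pin down its contribution inside each isotypic component.

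The first two steps are essentially immediate.  The explicit Clifford solutions $\beta_v=\theta_0v+v\cdot\theta_3+\theta_3'\cdot v+v\cdot\theta_4+\theta_4'\cdot v$ already lie in $\ker\Phi_-$, and by Lemma~\ref{le:injective} they generate a submodule of isomorphism type $\Lambda^0V\oplus 2\Lambda^3V\oplus 2\Lambda^4V$ (the single $\Lambda^0V$ copy being accounted for by the $\theta_0v$ term alone, since scalars commute with $v$).  Inspection of the boxed entries of Table~\ref{tab:betacomps} then shows that $\Lambda^0V$, $\Lambda^3V$, $\Lambda^4V$ occur in $\Hom(V,\End(S))$ with total multiplicities exactly $1$, $2$, $2$, so this submodule already saturates the corresponding isotypic parts; this also establishes the uniqueness assertion of the proposition.

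It then remains to show that $\Phi_-$ is injective on every other (non-boxed) isotypic component of $\Hom(V,\End(S))$, namely the two copies each of $V$, $\Lambda^2V$, $\Lambda^5V$ (arising from adjacent Clifford degrees $p$ and $p+2$), the single $\odot^2_0V$, and the four $(V\otimes\Lambda^pV)_0$ for $p=2,3,4,5$.  For each isotypic of multiplicity $m$ this reduces, by Schur and equivariance, to a rank-$m$ condition on a matrix of Schur scalars extracted by evaluating $\Phi$ on a distinguished generator of each source copy and projecting to each target copy in $\Hom(\odot^2V,\End(S)_-)$.  Concretely, one uses the Clifford expansion $v\cdot\omega=v\wedge\omega+\iota_{v^\flat}\omega$ to compute $v\cdot\beta_v$ and isolate its $\Lambda^1V\oplus\Lambda^2V\oplus\Lambda^5V$ components.

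This final verification is the main obstacle.  The multiplicity-one isotypics $\odot^2_0V$ and the $(V\otimes\Lambda^pV)_0$ are easy: one exhibits a single test element with nonzero $\Phi_-$-image and invokes Schur.  The delicate cases are the multiplicity-two isotypics $V$, $\Lambda^2V$, $\Lambda^5V$, where one must rule out that a nontrivial linear combination of the two source copies lies in $\ker\Phi_-$.  Since the two copies in each case originate from $\Hom(V,\Lambda^pV)$ and $\Hom(V,\Lambda^{p+2}V)$, and their $\Phi$-images distribute (partially) over different degree pieces of $\End(S)_-$, a careful choice of target generators will separate them, reducing the problem to a finite list of elementary Clifford identities.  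No input beyond Schur's lemma and the $\fso(V)$-equivariance is needed, though the bookkeeping of signs and normalisations in the exterior/Clifford identification demands care.
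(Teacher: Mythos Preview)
Your overall strategy is sound and matches the paper's: reduce to checking that $\Phi_-$ is injective on each non-boxed isotypic component, treating the multiplicity-one and multiplicity-two cases separately. For the multiplicity-one pieces $(V\otimes\Lambda^pV)_0$ and $\odot^2_0V$, the paper does exactly what you propose: write down an explicit test element and check its image is nonzero.

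The issue is your plan for the multiplicity-two isotypics $2V$, $2\Lambda^2V$, $2\Lambda^5V$. You write that the two source copies ``distribute (partially) over different degree pieces of $\End(S)_-$'' and that this will separate them. This is not so. Clifford multiplication shifts degree by $\pm1$, so $\Phi^p_q=0$ unless $q=p\pm1$; combined with the fact that $\End(S)_-$ only contains $q=1,2,5$, one sees that the two copies of $V$ (sitting in $p=0$ and $p=2$) both land under $\Phi_-$ in the \emph{same} piece $\Hom(\odot^2V,\Lambda^1V)$, and similarly $2\Lambda^2V$ goes only to $q=2$ and $2\Lambda^5V$ only to $q=5$. (Your ``$p$ and $p+2$'' pattern also breaks for $\Lambda^5V$: the two copies live in $p=4$ and $p=5$.) So there is no degree separation to exploit, and you would be left computing a $2\times2$ matrix of Schur scalars and checking its invertibility by hand --- doable, but not what you sketched.

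The paper avoids this computation entirely by a trick you have the ingredients for but did not assemble. For, say, the $2V$ case: the image $\Phi(2V)$ lies a~priori in $\Hom(\odot^2V,\Lambda^1V)\oplus\Hom(\odot^2V,\Lambda^3V)$, but decomposing the latter into $\fso(V)$-irreducibles one finds no copy of $V$ there; hence by Schur the \emph{full} $\Phi$ already maps $2V$ into $\Hom(\odot^2V,\Lambda^1V)$, so $\Phi|_{2V}=\Phi_-|_{2V}$, and the injectivity of $\Phi_-|_{2V}$ follows immediately from Lemma~\ref{le:injective}. The same argument disposes of $2\Lambda^2V$ and $2\Lambda^5V$. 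So the key extra input is a short table of the irreducible constituents of $\Hom(\odot^2V,\Lambda^qV)$, which shows that each of $V$, $\Lambda^2V$, $\Lambda^5V$ appears only in the one degree $q$ already visible to $\Phi_-$.
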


\begin{proof}
  It follows from the first formula in \eqref{eq:VonForms} that
  Clifford multiplication maps $V \otimes \Lambda^p V \to
  \Lambda^{p-1}V \oplus \Lambda^{p+1}V$.  From this fact and the very
  definition \eqref{eq:equivariant} of the map $\Phi$, it is clear
  that $\Phi(\beta_p)_q = 0$ unless $q = p \pm 1$.  Therefore
  equation~\eqref{eq:cc3} is equivalent to the following
  system of linear equations:
  \begin{equation}
    \label{eq:threeequations}
    \begin{split}
      \Phi(\beta)_1 = 0 \iff \Phi(\beta_0)_1 + \Phi(\beta_2)_1 &= 0\\
      \Phi(\beta)_2 = 0 \iff \Phi(\beta_1)_2 + \Phi(\beta_3)_2 &= 0\\
      \Phi(\beta)_5 = 0 \iff \Phi(\beta_4)_5 + \Phi(\beta_5)_5 &= 0~.
    \end{split}
  \end{equation}
  Note that each component $\beta_0,\ldots,\beta_5$ of $\beta$ appear
  in one and only one of the above three equations.
  Table~\ref{tab:thetacomps} lists the irreducible
  $\fso(V)$-components in $\Hom(\odot^2 V,\Lambda^q V)$ for
  $q=0,\dots,5$ which are isomorphic to one of the irreducible modules
  appearing in Table \ref{tab:betacomps}.  By $\fso(V)$-equivariance
  the image of $\Phi$ is isomorphic to the direct sum of the
  irreducible modules displayed in Table \ref{tab:betacomps} and is
  contained in the direct sum of the irreducible modules displayed in
  Table \ref{tab:thetacomps}.

  \begin{table}[h!]
    \centering
    \caption{Some irreducible components of $\Hom(\odot^2 V,\Lambda^q V)$ for
      $q=0,\dots,5$.}
    \label{tab:thetacomps}
    \begin{tabular}{>{$}c<{$}|>{$}l<{$}}
      \multicolumn{1}{c|}{q} & \multicolumn{1}{c}{$\Hom(\odot^2 V,\Lambda^q V)$}\\\hline
      0 & \Lambda^0 V\oplus \odot
          ^2_0 V\\
      1 & 2V \oplus (V \otimes
          \Lambda^2V)_0\\
      2 & 2\Lambda^2 V \oplus \odot^2_0 V \oplus (V \otimes
          \Lambda^3V)_0\\
      3 & 2\Lambda^3 V \oplus (V \otimes
          \Lambda^4V)_0\oplus (V \otimes
          \Lambda^2V)_0 \\
      4 & 2\Lambda^4 V\oplus (V \otimes
          \Lambda^5V)_0\oplus (V \otimes
          \Lambda^3V)_0\\
      5 & 2\Lambda^5 V \oplus (V \otimes
          \Lambda^5V)_0\oplus (V \otimes
          \Lambda^4V)_0
    \end{tabular}
  \end{table}

  Let $\Phi^p_q$ denote the component of $\Phi$ mapping
  $\Hom(V,\Lambda^p V)$ to $\Hom(\odot^2 V, \Lambda^q V)$.  We will
  have proved the proposition if we show that the undesirable
  irreducible components of $\Hom(V,\End(S))$ (those not boxed in
  Table~\ref{tab:betacomps}) are not in the kernel of $\Phi^\bullet_q$
  for $q=1,2,5$.  Since each $\Phi^p_q$ is $\fso(V)$-equivariant, it
  is enough to show that this is the case for each type of undesirable
  submodule.  We now go through each such submodule in turn.

  Let $2V$ be the isotypical component of $V$ inside
  $\Hom(V,\End(S))$; it is contained in
  $\Hom(V,\Lambda^0V)\oplus\Hom(V,\Lambda^2 V)$.  From
  Table~\ref{tab:thetacomps} and Lemma~\ref{le:injective}, $\Phi$ maps
  $2V$ injectively into $\Hom(\odot^2 V,\Lambda^1 V)$.  It
  follows that $\Phi|_{2V}=(\Phi^0_1+\Phi^2_1)|_{2V}:2V\to\Hom(\odot^2
  V,\Lambda^1 V)$ is injective, and thus the first equation in
  \eqref{eq:threeequations} is not satisfied by any nonzero
  $\beta=\beta_0+\beta_2\in 2V$ and the solution space of
  equation~\eqref{eq:cc3} does not contain any submodule  isomorphic to $V$.
	
  A similar argument shows that the isotypical components
  $2\Lambda^2 V$ and $2\Lambda^5V$ are mapped injectively to
  submodules of $\Hom(\odot^2 V,\Lambda^2 V)$ and $\Hom(\odot^2
  V,\Lambda^5 V)$, respectively, and hence the solution space of
  equation~\eqref{eq:cc3} does not contain any submodule isomorphic to
  $\Lambda^2 V$ or $\Lambda^5V$ either.

  The remaining submodules are isomorphic to
  $(V \otimes \Lambda^p V)_0$ for $p=1,2,3,4,5$ and are unique.  Since
  any such module is irreducible, the equivariant map $\Phi$ is either
  zero or an isomorphism when restricted to it.  Thus we find it
  easiest to pick a nonzero element and show that its image under the
  relevant component of $\Phi$ is nonzero.

  Let $\beta = \be^\flat_1 \otimes \be_2 + \be^\flat_2 \otimes
  \be_1$.  It belongs to the submodule of type $(V\otimes V)_0 \cong
  \odot^2_0 V$ and  a calculation in $\Cl(V)$ shows that
  \begin{equation*}
    \Phi(\beta)_2(\be_1,\be_1) = -2\be_1 \wedge \be_2 \neq 0~.
  \end{equation*}

  Let $\beta = \be^\flat_1 \otimes \be_2 \wedge \be_3 +
  \be^\flat_2 \otimes \be_1 \wedge \be_3$.  It belongs to the
  submodule of type $(V \otimes \Lambda^2V)_0$ and
  \begin{equation*}
    \Phi(\beta)_1(\be_1 + \be_2,\be_1+\be_2) = -4 \be_3 \neq 0~.
  \end{equation*}

  Let $\beta = \be^\flat_1\otimes \be_2 \wedge\be_3 \wedge \be_4 +
  \be^\flat_2\otimes \be_1 \wedge\be_3 \wedge \be_4$. It belongs to the
  submodule of type $(V \otimes \Lambda^3 V)_0$ and
  \begin{equation*}
    \Phi(\beta)_2(\be_1 + \be_2, \be_1+\be_2) = -4 \be_3 \wedge
    \be_4 \neq 0~.
  \end{equation*}

  Let $\beta = \be^\flat_1 \otimes \be_2 \wedge\be_3 \wedge \be_4
  \wedge \be_5 + \be^\flat_2 \otimes \be_1 \wedge\be_3 \wedge \be_4
  \wedge \be_5$. It belongs to the submodule of type $(V\otimes \Lambda^4
  V)_0$ and
  \begin{equation*}
    \Phi(\beta)_5(\be_1,\be_1) = -2\be_1 \wedge \be_2 \wedge \be_3
    \wedge \be_4 \wedge \be_5 \neq 0~.
  \end{equation*}

  Finally, let $\beta = \be^\flat_1 \otimes \be_2 \wedge\be_3 \wedge \be_4
  \wedge \be_5 \wedge \be_6 + \be^\flat_2 \otimes \be_1 \wedge\be_3 \wedge \be_4
  \wedge \be_5\wedge \be_6$. It belongs to the submodule of type $(V\otimes \Lambda^5
  V)_0$ and
  \begin{equation*}
    \Phi(\beta)_5(\be_1,\be_1) = -2\be_1 \wedge \be_2 \wedge \be_3
    \wedge \be_4 \wedge \be_5 \wedge \be_6 \neq 0~,
  \end{equation*}
  thought as an element of $\Lambda^6 V\simeq \Lambda^5 V$ in
  $\End(S)$. This concludes the proof of the proposition.
\end{proof}

Returning to the cocycle conditions \eqref{eq:cc1} and \eqref{eq:cc2},
we now observe that the first one simply defines $\gamma$ in terms of
$\beta$, which is then subject to the second condition.  Given the
general form of $\beta_v$ found in Proposition~\ref{prop:solutioncc3},
we solve the cocycle conditions in the following

\begin{proposition}
  \label{prop:beta}
  The general solution $(\beta,\gamma)$ of the cocycle conditions
  \eqref{eq:cc1} and \eqref{eq:cc2} is of the form
  $(\beta,\gamma)=(\beta^\varphi,\gamma^\varphi)$ for a unique
  $\varphi \in \Lambda^4V$ such that
  \begin{equation} 
    \label{eq:beta}
    \begin{split}
      \beta^\varphi_v(s)&= v \cdot \varphi\cdot s - 3\, \varphi \cdot v\cdot s~,\\
      [\gamma^\varphi(s,s),v]&=-2[s,\beta^\varphi_v(s)]
      \\&=-2[s,v\cdot\varphi\cdot s]+6[s,\varphi\cdot v \cdot s]~,
    \end{split}
  \end{equation}
  for all $v\in V$ and $s\in S$.
  In particular $H^{2,2}(\fm,\fg)\simeq \Lambda^4 V$ as an $\fso(V)$-module.
\end{proposition}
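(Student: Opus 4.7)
The plan is to build on Proposition~\ref{prop:solutioncc3}. That result parametrises the solutions of the first cocycle condition \eqref{eq:cc1}, which moreover determines $\gamma$ from $\beta$ uniquely: the formula $[\gamma(s,s), v] = -2[s,\beta_v(s)]$ identifies $\gamma(s,s)$ as an endomorphism of $V$, and skew-symmetry of this endomorphism (i.e.\ $\gamma(s,s)\in\fso(V)$) is equivalent to the symmetry condition \eqref{eq:cc3}. Hence the problem reduces to imposing the remaining cocycle condition $\gamma(s,s)\cdot s + \beta_{[s,s]}(s)=0$ on the explicit five-parameter family
\[
  \beta_v = \theta_0 v + v\cdot\theta_3 + \theta'_3\cdot v + v\cdot\theta_4 + \theta'_4\cdot v
\]
with $(\theta_0,\theta_3,\theta'_3,\theta_4,\theta'_4) \in \Lambda^0 V \oplus 2\Lambda^3 V \oplus 2\Lambda^4 V =: K$.

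My strategy is to exploit $\fso(V)$-equivariance and apply Schur's lemma. The assignment
\[
  \Psi \colon K \to \Hom(\odot^3 S, S),\qquad \beta\longmapsto\bigl(s\mapsto \gamma(s,s)\cdot s + \beta_{[s,s]}(s)\bigr),
\]
is $\fso(V)$-equivariant, so $\ker\Psi$ is a subrepresentation of $K$. To determine it, I would evaluate $\Psi$ on a convenient decomposable representative of each isotypic component of $K$, in the spirit of the computations in the proof of Proposition~\ref{prop:first-coc-equivalent}, reducing the resulting expressions to sums in $\End(S)\cong\bigoplus_{p=0}^5 \Lambda^p V$ via the Clifford relations, the Dirac current identity~\eqref{eq:DiracCurrent}, and the symplectic property~\eqref{eq:symplectic}. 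The anticipated outcome is that $\Psi$ is injective on the $\Lambda^0 V$ and $2\Lambda^3 V$ blocks, and has a one-dimensional kernel on the $2\Lambda^4 V$ block cut out by the antidiagonal relation $\theta'_4 = -3\theta_4$.

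The main obstacle is the explicit Clifford computation, especially pinning down the nontrivial ratio $-3$ between $\theta_4$ and $\theta'_4$: this requires a careful projection of both $\gamma(s,s)\cdot s$ and $\beta_{[s,s]}(s)$ onto the relevant isotypic component of $\Hom(\odot^3 S, S)$, mediated by a Fierz-type rearrangement of the trilinear spinor expressions involved. Once $\ker\Psi$ has been pinned down, setting $\varphi:=\theta_4$ recovers the stated formula for $\beta^\varphi$, and $\gamma^\varphi$ is then defined by \eqref{eq:cc1}. Finally, the identification $H^{2,2}(\fm,\fg)\simeq\Lambda^4 V$ as an $\fso(V)$-module follows at once from Lemma~\ref{lem:cocyclerep}, which grants a unique cocycle representative of the form $\beta+\gamma$ in every cohomology class, and those cocycles are then parametrised bijectively by $\varphi \in \Lambda^4 V$.
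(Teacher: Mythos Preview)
Your proposal is correct and follows essentially the same approach as the paper: both reduce the second cocycle condition to an $\fso(V)$-equivariant linear constraint on the parameter space $\Lambda^0 V \oplus 2\Lambda^3 V \oplus 2\Lambda^4 V$ and analyse it isotypic block by isotypic block via explicit Clifford computation (the paper even remarks that the Fierz-identity route you allude to is a viable alternative to its explicit matrix calculation). One simplification in the paper that you may find useful: since the condition is homogeneous in $s$, it suffices to test it on one representative from each of the two projectivised $\Spin(V)$-orbits on $S\setminus\{0\}$ (distinguished by whether the Dirac current is null or timelike), which considerably shortens the $2\Lambda^4 V$ calculation that pins down the ratio $-3$.
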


\begin{remark}
  Expanding the Clifford products, we may rewrite $\beta^\varphi_v$ as
  \begin{equation*}
    \beta^\varphi_v = -2 v \wedge \varphi - 4 \iota_v \varphi~,
  \end{equation*}
  which agrees with the zeroth order terms in the connection $D$ in
  equation \eqref{eq:D} for $\varphi = \tfrac1{24}F$.  The connection
  $D$ encodes the geometry of (supersymmetric) bosonic backgrounds of
  eleven-di\-men\-sional supergravity: not just does it define the notion
  of a Killing spinor, but its curvature encodes the bosonic field
  equations.  Indeed, as shown in \cite{GauPak}, the field equations
  are precisely the vanishing of the gamma-trace of the curvature of
  $D$.  Proposition~\ref{prop:beta} can be paraphrased as showing that
  we are able to reconstruct eleven-di\-men\-sional supergravity (at least
  at the level of the bosonic field equations) from the Spencer
  cohomology of the Poincaré superalgebra.
\end{remark}

\begin{proof}
Rewriting equation~\eqref{eq:cc2} as
\begin{equation*}
  [\gamma(s,s),s] + \beta_{[s,s]}(s) = 0~,
\end{equation*}
with $\gamma$ given in terms of $\beta$ by equation~\eqref{eq:cc1}, we
see that its solutions $\beta$ are the kernel of an
$\fso(V)$-equivariant linear map.  The kernel consists of submodules
and hence it is enough, given Proposition~\ref{prop:first-coc-equivalent}, to
study this equation separately for $\beta$ belonging to an isotypical
component of type $\Lambda^0 V$, $2\Lambda^3V$ and $2\Lambda^4V$,
respectively.

It is convenient in what follows to work in $\Cl(V)$.  This uses the
notation explained in Appendix~\ref{sec:clifford-conventions} and the
Einstein summation convention.

Let us define $\gamma(s,s)^\mu{}_\nu$ by
\begin{equation*}
  \gamma(s,s)(\be_\nu) = \gamma(s,s)^\mu{}_\nu \be_\mu~.
\end{equation*}
It follows from the first cocycle condition \eqref{eq:cc1} that
\begin{equation}
  \label{eq:gammabasis}
  \gamma(s,s)_{\mu\nu} = 2 \sbar \Gamma_\mu \beta_\nu s~,
\end{equation}
where have abbreviated $\beta_{\be_\nu}$ by $\beta_\nu$.  Using
equation~\eqref{eq:spingens}, the image of $\gamma(s,s)$ in $\Cl(V)$
is given by
\begin{equation*}
  \gamma(s,s) \mapsto -\tfrac12 (\sbar \Gamma_\mu \beta_\nu s)
  \Gamma^{\mu\nu}~,
\end{equation*}
and hence
\begin{equation*}
  [\gamma(s,s),s] = -\tfrac12 (\sbar \Gamma_\mu \beta_\nu s)
  \Gamma^{\mu\nu} s~.
\end{equation*}
The second term of the second cocycle condition \eqref{eq:cc2} is
given by
\begin{equation*}
  \beta_{[s,s]} s = - (\sbar \Gamma^\mu s) \beta_\mu s~,
\end{equation*}
so that the second cocycle condition becomes
\begin{equation}
  \label{eq:cc5}
  \tfrac12 (\sbar \Gamma_\mu \beta_\nu s) \Gamma^{\mu\nu} s +
  (\sbar \Gamma^\mu s) \beta_\mu s = 0~.
\end{equation}
It is enough to consider three different cases for $\beta$.

Let $\beta$ be of type $\Lambda^0 V$, so that
$\beta_\mu = \theta_0 \Gamma_\mu$ for some
$\theta_0 \in \Lambda^0 V$.  Then equation~\eqref{eq:cc5} becomes
\begin{equation}
  \label{eq:cc5b0}
  \tfrac12 \theta_0 (\sbar \Gamma^{\mu\nu} s) \Gamma_{\mu\nu} s = 
  -\theta_0 (\sbar \Gamma^\mu s) \Gamma_\mu s~.
\end{equation}
From the first of equations~\eqref{eq:trilinears},
equation~\eqref{eq:cc5b0} becomes
\begin{equation*}
  \theta_0 (\sbar \Gamma^\mu s) \Gamma_\mu s = 0~. 
\end{equation*}
Taking the symplectic inner product with $s$ we find
\begin{equation*}
  \theta_0 (\sbar \Gamma^\mu s)(\sbar \Gamma_\mu s) = \theta_0
  \eta([s,s],[s,s])=0
\end{equation*}
for all $s\in S$, where $[s,s]$ is the Dirac current of $s$. On the
other hand there always exists an $s$ for which $\eta([s,s],[s,s]) >
0$ and hence the only way \eqref{eq:cc5b0} is satisfied for all $s$
is if $\theta_0 = 0$.

The next two cases, $\beta$ in $2\Lambda^3V$ and $2\Lambda^4 V$,
are computationally more involved.  It pays to exploit the
equivariance under $\fso(V)$, which implies first of all that the
solution space is an $\fso(V)$-module.  We also notice that both
$\Lambda^3 V$ and $\Lambda^4 V$ are real irreducible representations
of $\fso(V)$ which remain irreducible upon complexification.  This
means that the only $\fso(V)$-equivariant endomorphisms of $\Lambda^3
V$ and $\Lambda^4 V$ are real multiples of the identity.

Now suppose that $\beta$ is in $2\Lambda^3 V$.  The solution
space to equation~\eqref{eq:cc5} is an $\fso(V)$-submodule of
$\Lambda^3 V \oplus \Lambda^3 V$, hence it is either all of
$2\Lambda^3 V$ (which happens if the equations are identically zero),
or a copy of $\Lambda^3 V$ given by the image of
\begin{equation*}
  \Lambda^3 V \ni \psi \mapsto (t_1 \psi, t_2 \psi) \in \Lambda^3 V
  \oplus \Lambda^3 V~,
\end{equation*}
for some $t_1,t_2 \in \RR$.  We also allow for the case of a
zero-di\-men\-sional solution space when $t_1=t_2=0$.  We put
$\beta_\mu = t_1 \Gamma_\mu \psi + t_2 \psi \Gamma_\mu$, for
$\psi \in \Lambda^3 V$, into equation~\eqref{eq:cc5} to arrive at the
following set of equations for $t_1,t_2\in\RR$:
\begin{equation}
  \label{eq:cc5b3}
  0 = t_1 \left(\tfrac12 (\sbar \Gamma_{\mu\nu} \psi s) \Gamma^{\mu\nu}
    s + (\sbar \Gamma^\mu s) \Gamma_\mu \psi s \right) + t_2
  \left(\tfrac12 (\sbar \Gamma_\mu \psi \Gamma_\nu s) \Gamma^{\mu\nu} s
    + (\sbar \Gamma^\mu s) \psi \Gamma_\mu s\right)
\end{equation}
parametrised by all $s \in S$ and all $\psi \in \Lambda^3 V$.  It is
simply a matter of choosing $s$ and $\psi$ and calculating the resulting
expression using our favourite explicit realisations of the Clifford
algebra to obtain equations for $t_1$ and $t_2$.  We omit the details,
but simply record that the only solution is $t_1=t_2=0$, so that there
is no component of the solution space of equation~\eqref{eq:cc5} of
type $\Lambda^3 V$.

Finally, let $\beta$ be in $2\Lambda^4V$.  As before, the solution
space is an $\fso(V)$-submodule of $\Lambda^4 V \oplus \Lambda^4 V$,
whence it is either all of $2 \Lambda^4 V$, or else given by the image
of
\begin{equation*}
  \Lambda^4 V \ni \phi \mapsto (t_1 \phi, t_2 \phi) \in \Lambda^4 V
  \oplus \Lambda^4 V~,
\end{equation*}
for some $t_1,t_2 \in \RR$ and again we allow for the case of a
zero-di\-men\-sional solution space when $t_1=t_2=0$.  Putting
$\beta_\mu = t_1 \Gamma_\mu \phi + t_2 \phi \Gamma_\mu$, for
$\phi \in \Lambda^4 V$, into equation~\eqref{eq:cc5} we arrive at
the following set of equations for $t_1,t_2\in\RR$:
\begin{equation}
  \label{eq:cc5b4}
  0 = t_1 \left(\tfrac12 (\sbar \Gamma_{\mu\nu} \phi s) \Gamma^{\mu\nu}
    s + (\sbar \Gamma^\mu s) \Gamma_\mu \phi s \right) + t_2
  \left(\tfrac12 (\sbar \Gamma_\mu \phi \Gamma_\nu s) \Gamma^{\mu\nu} s
    + (\sbar \Gamma^\mu s) \phi \Gamma_\mu s\right)
\end{equation}
parametrised by all $s \in S$ and all $\phi \in \Lambda^4 V$.  A
further simplification due to $\fso(V)$-equivariance is the
following.  Since the equations are homogeneous in $s$, we need only
take $s$ from a set consisting of a representative of each
projectivised orbit of $\Spin(V)$ on $S \setminus \left\{0\right\}$.
As shown, for example, in \cite{Bryant-ricciflat}, there are two such
projectivised orbits, distinguished by the causal character of the
associated Dirac current: either null or timelike.  Therefore we need
only consider two such $s$: one in each type of orbit.  Again we omit
the actual details of this calculation and simply record the results:
taking the null orbit, we find that the only equation is $t_2 = -3
t_1$, and imposing this, the equation from the timelike orbit is
automatically satisfied.

In summary, the solution of equation~\eqref{eq:cc5} is
\begin{equation*}
  \beta_\mu = \Gamma_\mu \varphi - 3 \varphi \Gamma_\mu~,
\end{equation*}
for some $\varphi \in \Lambda^4 V$, with the expression for $\gamma$
then following from equation~\eqref{eq:cc1}.
\end{proof}

It should be remarked that equations~\eqref{eq:cc5b3} and
\eqref{eq:cc5b4} can also be solved without recourse to an explicit
matrix realisation of the Clifford algebra by repeated use of the
Fierz identity \eqref{eq:Fierz}.

We have thus computed the cohomology groups $H^{d,2}(\fm,\fg)$ for all
$d\geq 2$ even.  If $d\geq 4$ they are all trivial by
Lemma~\ref{lem:p4} whereas $H^{2,2}(\fm,\fg)\cong \Lambda^4 V$ by
Proposition~\ref{prop:beta}. To prove our first main result on
infinitesimal deformations of $\fg$ and its $\mathbb Z$-graded
subalgebras $\fh$ we also need to determine the analogous groups
$H^{d,2}(\fm,\fh)$ for $\fh$, for all $d\geq 2$ even.

\subsection{Infinitesimal deformations of subalgebras \texorpdfstring{$\fh\subset\fg$}{h < g}}
\label{sec:infin-deform-h}

Let $\fh=\fh_{-2}\oplus\fh_{-1}\oplus\fh_0$ be a $\mathbb Z$-graded
subalgebra of the Poincaré superalgebra $\fg$ which differs only in
zero degree, that is $\fh_{0}\subset\fg_0$ and $\fh_{j}=\fg_{j}$ for
all $j<0$. In this section we first calculate the cohomology
\begin{equation*}
  H^{d,2}(\fm,\fh) = \frac{\ker\partial: C^{d,2}(\fm,\fh) \to
    C^{d,3}(\fm,\fh)}{\partial C^{d,1}(\fm,\fh)}
\end{equation*}
 for all even $d>0$ and then prove Theorem~\ref{thm:first} on the
 filtered deformations of $\fh$. The even $p$-cochains of small degree
 associated to $\fh$  are displayed in
 Table~\ref{tab:even-cochains-small-h} below.

\begin{table}[h!]
  \centering
  \caption{Even $p$-cochains of small degree of $\fh\subset\fg$}
  \label{tab:even-cochains-small-h}
  \begin{tabular}{c*{5}{|>{$}c<{$}}}
    \multicolumn{1}{c|}{} & \multicolumn{5}{c}{$p$} \\\hline
    deg & 0 & 1 & 2 & 3 & 4 \\\hline
    0 & \fh_0 & \begin{tabular}{@{}>{$}c<{$}@{}} S \to S\\ V \to
                    V\end{tabular} & \odot^2 S \to V & & \\\hline
    2 & & V \to \fh_0 & \begin{tabular}{@{}>{$}c<{$}@{}}
                            \Lambda^2 V \to V\\ V \otimes S \to S \\
                            \odot^2 S \to \fh_0 \end{tabular}
         & \begin{tabular}{@{}>{$}c<{$}@{}} \odot^3 S \to S\\
             \odot^2 S \otimes V \to V\end{tabular} & \odot^4 S \to V
        \\\hline
    4 & & & \Lambda^2 V \to \fh_0 & \begin{tabular}{@{}>{$}c<{$}@{}}
           \odot^2 S \otimes V \to \fh_0 \\
           \Lambda^2 V \otimes S \to S \\
           \Lambda^3 V \to V \end{tabular}
             & \begin{tabular}{@{}>{$}c<{$}@{}}
             \odot^4 S \to \fh_0 \\
            \odot^3 S \otimes V \to S \\
            \odot^2 S\otimes \Lambda^2 V\to V \end{tabular} \\\hline
  \end{tabular}
\end{table}

\begin{proposition}
  \label{prop:cohgroups}
  The group $H^{d,2}(\fm,\fh)=0$ for all even $d\geq 4$, whereas
  \begin{equation*}
    H^{2,2}(\fm,\fh) = \frac{\left\{\beta^\varphi + \gamma^\varphi
        + \partial\tilde\alpha\,|\,\varphi\in\Lambda^4
        V,\,\tilde\alpha: V \to \fso(V) \text{~with~}
        \gamma^\varphi(s,s)-\tilde\alpha([s,s])\in
        \fh_0\right\}}{\left\{\partial\tilde\alpha \middle |
        \tilde\alpha: V \to \fh_0\right\}},
  \end{equation*}
  where $(\beta^\varphi,\gamma^\varphi)$ are as in
  Proposition~\ref{prop:beta}.  In particular any cohomology class
  $[\beta^\varphi + \gamma^\varphi + \partial\tilde\alpha] \in
  H^{2,2}(\fm,\fh)$ with $\varphi=0$ is the trivial cohomology class.
\end{proposition}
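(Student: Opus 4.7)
The plan is to work inside the ambient $\fg$-cochain complex: since $\fh\subset\fg$ is a subalgebra and $\fh_0$ preserves $\fm$ under the adjoint action, the inclusions $C^{d,p}(\fm,\fh)\subset C^{d,p}(\fm,\fg)$ are compatible with the Spencer differential, so every $\fh$-cocycle is in particular a $\fg$-cocycle.  This handles the vanishing for $d\geq 4$ immediately: for $d>4$ the cochain space $C^{d,2}(\fm,\fh)$ already vanishes (it is contained in $C^{d,2}(\fm,\fg)=0$), and for $d=4$ the two components of $\partial\colon C^{4,2}(\fm,\fg)\to C^{4,3}(\fm,\fg)$ shown to be injective in Section~\ref{sec:spencer-complex} restrict to injective maps on the subspace $\Hom(\Lambda^2 V,\fh_0)\subset\Hom(\Lambda^2 V,\fso(V))$, so their kernel is still trivial.

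For the description of $H^{2,2}(\fm,\fh)$, I would take an arbitrary cocycle $\omega\in C^{2,2}(\fm,\fh)$, view it as a cocycle in $C^{2,2}(\fm,\fg)$, and apply Lemma~\ref{lem:cocyclerep} together with Proposition~\ref{prop:beta} to write
\begin{equation*}
  \omega=\beta^\varphi+\gamma^\varphi+\partial\tilde\alpha
\end{equation*}
for a unique $\varphi\in\Lambda^4 V$ and a uniquely determined $\tilde\alpha\in\Hom(V,\fso(V))$; the uniqueness of $\tilde\alpha$ follows from Lemma~\ref{lem:iso} once the $\Hom(\Lambda^2 V,V)$-component of the above identity is isolated.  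The $V$- and $S$-valued components of $\omega$ automatically take values in $\fh$, so the only obstruction to $\omega$ lying in $C^{2,2}(\fm,\fh)$ is the $\Hom(\odot^2 S,\fso(V))$-component, which equals $\gamma^\varphi(s,s)-\tilde\alpha([s,s])$ and must therefore belong to $\fh_0$.  Quotienting by the space of $\fh$-coboundaries $\partial\bigl(\Hom(V,\fh_0)\bigr)$ then yields exactly the displayed formula.

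For the last assertion, suppose $\varphi=0$ in a cohomology class.  Then the cocycle reduces to $\partial\tilde\alpha$ for some $\tilde\alpha\colon V\to\fso(V)$ satisfying $\tilde\alpha([s,s])\in\fh_0$ for all $s\in S$.  Since $\fm$ is fundamental, the Dirac currents $[s,s]$ linearly span $\fm_{-2}=V$ (by polarisation of the symmetric bracket), so $\tilde\alpha$ in fact takes values entirely in $\fh_0$; that is, $\tilde\alpha\in C^{2,1}(\fm,\fh)$ and $\partial\tilde\alpha$ is an $\fh$-coboundary.  The only subtlety throughout the argument is the bookkeeping of which components automatically land in $\fh$ and which impose genuine constraints; once the $\fg$-cohomology is in hand from the preceding subsection, no new hard calculation is required, and the \emph{potential} obstacle --- that the coboundary correction $\partial\tilde\alpha$ needed to kill the $\alpha$-component might fail to be $\fh$-valued --- is precisely what the explicit quotient in the statement is designed to record.
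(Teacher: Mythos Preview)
Your proposal is correct and follows essentially the same route as the paper: both reduce the $d\geq 4$ case to the injectivity already established for $\fg$, both write an arbitrary $\fh$-cocycle uniquely as $\beta^\varphi+\gamma^\varphi+\partial\tilde\alpha$ via Lemma~\ref{lem:iso} and Proposition~\ref{prop:beta} and then read off the $\fh_0$-valuedness constraint on the $\odot^2S$-component, and both deduce the final assertion from the fact that the Dirac currents span $V$ (which you make explicit via $\fm$ being fundamental, while the paper leaves this implicit).
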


\begin{proof}
  The proof of the first claim is as in Lemma~\ref{lem:p4} and therefore we omit it.

  It follows from Lemma~\ref{lem:iso} that given any $\alpha \in
  \Hom(\Lambda^2V,V)$, there is a unique $\tilde\alpha \in
  C^{2,1}(\fm,\fg)= \Hom(V,\fso(V))$ such that $\partial \tilde\alpha
  = \alpha + \tilde\beta + \tilde\gamma$, for some $\tilde\beta \in
  \Hom(V \otimes S, S)$ and $\tilde\gamma \in
  \Hom(\odot^2S,\fso(V))$.  Therefore any cochain $\alpha + \beta +
  \gamma\in C^{2,2}(\fm,\fh)$ may be uniquely written as
  \begin{equation*}
    \alpha + \beta + \gamma = (\alpha + \beta + \gamma
    - \partial\tilde\alpha) + \partial\tilde\alpha = (\beta
    -\tilde\beta) +  (\gamma - \tilde\gamma) + \partial \tilde\alpha~,
  \end{equation*}
  where $\beta - \tilde\beta \in \Hom(V\otimes S, S)$ and
  $\gamma-\tilde\gamma \in \Hom(\odot^2S, \fso(V))$.  If $\alpha +
  \beta + \gamma$ is a cocycle, then so is $(\beta - \tilde \beta) +
  (\gamma - \tilde\gamma)$, so that by Proposition~\ref{prop:beta},
  $\beta - \tilde \beta = \beta^\varphi$ and $\gamma - \tilde \gamma =
  \gamma^\varphi$, for some $\varphi \in \Lambda^4 V$ and where
  $\beta^\varphi$ and $\gamma^\varphi$ are given by the expressions in
  Proposition~\ref{prop:beta}.  In other words,
  \begin{equation}
    \label{eq:firstinclusion}
    \ker\partial\bigr|_{C^{2,2}(\fm,\fh)} \subset \Lambda^4
    V\oplus \partial(\fso(V)\otimes V^*)~,
  \end{equation}
  where we identified any $\varphi\in\Lambda^4 V$ with the corresponding
  cocycle $\beta^\varphi+\gamma^\varphi$. 
	Now equation \eqref{eq:Spencer1} tells
  us that
  \begin{equation*}
    \partial\tilde\alpha(s,s)=-\tilde\alpha([s,s])  
  \end{equation*}
  for all $s\in S$ so that an element $\beta^\varphi + \gamma^\varphi
  + \partial\tilde\alpha$ is in $C^{2,2}(\fm,\fh)$ if and only if
  \begin{equation*}
    \gamma^\varphi(s,s) - \tilde\alpha([s,s])\in \fh_0
  \end{equation*}
  for all $s\in S$.  This fact, together with
  \eqref{eq:firstinclusion}, shows that the kernel of $\partial$
  restricted to $C^{2,2}(\fm,\fh)$ is given by
  \begin{equation*}
   \left\{\beta^\varphi+\gamma^\varphi+\partial\tilde\alpha\middle |
     \varphi\in\Lambda^4 V,\,\tilde\alpha: V \to \fso(V) \text{~with~}
     \gamma^\varphi(s,s)-\tilde\alpha([s,s])\in
     \fh_0^{\phantom{C}\!}\right\}~, 
  \end{equation*}
  from which the claim on $H^{2,2}(\fm,\fh)$ follows directly.

  The last claim follows from the fact that, if $\varphi=0$, then
  $\partial\tilde\alpha$ satisfies $\tilde\alpha([s,s])\in\fh_0$ for
  all $s\in S$, so that it is in the image of
  $C^{2,1}(\fm,\fh)=\Hom(V,\fh_0)$.
\end{proof}
	
To state our first main result on filtered deformations $F$ of $\fh$
we recall that the Lie brackets of $F$ have components of nonzero
filtration degree: the component $\mu$ of degree $2$ (see equation
\eqref{eq:filteredef}) and the component $\delta:\Lambda^2 V\to \fh_0$
of degree $4$.

\begin{theorem}
  \label{thm:first}
  Let $\fh=\fh_{-2}\oplus\fh_{-1}\oplus\fh_0$ be a $\mathbb Z$-graded
  subalgebra of the Poincaré superalgebra $\fg=V\oplus S\oplus
  \fso(V)$ which differs only in zero degree, i.e.
  $\fh_{0}\subset\fg_0$ and $\fm=\fh_{-2}\oplus\fh_{-1}=V\oplus S$. If
  $F$ is a filtered deformations of $\fh$ then:
  \begin{enumerate}
  \item $\mu|_{\fm\otimes\fm}$ is a cocycle in $C^{2,2}(\fm,\fh)$ and
    its cohomology class $[\mu|_{\fm\otimes\fm}]\in H^{2,2}(\fm,\fh)$
    is $\fh_0$-invariant (that is $\mu|_{\fm\otimes\fm}$ is
    $\fh_0$-invariant up to exact terms); and
  \item if $F'$ is another filtered deformation of $\fh$ such that
    $[\mu'|_{\fm\otimes\fm}]= [\mu|_{\fm\otimes\fm}]$ then $F'$ is
    isomorphic to $F$ as a filtered Lie superalgebra.
  \end{enumerate}
\end{theorem}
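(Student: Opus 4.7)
The plan is to unpack the Jacobi identity of $F$ degree by degree in filtration and read off each assertion of the theorem at the appropriate order.

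For part (1), apply the Jacobi identity in $F$ to a triple $(X, Y, Z)\in\fm\times\fm\times\fm$. Writing $[-,-]_F = [-,-]_\fh + \mu + \delta$ and sorting by the number of $\mu$- or $\delta$-insertions, the no-insertion terms vanish by Jacobi in $\fh$ while the single-$\mu$-insertion terms reproduce the Spencer differential formula \eqref{eq:Spencer2} applied to $\mu|_{\fm\otimes\fm}$; this is precisely $\partial(\mu|_{\fm\otimes\fm})(X,Y,Z)=0$. The same bookkeeping applied to a triple $(x, X, Y)$ with $x\in\fh_0$, $X,Y\in\fm$ yields at degree $2$ the identity
\begin{equation*}
  x\cdot(\mu|_{\fm\otimes\fm}) + \partial\tilde\alpha_x = 0\,,
\end{equation*}
where $\tilde\alpha_x\in C^{2,1}(\fm,\fh) = \Hom(V,\fh_0)$ is defined by $\tilde\alpha_x(v) := -\rho(x,v)$ and captures the contribution coming from the deformed bracket $[\fh_0,V]_F$. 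Hence $[\mu|_{\fm\otimes\fm}]$ is $\fh_0$-invariant in $H^{2,2}(\fm,\fh)$.

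For part (2), the coboundary relation $\mu|_{\fm\otimes\fm}-\mu'|_{\fm\otimes\fm}=\partial\xi$ with $\xi\in C^{2,1}(\fm,\fh)=\Hom(V,\fh_0)$ permits an explicit filtered isomorphism: extend $\xi$ by zero on $\fh_0\oplus S$ to an endomorphism $\Xi:\fh\to\fh$ with $\Xi^2=0$ and set $\phi := 1+\Xi$. Transporting the bracket of $F'$ by $\phi$ yields a filtered Lie superalgebra $F''$ isomorphic to $F'$ whose degree-$2$ cochain on $\fm\otimes\fm$ coincides, by the very coboundary identity, with that of $F$. Replacing $F'$ by $F''$, we may assume $\mu|_{\fm\otimes\fm}=\mu'|_{\fm\otimes\fm}$.

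The remaining data $\rho,\delta$ are then uniquely determined. From part (1), the identity $\rho(x,[s_1,s_2]_\fm)=-(x\cdot\gamma)(s_1,s_2)$ together with the surjectivity of the Dirac bracket $\odot^2 S\to V$ forces $\rho=\rho'$. The Jacobi identity for $(v_1,v_2,s)$, at degree $4$, reads
\begin{equation*}
  \delta(v_1,v_2)\cdot s = \beta(v_1,\beta(v_2,s))-\beta(v_2,\beta(v_1,s))-\beta(\alpha(v_1,v_2),s)\,,
\end{equation*}
expressing $\delta\cdot s$ entirely in terms of $\alpha,\beta$; combined with the faithfulness of the spinor representation of $\fh_0\subset\fso(V)$ (equivalently, with $H^{4,2}(\fm,\fh)=0$ from Proposition~\ref{prop:cohgroups}) this forces $\delta=\delta'$. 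Hence $F=F''$ and $F\cong F'$ via $\phi$. The main technical hurdle is the clean separation of the degree-$2$ and degree-$4$ contributions to the Jacobi identities, most delicately for mixed brackets like $[V,V]_F=\alpha+\delta$ where both degrees coexist; once this is properly organised, each step of the argument reduces to elementary representation-theoretic facts about $\fh_0$ and $S$ that are packaged by the Spencer complex.
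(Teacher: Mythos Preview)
Your argument is correct and takes a genuinely different route from the paper's own proof. The paper does not unpack the Jacobi identity directly; instead it invokes the general machinery of Cheng--Kac \cite{MR1688484}. The key structural input there is that $\fh$ is a \emph{full prolongation of degree $k=1$}, meaning $H^{d,1}(\fm,\fh)=0$ for all $d\geq 1$, which follows from the identification of the maximal transitive prolongation $\fg^\infty$ of $\fm$. Part~(1) is then read off from Proposition~2.2 of \cite{MR1688484}; for part~(2), Propositions~2.3 and~2.6 of \cite{MR1688484} allow one to assume successively $\mu|_{\fm\otimes\fm}=\mu'|_{\fm\otimes\fm}$ and then $\mu=\mu'$, after which $\delta-\delta'$ is observed to be a Spencer cocycle in $C^{4,2}(\fm,\fh)$, hence zero because $\ker\partial|_{C^{4,2}}=0$ (Proposition~\ref{prop:cohgroups}).

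Your approach is an explicit, self-contained unpacking of what those Cheng--Kac propositions amount to in this particular depth-$2$ situation: the construction $\phi=1+\Xi$ is precisely the content of their Proposition~2.3, the surjectivity of the Dirac bracket $\odot^2S\to V$ (equivalently, that $\fm$ is fundamental) is what replaces their Proposition~2.6 for pinning down $\rho$, and the faithfulness of the spinor representation on $\fh_0$ is one component of the injectivity of $\partial|_{C^{4,2}}$ that the paper uses to pin down $\delta$. The paper's route is more conceptual and would generalise cleanly to higher depth or other gradings; yours is more elementary and avoids the black-box citations, at the cost of being tailored to this specific $\fh$.
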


\begin{proof}
  By the results of \cite{MR3255456}, the maximal transitive
  prolongation of the supertranslation algebra $\fm$ is the
  $\ZZ$-graded Lie superalgebra
  $\fg^{\infty}=\fg^{\infty}_{-2}\oplus\fg^{\infty}_{-1}\oplus\fg^{\infty}_{0}$
  where $\fg^{\infty}_{-2}\oplus\fg^{\infty}_{-1}=\fm$ and
  $\fg_0^{\infty}=\fso(V)\oplus \mathbb R E$, with $E$ the so-called
  grading element satisfying
  $\ad(E)|_{\fg^{\infty}_{j}}=j\operatorname{Id}$. It is well-known
  that maximality is equivalent to the fact that the Spencer
  cohomology group $H^{d,1}(\fm,\fg^{\infty})=0$ for all $d\geq 0$
  (see e.g. \cite{MR1688484}).
  
  Since $\fm=\fh_{-2}\oplus\fh_{-1}$ but
  $\fh_0\subset\fso(V)\subset\fg^{\infty}_0$ this also implies that
  $H^{d,1}(\fm,\fh)=0$ for all $d\geq 1$, that is $\fh$ is a
  \emph{full prolongation of degree k=1} in the terminology of
  \cite{MR1688484}.
  
  The first claim follows directly from Proposition~2.2 of
  \cite{MR1688484}. Let now $F$ and $F'$ be two filtered deformations
  of $\fh$ such that
  $[\mu|_{\fm\otimes\fm}]=[\mu'|_{\fm\otimes\fm}]$. Then
  $(\mu-\mu')|_{\fm\otimes\fm}$ is a Spencer coboundary and we may
  first assume without any loss of generality that
  $\mu|_{\fm\otimes\fm}=\mu'|_{\fm\otimes\fm}$ by Proposition~2.3 of
  \cite{MR1688484}. Moreover, since $\fh$ is a full prolongation of
  degree $k=1$ (and hence, in particular, of degree $k=2$),
  Proposition~2.6 of \cite{MR1688484} applies and we may also assume
  $\mu=\mu'$ without any loss of generality. In other words we just
  showed that $F'$ is isomorphic as a filtered Lie superalgebra to
  another filtered Lie superalgebra $F''$ which satisfies $\mu''=\mu$.
  
  Now, given any two filtered deformations $F$ and $F'$ of $\fh$ with
  $\mu=\mu'$ it is not difficult to see that
  $\delta-\delta'=(\delta-\delta')|_{\fm\otimes\fm}$ is a Spencer
  cocycle (use e.g. \cite[equation 2.6]{MR1688484}). However
  $H^{4,2}(\fm,\fh)=\operatorname{Ker}\partial|_{C^{4,2}(\fm,\fh)}=0$
  by Proposition~\ref{prop:cohgroups} and hence $\delta=\delta'$. This 
  proves that any two filtered deformations $F$ and $F'$ of $\fh$ with
  $[\mu'|_{\fm\otimes\fm}]= [\mu|_{\fm\otimes\fm}]$ are isomorphic.
\end{proof}

In other words, filtered deformations of $\fh$ are completely
determined by the $\fh_0$-invariant elements in $H^{2,2}(\fm,\fh)$, a
group which we already calculated in Proposition~\ref{prop:cohgroups}.
We emphasise that this result in particular says that the components
$\rho=\mu|_{\fh_0\otimes V}:\fh_0\otimes V\to \fh_0$ and
$\delta:\Lambda^2V\to \fh_0$ of non-zero filtration degree are
completely determined by the class $[\mu|_{\fm\otimes\fm}]\in
H^{2,2}(\fm,\fh)$ (hence by the components $\alpha$, $\beta$ and
$\gamma$), up to isomorphisms of filtered Lie superalgebras.

\section{Integrating the deformations}
\label{sec:integ-deform}

In this section, we will determine the $\fh_0$-invariant elements in
$H^{2,2}(\fm,\fh)$ and, for each of them, construct a filtered
deformation.  Let us remark that we do not have at our disposal a
bracket à la Nijenhuis--Richardson on $H^{\bullet,\bullet}(\fm,\fh)$
that allows one to write down the obstructions to integrating an
infinitesimal deformation in terms of classes in $H^{\bullet,3}(\fm,\fh)$.
Therefore our description of filtered Lie superalgebras will be very
explicit and rely on a direct check of the Jacobi identities.
	
\subsection{The non-trivial deformations}
\label{sec:defnontrivial}

By the results of Section \ref{sec:infin-deform}, we need only
consider deformations corresponding to $\fh_0$-invariant cohomology
classes in $H^{2,2}(\fm,\fh)$ with $\varphi\neq 0$. Indeed if
$\varphi=0$ then $[\mu|_{\fm\otimes\fm}]=0$ by
Proposition~\ref{prop:cohgroups} and Theorem~\ref{thm:first}, and the
associated Lie superalgebras are nothing but the $\mathbb Z$-graded
subalgebras of the Poincaré superalgebra.

In determining the $\fh_0$-invariant classes in $H^{2,2}(\fm,\fh)$, we
will also determine the Lie subalgebras $\fh_0 \subset \fso(V)$ for
which $H^{2,2}(\fm,\fh)^{\fh_0} \neq 0$ and hence the graded Lie
subalgebras $\fh$ of the Poincaré superalgebra admitting nontrivial
filtered deformations.  We will show that the condition
$H^{2,2}(\fm,\fh)^{\fh_0} \neq 0$ turns into a system of quadratic
equations for $\varphi$ and $\tilde\alpha$ which we will be able to
solve.  In addition we will find that $\fh_0 = \fh_\varphi$, the
Lie algebra of the stabiliser in $\SO(V)$ of $\varphi$; that is, 
$\fh_\varphi = \fso(V) \cap \stab(\varphi)$, with $\stab(\varphi)$ the
Lie algebra of the stabiliser of $\varphi$ in $\GL(V)$.  We start with
a lemma.

\begin{lemma}
  \label{lem:stab}
  Let $\beta^\varphi+\gamma^\varphi+\partial\tilde\alpha$ be a cocycle
  in $C^{2,2}(\fm,\fh)$ defining a nontrivial, $\fh_0$-invariant
  cohomology class in $H^{2,2}(\fm,\fh)$.  Then $\fh_0$ leaves
  $\varphi$ invariant.  In other words, $\fh_0 \subset \fh_\varphi$.
\end{lemma}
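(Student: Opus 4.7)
The plan is to exploit the $\fso(V)$-equivariance of the Spencer complex together with the explicit identification $H^{2,2}(\fm,\fg)\cong\Lambda^4 V$ furnished by Proposition~\ref{prop:beta}. By definition, $\fh_0$-invariance of the class of $c=\beta^\varphi+\gamma^\varphi+\partial\tilde\alpha$ in $H^{2,2}(\fm,\fh)$ means that for every $A\in\fh_0$ there exists $\xi_A\in C^{2,1}(\fm,\fh)=\Hom(V,\fh_0)$ with $A\cdot c=\partial\xi_A$, where the dot denotes the natural action on cochains induced by the adjoint representation.

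Next, I would use the $\fso(V)$-equivariance of the Spencer differential and of the assignments $\varphi\mapsto\beta^\varphi$, $\varphi\mapsto\gamma^\varphi$ of Proposition~\ref{prop:beta} — which is immediate, as their defining formulas \eqref{eq:beta} are built entirely out of $\fso(V)$-equivariant operations (Clifford multiplication and Dirac current) — to rewrite
\begin{equation*}
A\cdot c=\beta^{A\cdot\varphi}+\gamma^{A\cdot\varphi}+\partial(A\cdot\tilde\alpha),
\end{equation*}
where $A\cdot\varphi$ denotes the induced action of $A$ on $\Lambda^4 V$. Combining the two displays, I get
\begin{equation*}
\beta^{A\cdot\varphi}+\gamma^{A\cdot\varphi}=\partial\bigl(\xi_A-A\cdot\tilde\alpha\bigr),
\end{equation*}
in which the right-hand side is a Spencer coboundary in the \emph{larger} complex $C^{\bullet,\bullet}(\fm,\fg)$, since $\xi_A-A\cdot\tilde\alpha$ takes values in $\fso(V)$.

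To conclude, I invoke Proposition~\ref{prop:beta} once more: the assignment $\psi\mapsto[\beta^\psi+\gamma^\psi]$ defines an $\fso(V)$-equivariant isomorphism $\Lambda^4 V\xrightarrow{\sim} H^{2,2}(\fm,\fg)$, hence in particular is injective on cocycles modulo coboundaries in $C^{\bullet,\bullet}(\fm,\fg)$. The previous display therefore forces $A\cdot\varphi=0$, and since $A\in\fh_0$ was arbitrary this gives $\fh_0\subset\fh_\varphi=\fso(V)\cap\stab(\varphi)$. There is no real obstacle here: the argument is a formal consequence of equivariance once Proposition~\ref{prop:beta} is in hand, and the nontriviality hypothesis in the statement merely serves to ensure that $\varphi\neq 0$ is a well-defined invariant of the cohomology class (by the last assertion of Proposition~\ref{prop:cohgroups}).
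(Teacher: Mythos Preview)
Your proof is correct and follows essentially the same approach as the paper. The paper unwinds the equation $x\cdot c=\partial\psi$ into its three components via the projections $\pi^\alpha,\pi^\beta,\pi^\gamma$ and uses Lemma~\ref{lem:iso} to first identify $\psi=x\cdot\tilde\alpha$, then read off $x\cdot\beta^\varphi=0$; you instead package the same steps by invoking directly the equivariance of $\varphi\mapsto\beta^\varphi+\gamma^\varphi$ and the injectivity of $\Lambda^4V\to H^{2,2}(\fm,\fg)$ from Proposition~\ref{prop:beta}, which is a slightly cleaner but equivalent route.
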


\begin{proof}
  Let $\beta^\varphi+\gamma^\varphi+\partial\tilde\alpha$ be a cocycle
  in $C^{2,2}(\fm,\fh)$ such that its cohomology class in
  $H^{2,2}(\fm,\fh)$ is non-trivial and $\fh_0$-invariant. For our
  purposes, it is convenient to consider the decomposition of
  $\fso(V)$-modules
  \begin{equation*}
    C^{2,2}(\fm,\fg)=\Hom(\Lambda^2 V, V)\oplus \Hom(V \otimes S, S)
    \oplus \Hom(\odot^2S, \fso(V))
  \end{equation*}
  and the corresponding $\fso(V)$-equivariant projections
  \eqref{eq:projectors}.  We recall that $\tilde\alpha:V\to \fso(V)$
  is such that
  $\gamma^\varphi+\pi^\gamma(\partial\tilde\alpha):\odot^2 S\to
  \fh_0$. Now $\varphi$ is nonzero, by
  Proposition~\ref{prop:cohgroups}, and for any $x\in\fh_0$ there is a
  $\psi\in C^{2,1}(\fm,\fh)=\Hom(V,\fh_0)$ such that
  $x\cdot(\beta^\varphi+\gamma^\varphi+\partial\tilde\alpha)=\partial\psi$.
  In other words,
  \begin{align}
    \label{eq:a}
    x\cdot(\pi^{\alpha}(\partial\tilde\alpha))&=\pi^{\alpha}(\partial\psi)~,\\
    \label{eq:b}
    x\cdot(\beta^\varphi+\pi^\beta(\partial\tilde\alpha))&=\pi^{\beta}(\partial\psi)~,\\
    \label{eq:c}
    x\cdot(\gamma^\varphi+\pi^{\gamma}(\partial\tilde\alpha))&=\pi^{\gamma}(\partial\psi)~.
  \end{align}
  From equation~\eqref{eq:a} and the $\fso(V)$-equivariance of
  $\pi^\alpha$ and $\partial$, we have
  \begin{equation*}
    (\pi^{\alpha}\circ\partial)(\psi) =
    x\cdot(\pi^{\alpha}(\partial\tilde\alpha)) =
    (\pi^{\alpha}\circ\partial) (x\cdot\tilde\alpha)
  \end{equation*}
  and then, since $\pi^{\alpha}\circ\partial:\Hom(V,\fso(V)) \to
  \Hom(\Lambda^2 V,V)$ is an isomorphism by Lemma~\ref{lem:iso}, it
  follows that $x\cdot\tilde\alpha=\psi$.  Equation \eqref{eq:b} yields now
  \begin{equation*}
    \begin{split}
      \pi^{\beta}(\partial\psi) &= x\cdot \left(\beta^\varphi +
        \pi^{\beta} (\partial\tilde\alpha)\right)\\
      &= x \cdot \beta^\varphi + x \cdot \pi^{\beta}(\partial\tilde\alpha) \\
      &= x\cdot\beta^\varphi+\pi^{\beta}(\partial(x\cdot
      \tilde\alpha))=x\cdot\beta^\varphi+\pi^{\beta}(\partial\psi)
    \end{split}
  \end{equation*}
  so $x\cdot\beta^\varphi=0$ and, by a similar argument starting with
  equation \eqref{eq:c}, $x\cdot\gamma^\varphi=0$ too. This shows that
  $\varphi$ is invariant by $\fh_0$ or, in other words, that
  $\fh_0\subset \fh_\varphi$.
\end{proof}

It follows from this lemma, that if the cocycle $\beta^\varphi +
\gamma^\varphi + \partial\tilde\alpha$ defines a nontrivial,
$\fh_0$-invariant cohomology class in $H^{2,2}(\fm,\fh)$, then in
particular the component
$\gamma^\varphi+\pi^\gamma(\partial\tilde\alpha)$ in $\Hom(\odot^2S,
\fh_0)$ actually belongs to $\Hom(\odot^2 S, \fh_\varphi)$ and this
has some strong consequences.  To exhibit them, we need a technical
lemma.

\begin{lemma}
  \label{lem:quadrics}
  The cochain $\gamma^\varphi+\pi^\gamma(\partial\tilde\alpha)$ takes
  values in $\fh_\varphi$ if and only if $\varphi \in \Lambda^4 V$ and
  $\tilde\alpha \in \Hom(V,\fso(V))$ satisfy the following three
  systems of quadrics:
  \begin{equation}
    \label{eq:quadrics}
    \begin{aligned}[m]
      \operatorname{skew}_{\lambda_1,\dots,\lambda_6} (\varphi^{\lambda_1\lambda_2\lambda_3\lambda_4}
      \varphi^{\lambda_5[\mu_1\mu_2\mu_3} \eta^{\mu_4]\lambda_6}) &=
      0~,\\
      \varphi_{\rho\mu\nu}{}^{[\mu_1} \varphi^{\mu_2\mu_3\mu_4]\rho}
      &= 0~,\\
      \tilde\alpha_{\lambda\rho}{}^{[\mu_1}
      \varphi^{\mu_2\mu_3\mu_4]\rho} &= 0~,
    \end{aligned}
  \end{equation}
  where in all three formulae we skew-symmetrise in the $\mu_i$ and,
  in addition, in the first formula we skew-symmetrise in the
  $\lambda_i$ as well, but separately.
\end{lemma}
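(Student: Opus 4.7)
The cochain $\gamma^\varphi+\pi^\gamma(\partial\tilde\alpha)\in\Hom(\odot^2 S,\fso(V))$, which sends $(s,s)$ to $\gamma^\varphi(s,s)-\tilde\alpha([s,s])$, takes values in $\fh_\varphi=\fso(V)\cap\stab(\varphi)$ if and only if
\begin{equation*}
\bigl(\gamma^\varphi(s,s)-\tilde\alpha([s,s])\bigr)\cdot\varphi = 0 \quad\text{in } \Lambda^4 V
\end{equation*}
for every $s\in S$, where $\cdot$ denotes the natural $\fso(V)$-action on $\Lambda^4 V$. My plan is to expand the left-hand side as a symmetric bilinear in $s$ valued in $\Lambda^4 V$, decompose it via a Fierz argument into three linearly independent pieces, and read off one tensor identity per piece.

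Inserting $\gamma^\varphi(s,s)_{\mu\nu}=2\sbar\Gamma_\mu(\Gamma_\nu\varphi-3\varphi\Gamma_\nu)s$ from \eqref{eq:gammabasis} together with $[s,s]_\mu=\sbar\Gamma_\mu s$, the $\sigma$-symmetry discussion around \eqref{eq:decomposition} and \eqref{eq:cc3} restricts the spinor bilinears that survive on the diagonal of $S\otimes S$ to $\sbar\Gamma^\mu s$, $\sbar\Gamma^{\mu\nu}s$, $\sbar\Gamma^{\mu_1\cdots\mu_5}s$, corresponding to the isotypical decomposition $\odot^2 S\cong V\oplus\Lambda^2 V\oplus\Lambda^5 V$. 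The Fierz identity \eqref{eq:Fierz} then lets me rewrite the condition as
\begin{equation*}
(\sbar\Gamma^\lambda s)\,A_{\lambda\mu_1\mu_2\mu_3\mu_4} + (\sbar\Gamma^{\lambda\rho}s)\,B_{\lambda\rho\mu_1\mu_2\mu_3\mu_4} + (\sbar\Gamma^{\lambda_1\cdots\lambda_5}s)\,C_{\lambda_1\cdots\lambda_5\mu_1\mu_2\mu_3\mu_4} = 0
\end{equation*}
(with $[\mu_1\mu_2\mu_3\mu_4]$ skew), where $A,B,C$ are tensors built from $\varphi$ and, for $A$ only, from $\tilde\alpha$; the restriction that $\tilde\alpha$ feeds solely into $A$ is because $\tilde\alpha([s,s])$ depends on $s$ only through the Dirac current. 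Linear independence of the three basic bilinears then forces $A=B=C=0$ separately.

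It remains to identify each coefficient equation with one of the quadrics in \eqref{eq:quadrics}. For $B$ and $C$, I would Clifford-expand $\Gamma_\mu(\Gamma_\nu\varphi-3\varphi\Gamma_\nu)$, retain the $\Gamma^{(2)}$- and $\Gamma^{(5)}$-projections respectively, and contract with the four-form $\varphi$; elementary index manipulation then yields the second equation of \eqref{eq:quadrics} from $B=0$ and, after Hodge-dualising $\Lambda^5 V\simeq\Lambda^6 V$ (which accounts for the six-index skew-symmetrisation in the $\lambda_i$), the first equation from $C=0$. For $A$, the analogous calculation produces a pure $\varphi\varphi$ piece from $\gamma^\varphi\cdot\varphi$ together with the $\tilde\alpha\varphi$ piece from $\tilde\alpha([s,s])\cdot\varphi$; the $\varphi\varphi$ contribution to $A$ must vanish identically in $\varphi$ (either as an immediate algebraic identity or, if not, as a Fierz consequence of the first two quadrics), so that $A=0$ reduces to the third equation $\tilde\alpha_{\lambda\rho}{}^{[\mu_1}\varphi^{\mu_2\mu_3\mu_4]\rho}=0$.

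The main obstacle I anticipate is the Clifford-and-Fierz bookkeeping required to carry out those projections with correct signs and normalisations, especially in tracing how the six-index skew of the first quadric emerges via Hodge duality. A useful structural shortcut is $\fso(V)$-equivariance: each of $A,B,C$ lies a priori in a specific $\fso(V)$-submodule of $V^*\otimes\Lambda^4 V$, $(\Lambda^2 V)^*\otimes\Lambda^4 V$ and $(\Lambda^5 V)^*\otimes\Lambda^4 V$ respectively, and the three systems in \eqref{eq:quadrics} are the natural equivariant projections of these coefficients on the relevant irreducibles. The converse direction is then automatic, since if \eqref{eq:quadrics} hold then $A=B=C=0$ and the displayed identity is satisfied for all $s$.
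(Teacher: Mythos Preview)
Your strategy matches the paper's almost exactly: translate the condition into $(\gamma^\varphi(s,s)-\tilde\alpha([s,s]))\cdot\varphi=0$, expand the left-hand side in spinor bilinears, and use $\odot^2 S\cong V\oplus\Lambda^2V\oplus\Lambda^5V$ to split into three independent equations, one per quadric. Two small clarifications relative to the paper's execution: first, no genuine Fierz rearrangement is needed---the paper simply Clifford-expands $\sbar\Gamma^\nu\beta^\varphi_{\mu_1}s$ directly (with $\beta^\varphi_\rho=\Gamma_\rho\varphi-3\varphi\Gamma_\rho$) and the bilinears $\sbar\Gamma^{(1)}s$, $\sbar\Gamma^{(2)}s$, $\sbar\Gamma^{(6)}s$ appear automatically; second, your uncertainty about the $\varphi\varphi$ contribution to the $V$-channel coefficient $A$ is resolved by that direct calculation: $\gamma^\varphi(s,s)\cdot\varphi$ produces \emph{no} $\sbar\Gamma^\lambda s$ term at all, so $A=0$ is purely the $\tilde\alpha$ equation and your fallback ``Fierz consequence of the first two quadrics'' is never invoked.
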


\begin{proof}
  The cochain $\gamma^\varphi + \pi^\gamma(\partial\tilde\alpha)$
  takes values in $\fh_\varphi$ if and only if for every $s \in S$,
  $\gamma(s,s) -\tilde\alpha([s,s]) \in \fso(V)$ leaves $\varphi$
  invariant, where $[s,s]$ stands for the Dirac current of $s$.
  Relative to an $\eta$-orthonormal basis, $\varphi = \tfrac1{4!}
  \varphi^{\mu_1 \cdots \mu_4} \be_{\mu_1} \wedge \cdots \wedge
  \be_{\mu_4}$ and
  \begin{equation*}
    \gamma(s,s)(\varphi) = \tfrac1{3!} \varphi^{\mu_1\cdots\mu_4}
    \gamma(s,s)^\nu{}_{\mu_1} \be_\nu \wedge \be_{\mu_2} \wedge \cdots
    \wedge \be_{\mu_4}~.
  \end{equation*}
  Using equation~\eqref{eq:gammabasis}, this becomes
  \begin{equation}
    \label{eq:Finvt1}
    \gamma(s,s)(\varphi) = \tfrac1{3} \varphi^{\mu_1\cdots\mu_4}
    (\sbar \Gamma^\nu \beta_{\mu_1} s) \be_\nu \wedge \be_{\mu_2} \wedge
    \cdots \wedge \be_{\mu_4}~,
  \end{equation}
  where, from Proposition~\ref{prop:beta},
  \begin{equation*}
    \beta_\rho = \tfrac1{4!} \varphi^{\lambda_1\cdots \lambda_4}
    \left(\Gamma_\rho \Gamma_{\lambda_1\cdots\lambda_4} - 3
      \Gamma_{\lambda_1\cdots\lambda_4} \Gamma_\rho \right)~.
  \end{equation*}
  On the other hand,
  \begin{equation*}
   - \tilde\alpha([s,s])(\varphi) = \tfrac1{3!}
   \varphi^{\mu_1\cdots\mu_4} (\sbar\Gamma^\rho s)
   \tilde\alpha_\rho{}^\nu{}_{\mu_1} \be_\nu \wedge \be_{\mu_2} \wedge
    \cdots \wedge \be_{\mu_4}~.
  \end{equation*}
  We insert the expression for $\beta_{\mu_1}$ into
  equation~\eqref{eq:Finvt1}, multiply in $\Cl(V)$ and keep only the
  terms $\sbar \Gamma_{\mu_1\cdots\mu_p} s$ for $p=1,2,5$ (or,
  equivalently, $6$).  When the dust clears, we find that
  $(\gamma(s,s) -\tilde\alpha([s,s]))(\varphi) = 0$ if and only
  if
  \begin{equation*}
    \tfrac1{6} \varphi^{\lambda_1\cdots\lambda_4}
    \varphi^{\lambda_5[\mu_1\cdots\mu_3} \eta^{\mu_4]\lambda_6}
    \sbar \Gamma_{\lambda_1\cdots\lambda_6}s + 8
    \varphi_{\rho\mu\nu}{}^{[\mu_1}\varphi^{\mu_2\cdots\mu_4]\rho}
    \sbar\Gamma^{\mu\nu} s - \tilde\alpha_{\lambda\rho}{}^{[\mu_1}
    \varphi^{\mu_2\cdots\mu_4]\rho} \sbar \Gamma^\lambda s = 0~.
  \end{equation*}
  By polarisation on $s$, we see that this is a system of quadrics for
  $\varphi$ and $\tilde\alpha$ with linear parametric dependence on
  $\odot^2 S$.  Since $\odot^2 S \cong V \oplus \Lambda^2 V \oplus
  \Lambda^5 V$, the components of this system parametrised by $V$,
  $\Lambda^2 V$ and $\Lambda^5 V \cong \Lambda^6 V$ must be satisfied
  separately.  In other words, the terms proportional to
  $\sbar\Gamma^\lambda s$, $\sbar\Gamma^{\mu\nu} s$ and 
  $\sbar \Gamma_{\lambda_1\cdots\lambda_6}s$ must vanish separately, and
  these are precisely the three systems of quadrics in the lemma.
\end{proof}

As we shall see, the quadrics \eqref{eq:quadrics} have a very natural
interpretation.  Our first observation is that the first equation in
\eqref{eq:quadrics} actually implies the second.  To see this, we
simply contract the first equation with $\eta_{\lambda_6\mu_4}$ to
obtain
\begin{equation}
  \label{eq:plueckerinds}
  \varphi^{[\lambda_1\lambda_2\lambda_3\lambda_4}
  \varphi^{\lambda_5]\mu_1\mu_2\mu_3} = 0~,
\end{equation}
and we now contract again with $\eta_{\lambda_5\mu_3}$ to obtain the
second equation.

Now recall that a non-zero $4$-form $\varphi \in \Lambda^4 V$
is said to be \emph{decomposable} if
\begin{equation}
  \label{eq:decomposable}
  \varphi = v_1 \wedge v_2 \wedge v_3 \wedge v_4~,
\end{equation}
for some linearly independent $v_i \in V$.  If
$\varphi = v_1 \wedge v_2 \wedge v_3 \wedge v_4$ is decomposable, then
the first equation in \eqref{eq:quadrics} (and hence also the second)
is satisfied identically.  To see this, insert
$\varphi^{\lambda_1\cdots\lambda_4} = v_1^{[\lambda_1} v_2^{\lambda_2}
v_3^{\lambda_3} v_4^{\lambda_4]}$,
into the LHS of the first equation in \eqref{eq:quadrics} to obtain
\begin{equation*}
  \operatorname{skew}_{\mu_1,\cdots,\mu_4}
  \operatorname{skew}_{\lambda_1,\cdots,\lambda_6}
  \left(v_1^{[\lambda_1} v_2^{\lambda_2} v_3^{\lambda_3} v_4^{\lambda_4]}
  v_1^{[\lambda_5}v_2^{\mu_1} v_3^{\mu_2} v_4^{\mu_3]}
  \eta^{\mu_4\lambda_6}\right)~,
\end{equation*}
where we skew-symmetrise separately in the $\lambda_i$ and the
$\mu_i$.  But notice that every term in this expression contains a
factor $v_i^{\lambda_j} v_i^{\lambda_k}$ for some $i,j,k$, and this
vanishes by symmetry since we skew-symmetrise on the $\lambda_i$.
	
Perhaps more remarkable still is that the converse also holds. Indeed,
we recognise equation~\eqref{eq:plueckerinds} as the Plücker relations
(see, e.g., \cite[Ch.~1]{MR507725})
\begin{equation}
  \label{eq:plueckerGH}
  \iota_\chi \iota_\theta\iota_\zeta \varphi \wedge \varphi = 0~,
\end{equation}
for all $\theta,\zeta,\chi \in V^*$, defining the Plücker embedding of
the grassmannian $\Gr(4,V)$ of $4$-planes in $V$ into the projective
space $\PP(\Lambda^4 V)$.  Recall that a decomposable 4-form $\varphi
= v_1 \wedge v_2 \wedge v_3 \wedge v_4$ defines a plane $\Pi \subset
V$ by the span of the $(v_i)$ and, conversely, any plane determines a
decomposable $\varphi$ up to a nonzero real multiple by taking the
4-form constructed out of wedging the elements in any basis.  Hence
$\varphi$ is decomposable if and only if it obeys
equation~\eqref{eq:plueckerGH}.  In other words, we have proved that
the first two equations in \eqref{eq:quadrics} are satisfied if and
only if $\varphi$ is decomposable.

Finally, the third quadric in \eqref{eq:quadrics} simply says that the
image of $\tilde\alpha : V \to \fso(V)$ actually lies in $\fh_\varphi$.

In summary, we have proved the following

\begin{proposition}
 \label{prop:pluecker}
  The cochain $\gamma^\varphi+\pi^\gamma(\partial\tilde\alpha)$ takes
  values in $\fh_\varphi$ if and only if $\varphi \in \Lambda^4 V$ is
  decomposable and the image $\Im(\tilde\alpha)\subset\fh_\varphi$.
\end{proposition}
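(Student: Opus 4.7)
The plan is to put together Lemma~\ref{lem:quadrics} with the interpretation of the quadrics in \eqref{eq:quadrics} as Plücker-type conditions on $\varphi$.  By Lemma~\ref{lem:quadrics}, the statement $\gamma^\varphi+\pi^\gamma(\partial\tilde\alpha):\odot^2 S\to\fh_\varphi$ is equivalent to the three systems of quadrics in $\varphi$ and $\tilde\alpha$.  I would first dispose of the third quadric, whose content is transparent: it encodes precisely that for every $v\in V$ the endomorphism $\tilde\alpha(v)\in\fso(V)$ annihilates $\varphi$, i.e.\ that $\Im(\tilde\alpha)\subset\fh_\varphi$.  Thus the third quadric decouples from $\varphi$ and yields exactly the second half of the conclusion, independently of the other equations.

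The real work lies in showing that the first two quadrics in \eqref{eq:quadrics} are equivalent to the decomposability of $\varphi$.  First I would verify the easy direction: if $\varphi=v_1\wedge v_2\wedge v_3\wedge v_4$ is decomposable, a short manipulation (as already performed in the discussion following Lemma~\ref{lem:quadrics}) shows that each of the six factors $v_i^{\lambda_j}$ appears twice in any term of the skew-symmetrised product, forcing the first quadric to vanish; the second then follows by the contraction already indicated.  For the converse, the strategy is to observe that contracting the first quadric with $\eta_{\lambda_6\mu_4}$ yields
\begin{equation*}
  \varphi^{[\lambda_1\lambda_2\lambda_3\lambda_4}\varphi^{\lambda_5]\mu_1\mu_2\mu_3}=0~,
\end{equation*}
which, written invariantly, is exactly the classical system of Plücker relations $\iota_\chi\iota_\theta\iota_\zeta\varphi\wedge\varphi=0$ for all $\theta,\zeta,\chi\in V^*$ characterising the image of the Plücker embedding $\Gr(4,V)\hookrightarrow\PP(\Lambda^4V)$.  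Hence this contraction alone forces $\varphi$ to be decomposable.

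To complete the equivalence I need to check that decomposability is not merely a consequence of the contracted form but actually suffices to recover the full, uncontracted first quadric (and hence both quadrics) — this is precisely the easy direction already computed, so no gap remains.  The main obstacle, and the only non-routine step, is the identification of the contracted first quadric with the Plücker relations and hence with decomposability; once that is established, the proposition follows by combining the three observations.  I would then conclude by stating that the cocycle condition $\gamma^\varphi+\pi^\gamma(\partial\tilde\alpha)\in\Hom(\odot^2 S,\fh_\varphi)$ holds precisely when $\varphi\in\Lambda^4V$ lies on the Plücker cone over $\Gr(4,V)$ and $\tilde\alpha$ takes values in $\fh_\varphi$, which is the content of Proposition~\ref{prop:pluecker}.
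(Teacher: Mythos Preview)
Your proposal is correct and follows essentially the same approach as the paper: the paper's proof is the discussion between Lemma~\ref{lem:quadrics} and the statement of Proposition~\ref{prop:pluecker}, which likewise identifies the third quadric with $\Im(\tilde\alpha)\subset\fh_\varphi$, contracts the first quadric with $\eta_{\lambda_6\mu_4}$ to obtain the Plücker relations (hence decomposability), and checks the easy direction that decomposable $\varphi$ satisfies the first quadric (whence also the second by a further contraction).
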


To proceed further, we need to classify the decomposable $\varphi$ and
the corresponding stabilisers $\fh_\varphi$.  It is only necessary to
classify $\varphi$ up to the action of $\CSO(V) = \RR^\times \times
\SO(V)$.

\begin{lemma}
  \label{lem:varphi}
  Let $\varphi$ and $\varphi'$ be decomposable $4$-forms in the same
  orbit of $\CSO(V)$ on $\Lambda^4 V$.  Then the corresponding
  filtered deformations are isomorphic.
\end{lemma}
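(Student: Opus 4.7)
The plan is to promote any element $g \in \CSO(V)$ with $g\cdot\varphi = \varphi'$ to an explicit isomorphism of the corresponding filtered Lie superalgebras. Write $g = \lambda R$ with $\lambda \in \RR^\times$ and $R \in \SO(V)$; since the induced action on $\Lambda^4 V$ sends $\varphi$ to $\lambda^4 R\cdot\varphi$ and $\lambda^4$ depends only on $|\lambda|$, we may assume $\lambda>0$, and we handle the rotation and the positive scaling separately.

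For the rotation, any $R \in \SO(V)$ lifts to some $\tilde R \in \Spin(V)$, and the resulting Lie algebra automorphism $\Phi_R$ of $\fg = V \oplus S \oplus \fso(V)$ --- acting as $R$ on $V$, as $\tilde R$ on $S$ and as $\operatorname{Ad}_{\tilde R}$ on $\fso(V)$ --- preserves the $\ZZ$-grading. Because the cocycle representatives $\beta^\bullet$ and $\gamma^\bullet$ of Proposition~\ref{prop:beta} depend $\fso(V)$-equivariantly on $\varphi$, the map $\Phi_R$ conjugates the filtered deformation of $\fh$ associated with $\varphi$ to the one associated with $R\cdot\varphi$, taking $\fh_0\subset\fh_\varphi$ to $\operatorname{Ad}_{\tilde R}(\fh_0)\subset\fh_{R\cdot\varphi}$.

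For the positive scaling we employ the grading element $E \in \fg^\infty_0$. The one-parameter family $\Psi_t = \exp(tE)$ acts on $\fg^\infty_j$ as multiplication by $e^{tj}$, and its restriction to $\fh\subset\fg$ is a parity- and filtration-preserving vector space automorphism. Decomposing the bracket of the filtered deformation into components of filtration degrees $0,2$ and $4$, and using that a degree-$d$ component sends $\fh_i\otimes\fh_j$ into $\fh_{i+j+d}$, a short calculation shows that $\Psi_t$ intertwines the original bracket with the bracket of a filtered Lie superalgebra whose degree-$2$ and degree-$4$ components are rescaled by $e^{2t}$ and $e^{4t}$ respectively. Since $\beta^\bullet$ and $\gamma^\bullet$ are linear in $\varphi$ while, as the explicit construction in the next subsections will show, $\delta$ depends quadratically on $\varphi$ (arising as a Jacobi obstruction built from two copies of $\mu$), this rescaled Lie superalgebra is exactly the filtered deformation associated with $e^{2t}\varphi$. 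Choosing $t$ with $e^{2t}=\lambda^4$ and composing with $\Phi_R$ produces the desired isomorphism from $F(\varphi)$ onto $F(\lambda^4 R\cdot\varphi) = F(\varphi')$.

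The main thing to verify is the claimed scaling $\delta\mapsto c^2\delta$ under $\varphi\mapsto c\varphi$; all the rest is bookkeeping with gradings. This scaling will be transparent once the explicit form of $\delta$ is read off the Jacobi identities in the following subsections, so the lemma is essentially a clean corollary of the construction of $F(\varphi)$.
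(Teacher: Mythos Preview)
Your argument is correct and is essentially the paper's proof unpacked: the paper phrases everything at once by saying that $\mathrm{CSpin}(V)$, whose Lie algebra is $\fso(V)\oplus\RR E$, acts on $\fg^\infty$ (hence on $\fg$) by degree-$0$ automorphisms and therefore carries any filtered deformation $F$ of $\fh$ to an isomorphic filtered deformation of $g\cdot\fh$ associated with $g\cdot\varphi$, whereas you decompose $g=\lambda R$ and treat the $\Spin(V)$ part and the $\exp(tE)$ part separately.  The only cosmetic difference worth noting is that your forward reference to the quadratic dependence of $\delta$ on $\varphi$ is unnecessary: once you know that $\Psi_t$ rescales the degree-$2$ component by $e^{2t}$, Theorem~\ref{thm:first} already tells you the resulting filtered deformation is isomorphic to $F(e^{2t}\varphi)$, so there is no need to track $\delta$ explicitly or to appeal to results from later subsections.
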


\begin{proof}
  The group $\mathrm{G}^{\infty}_0=\mathrm{CSpin(V)}$ with Lie algebra
  $\fg_0^{\infty}=\mathfrak{so}(V)\oplus\mathbb R E$ is a double-cover
  of $\CSO(V)$ and it naturally acts  on $\mathfrak{g}^\infty$ by
  degree-$0$ Lie superalgebra automorphisms. Note that the action
  preserves the Poincaré superalgebra $\fg$, which is an ideal of
  $\fg^{\infty}$. Now, an element $g\in\mathrm{CSpin(V)}$ sends a
  $\mathbb Z$-graded subalgebra $\fh=V\oplus S\oplus \fh_0$ of $\fg$
  into an (isomorphic) $\mathbb Z$-graded subalgebra
  $\fh'=g\cdot\fh=V\oplus S\oplus (g\cdot \fh_0)$ of $\fg$. In
  particular, if $F$ is a filtered deformation of $\fh$ associated with
  $\varphi$ then $F'=g\cdot F$ is also a filtered deformation of $\fh'$,
  which is associated with $\varphi'=g\cdot\varphi$.
\end{proof}

Therefore we must classify the orbits of $\CSO(V)$ in the space of
decomposable elements of $\Lambda^4 V$.  Other than $\varphi = 0$,
which is its own orbit, any other decomposable $\varphi$ defines a
$4$-plane and we can study instead the geometric action of $\SO(V)$ on
$4$-planes.  Unlike the general linear group, $\SO(V)$ does not act
transitively on the grassmannian of $4$-planes.  Indeed, we can
distinguish three kinds of planes, depending on the nature of the
restriction of the inner product $\eta$ on $V$ to the plane:
\begin{enumerate}
\item $\Pi$ is euclidean: we will say that $\varphi$ is \emph{spacelike};
\item $\Pi$ is lorentzian: we will say that $\varphi$ is \emph{timelike};
\item $\Pi$ is degenerate: we will say that $\varphi$ is \emph{lightlike}.
\end{enumerate}
Since $\SO(V)$ preserves $\eta$, it preserves the type of plane and
acts transitively on each type.   In terms of the 4-forms, one can
show that, in addition to the trivial orbit $\varphi=0$, there are
precisely three orbits of $\CSO(V)$ on the space of decomposable
elements in $\Lambda^4 V$.

Many of the results we prove from here on depend on a case-by-case
analysis of these three orbits.  We find that the first two orbits
can be treated simultaneously, since they share the property that the
restriction of $\eta$ to $\Pi$ is nondegenerate.  In this case, we can
decompose $V = \Pi \oplus \Pi^\perp$ into an orthogonal direct sum and
hence $\fh_\varphi = \fso(\Pi) \oplus \fso(\Pi^\perp) \subset
\fso(V)$.

In contrast, if $\Pi$ is degenerate, we can always choose an
$\eta$-Witt basis for $V$ such that $V = \RR\left<\be_+,\be_-\right>
\oplus W$ and such that $\varphi = \be_+ \wedge f$ for $f \in
\Lambda^3 W$ a decomposable $3$-form.  Such $f$ defines a $3$-plane
$\pi \subset W$ and induces an orthogonal decomposition $W = \pi
\oplus \pi^\perp$.  Our original plane is $\Pi = \RR\left<\be_+\right>
\oplus \pi$ and the stabiliser Lie algebra is now
\begin{equation*}
  \fh_\varphi = \left(\fso(\pi) \oplus \fso(\pi^\perp)\right) \ltimes
  (\be_+ \wedge (\pi \oplus \pi^\perp)) \subset \fso(V)~,
\end{equation*}
where
\begin{equation*}
  \be_+ \wedge (\pi \oplus \pi^\perp) = (\be_+\wedge \pi) \oplus
  (\be_+ \wedge \pi^\perp)~,
\end{equation*}
is the abelian Lie subalgebra of $\fso(V)$ consisting of null
rotations fixing $\be_+$.  We remark that whether or not $\Pi$ is
degenerate, $\dim\fh_\varphi = 27$ and in fact the degenerate
$\fh_\varphi$ is a contraction of the nondegenerate $\fh_\varphi$.

We are now ready to prove the following proposition, which
recapitulates the results of this section.

\begin{proposition}
  \label{thm:second}
  Let $\fh=\fh_{-2}\oplus\fh_{-1}\oplus\fh_0$ be a $\mathbb Z$-graded
  subalgebra of the Poincaré superalgebra $\fg$ which differs only in
  zero degree; that is, $\fh_{0}\subset\fg_0$ and
  $\fm=\fh_{-2}\oplus\fh_{-1}=V\oplus S$.  In addition, let
  $\beta^\varphi+\gamma^\varphi+\partial\tilde\alpha$ be a cocycle in
  $C^{2,2}(\fm,\fh)$ defining a nontrivial, $\fh_0$-invariant
  cohomology class in $H^{2,2}(\fm,\fh)$.  Then,
  \begin{enumerate}
  \item  $\varphi\in\Lambda^4 V$ is nonzero and decomposable,
  \item the images $\Im(\gamma^\varphi)=\fh_\varphi$ and $\Im(\tilde\alpha)
    \subset \fh_\varphi$, and
  \item $\fh_0=\fh_\varphi$.
  \end{enumerate}
\end{proposition}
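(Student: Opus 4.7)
The plan is to establish the three items in sequence, using Lemma~\ref{lem:stab}, Proposition~\ref{prop:pluecker}, and a case analysis on the three $\CSO(V)$-orbits of decomposable $4$-forms described just before the statement.

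Item~(1) is almost immediate: nontriviality of the class together with Proposition~\ref{prop:cohgroups} forces $\varphi\neq 0$; Lemma~\ref{lem:stab} gives $\fh_0\subset\fh_\varphi$, so the $\gamma$-component $\gamma^\varphi+\pi^\gamma(\partial\tilde\alpha)$, a priori valued in $\fh_0$, actually takes values in $\fh_\varphi$, and Proposition~\ref{prop:pluecker} then yields both the decomposability of $\varphi$ and $\Im(\tilde\alpha)\subset\fh_\varphi$. This last inclusion is the $\tilde\alpha$-half of item~(2); for the $\gamma^\varphi$-half, the inclusion $\Im(\gamma^\varphi)\subset\fh_\varphi$ is immediate from
\begin{equation*}
\gamma^\varphi(s,s)=\bigl(\gamma^\varphi(s,s)-\tilde\alpha([s,s])\bigr)+\tilde\alpha([s,s])\in\fh_0+\Im(\tilde\alpha)\subset\fh_\varphi.
\end{equation*}

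The reverse inclusion $\fh_\varphi\subset\Im(\gamma^\varphi)$ is the technical heart of the proof. Since $\gamma^\varphi$ is $\fh_\varphi$-equivariant (because $\varphi$ is $\fh_\varphi$-invariant), $\Im(\gamma^\varphi)$ is an ideal of $\fh_\varphi$, so it suffices to hit every simple (or irreducible) summand. I would treat the spacelike and timelike cases together, using an $\eta$-orthonormal basis adapted to $V=\Pi\oplus\Pi^\perp$ and exhibiting, through the explicit formula of Proposition~\ref{prop:beta} combined with~\eqref{eq:cc1} in the form~\eqref{eq:gammabasis}, spinors $s$ whose $\gamma^\varphi(s,s)$ have nonzero components in both $\fso(\Pi)$ and $\fso(\Pi^\perp)$. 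The lightlike case uses a Witt basis and requires additionally exhibiting the abelian null-rotation subspace of the non-reductive $\fh_\varphi$ as arising from $\gamma^\varphi$ evaluated on carefully chosen null-spinor bilinears.

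For item~(3), one inclusion is Lemma~\ref{lem:stab}; for $\fh_\varphi\subset\fh_0$, polarise the cocycle condition~\eqref{eq:cc2} in $s$ to obtain
\begin{equation*}
\gamma^\varphi(t)-\tilde\alpha(D(t))\in\fh_0\qquad\forall\, t\in\odot^2 S,
\end{equation*}
where $D\colon\odot^2 S\to V$ denotes the Dirac current. Under the $\fso(V)$-decomposition $\odot^2 S\cong V\oplus\Lambda^2 V\oplus\Lambda^5 V$ the map $D$ is the projection onto the first summand, so $\ker D=\Lambda^2 V\oplus\Lambda^5 V$ and the inclusion restricts to $\gamma^\varphi(\ker D)\subset\fh_0$. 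A refinement of the case analysis of item~(2), now restricting the test bilinears to $\ker D$, shows that $\gamma^\varphi(\ker D)$ still exhausts $\fh_\varphi$, forcing $\fh_0=\fh_\varphi$. The main obstacle throughout is the explicit Clifford-algebra computation in the lightlike case, where the non-reductive $\fh_\varphi$ contains a nilpotent ideal that must be recovered from images of null-spinor bilinears; the nondegenerate cases, by contrast, reduce cleanly to checking nonvanishing on each simple factor thanks to the ideal argument.
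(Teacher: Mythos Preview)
Your overall strategy is sound and matches the paper's outline.  Two points of comparison are worth making.

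For the surjectivity $\Im(\gamma^\varphi)=\fh_\varphi$ in item~(2), you use the ideal argument (the image is an ideal of $\fh_\varphi$, so it suffices to hit each simple factor with a well-chosen $s$), whereas the paper argues by duality: the component $\odot^2S\to\fso(\Pi)$ fails to be surjective only if some nonzero $\zeta\in\Lambda^2\Pi$ satisfies $\langle s,\iota_\zeta\varphi\cdot s\rangle=0$ for all $s$, which is ruled out by~\eqref{eq:nondeg2forms}; similarly for $\fso(\Pi^\perp)$ via~\eqref{eq:nondeg5forms}.  Both work.  The paper's route is more uniform (no case split on the two simple factors of $\fso(4)$, no explicit spinors to exhibit) and treats the Levi and nilpotent pieces of the lightlike $\fh_\varphi$ on the same footing.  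Conversely, your ideal argument is more structural, and in fact in the lightlike case you do \emph{not} need to recover the nilpotent piece separately: once $\Im(\gamma^\varphi)$ projects nontrivially onto each simple Levi factor $\fso(\pi)$ and $\fso(\pi^\perp)$, it is automatically all of $\fh_\varphi$, since the only ideals of $\fso(n)\ltimes\RR^n$ are $0$, $\RR^n$ and the whole thing.

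For item~(3) you miss a shortcut the paper exploits.  You propose to redo the surjectivity analysis restricted to $\ker D=\Lambda^2V\oplus\Lambda^5V$.  The paper instead computes, using~\eqref{eq:VonForms}, that
\begin{equation*}
\eta(w,\gamma^\varphi(s,s)v)=4\langle s,v\wedge w\wedge\varphi\cdot s\rangle-8\langle s,\iota_v\iota_w\varphi\cdot s\rangle,
\end{equation*}
so $\gamma^\varphi$ already factors through $\Lambda^2V\oplus\Lambda^5V$ (equivalently $\gamma^\varphi|_V=0$).  Hence $\gamma^\varphi(\ker D)=\Im(\gamma^\varphi)=\fh_\varphi$ with no further work, and since $\pi^\gamma(\partial\tilde\alpha)$ factors through the complementary summand $V$, the map $\gamma^\varphi+\pi^\gamma(\partial\tilde\alpha)$ is a genuine direct sum, whence $\fh_0\supset\Im(\gamma^\varphi+\pi^\gamma(\partial\tilde\alpha))=\Im(\gamma^\varphi)+\Im(\tilde\alpha)=\fh_\varphi$.

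A minor correction: the condition you polarise in item~(3) is not~\eqref{eq:cc2} (which is cubic in $s$) but the defining condition $\gamma^\varphi(s,s)-\tilde\alpha([s,s])\in\fh_0$ from Proposition~\ref{prop:cohgroups}.
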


\begin{proof}
  The first part follows from Lemma~\ref{lem:stab} and
  Proposition~\ref{prop:pluecker}.  Further, from Lemma~\ref{lem:stab}
  we have that for all $s \in S$,
  \begin{equation*}
    \gamma^\varphi(s,s) - \tilde\alpha([s,s]) \in \fh_\varphi~,
  \end{equation*}
  but from Proposition~\ref{prop:pluecker} we also know that
  $\tilde\alpha([s,s]) \in \fh_\varphi$, hence $\Im(\gamma^\varphi)
  \subset \fh_\varphi$ as well.

  To prove the second part, we need to show that $\Im(\gamma^\varphi)
  = \fh_\varphi$.  We break this up into two cases, depending on
  whether or not the plane $\Pi$ corresponding to $\varphi$ is
  degenerate.

  \emph{$\Pi$ is nondegenerate.}  From Proposition~\ref{prop:beta}, we
  have that
  \begin{equation*}
    \eta(w,\gamma^\varphi(s,s) v) = 2 \left<s, w \cdot \beta^\varphi_v
    \cdot s\right>~.
  \end{equation*}
  Writing $v = v_\top + v_\perp$ and $w = w_\top + w_\perp$, and using
  that $\beta^\varphi_v = 4 v_\top\cdot \varphi - 2 v_\perp\cdot \varphi$ we
  arrive after some calculation at
  \begin{equation*}
    \eta(w,\gamma^\varphi(s,s) v) = 8 \left<s, \iota_{w_\top}
      \iota_{v_\top} \varphi \cdot s\right> - 4 \left<s, w_\perp
      \wedge v_\perp \wedge \varphi \cdot s\right>~.
  \end{equation*}
  In other words, $\gamma^\varphi$ defines a map
  \begin{equation*}
    \odot^2 S \to \fso(\Pi) \oplus \fso(\Pi^\perp)
  \end{equation*}
  which we claim is surjective.  Indeed, the only way that the first
  component of this map fails to be surjective is if there exists a
  nonzero $\zeta \in \Lambda^2 \Pi$ such that
  \begin{equation*}
    \left<s, \iota_\zeta \varphi \cdot s\right>= 0 \quad \forall s \in S~.
  \end{equation*}
  From equation~\eqref{eq:nondeg2forms}, this is true if and only if
  $\iota_\zeta \varphi = 0$, but this implies that $\zeta = 0$.
  Similarly, the second component of the map would fail to be surjective
  if there exists a nonzero $\theta \in \Lambda^2 \Pi^\perp$ such that
  \begin{equation*}
    \left<s, \theta \wedge \varphi \cdot s\right>= 0 \quad \forall s
    \in S~.
  \end{equation*}
  In turn, this is equivalent to
  \begin{equation*}
    \left<s, \star(\theta \wedge \varphi) \cdot s\right>= 0 \quad \forall s
    \in S~,
  \end{equation*}
  which, by \eqref{eq:nondeg5forms}, implies that $\star(\theta \wedge
  \varphi) = 0$ or, equivalently, $\theta \wedge \varphi = 0$; but no
  nonzero $\theta \in \Lambda^2\Pi^\perp$ has vanishing wedge
  product with $\varphi$.  Therefore $\theta = 0$.

  \emph{$\Pi$ is degenerate.}   In this case we can write $w = w_+ +
  w_- + w_\top + w_\perp$ and now
  \begin{equation*}
    \beta_v^\varphi = 4 v_\top\cdot \varphi - 2 v_\perp\cdot \varphi - 2 v_-\cdot
    \varphi - 6 \theta(v_-) f~,
  \end{equation*}
  where $\varphi = \be_+ \wedge f$ and $\theta(v_-) = \eta(v,\be_+)$.
  After a short calculation, we arrive at
  \begin{equation*}
    \begin{split}
      \eta(w,\gamma^\varphi(s,s) v) &= 2 \left<s, w \cdot
        \beta^\varphi_v \cdot s\right>\\
      &= 8 \theta(w_-) \left<s, \iota_{v_\top} f \cdot s\right> - 8
      \theta(v_-) \left<s, \iota_{w_\top} f \cdot s\right>\\
      & \quad {} - 4 \left<s, w_\perp \wedge v_\perp \wedge \varphi
        \cdot s \right> + 8 \left<s, \iota_{w_\top} \iota_{v_\top} \varphi
      \cdot s\right>\\
      & \quad {} - 4 \left<s, w_- \wedge v_\perp \wedge \varphi \cdot
      s\right> + 4 \left<s, v_- \wedge w_\perp \wedge \varphi \cdot
      s\right> ~.
    \end{split}
  \end{equation*}
  The first two terms factor through the component
  $ \odot^2S \to (\be_+ \wedge \pi)$ of $\gamma^\varphi$, whereas the
  second pair of terms factor through the component
  $\odot^2S \to (\be_+ \wedge \pi^\perp)$.  The last two terms factor
  through the components $\odot^2S \to \fso(\pi)$ and
  $\odot^2S \to \fso(\pi^\perp)$ of $\gamma^\varphi$, respectively.
  Similar arguments to the ones in the nondegenerate case show that
  these maps are surjective.

  Finally, we show that $\fh_0 = \fh_\varphi$.  From
  Lemma~\ref{lem:stab} we know that $\fh_0 \subset \fh_\varphi$, so
  all we need to do is to establish the reverse inclusion:
  $\fh_0\supset \fh_\varphi$.  This will follow from
  \begin{equation*}
    \fh_0 \supset \Im( \gamma^\varphi+\pi^\gamma(\partial
    \tilde\alpha) ) = \Im(\gamma^\varphi) + \Im(\pi^\gamma(\partial
    \tilde\alpha) ) = \fh_\varphi~,
  \end{equation*}
  where the first equality is a consequence of the fact, to be shown,
  that we may actually think of $\gamma^\varphi + \pi^\gamma(\partial
  \tilde\alpha)$ as
  \begin{equation}
    \label{eq:directsummaps}
    \gamma^\varphi \oplus \pi^\gamma(\partial \tilde\alpha) :
    (\Lambda^2 V \oplus \Lambda^5 V) \oplus V \to \fh_\varphi~,
  \end{equation}
  where we identify $\odot^2 S$ with the direct sum $V
  \oplus \Lambda^2 V \oplus \Lambda^5 V$ of subspaces $\Lambda^p V \subset
  \odot^2 S$, for $p=1,2,5$, defined by
  equation~\eqref{eq:lambdaVinSS}.

  It follows from the very definition of
  $\pi^\gamma(\partial\tilde\alpha)$ that it is given by a map
  $V \to \fh_\varphi$: in fact, the map is precisely $-\tilde\alpha$.
  We now use equations~\eqref{eq:VonForms} to calculate
  \begin{equation*}
    \eta(w,\gamma^\varphi(s,s)v) = 4 \left<s, v \wedge w \wedge
      \varphi \cdot s \right> - 8 \left<s, \iota_v \iota_w \varphi
      \cdot s\right>~,
  \end{equation*}
  which shows that $\gamma^\varphi : \Lambda^2 V \oplus \Lambda^5 V
  \to \fh_\varphi$, where we have used that $\Lambda^6 V \cong
  \Lambda^5 V$.
  \end{proof}

\subsection{First-order integrability of the deformation}
\label{sec:first-order-integr-deform}

From Proposition~\ref{prop:cohgroups}, Theorem~\ref{thm:first},
Lemma~\ref{lem:varphi} and Proposition~\ref{thm:second} we know that
there are (at most) three isomorphism classes of non-trivial filtered
deformations of subalgebras $\fh=V\oplus S\oplus\fh_0$ of the Poincaré
superalgebra, each one determined by a nonzero decomposable
$\varphi\in\Lambda^4V$ which can be either spacelike, timelike or
lightlike.  Moreover $\fh_0 = \fh_{\varphi} = \fso(V)\cap\stab(\varphi)$ and
\begin{equation*}
  H^{2,2}(\fm,\fh)^{\fh_0} =
  \frac{\left\{\beta^\varphi+\gamma^\varphi+\partial\tilde\alpha
      \middle | \tilde\alpha:V\to
      \fh_0\right\}}{\left\{\partial\tilde\alpha\,|\,\tilde\alpha:V\to
      \fh_0\right\}} \cong \left\{\beta^\varphi+\gamma^\varphi \middle
  | \varphi \in \left(\Lambda^4V\right)^{\fh_0}\right\}~.
\end{equation*}
In other words the action of $\mu$ on $V\otimes S$ and $\odot ^2S$ is
given by $\beta^\varphi$ and $\gamma^\varphi$ (recall that
$\gamma^\varphi:\odot^2 S\to\fso(V)$ already takes values in
$\fh_0$ when $\varphi$ is decomposable) and one can always assume
$\alpha=\mu|_{\Lambda^2 V}=0$ without loss of generality.

Let us now introduce a formal parameter $t$ to keep track of the order of
the deformation. The original graded Lie superalgebra structure has
order $t^0$ and the infinitesimal deformation has order $t$. We will
now show that, to first order in $t$, the filtered Lie superalgebra
structure on $\fh$ is given by
\begin{equation*}
  \begin{aligned}[m]
    [v_1,v_2] &= 0\\
    [v,s] &= t \beta^\varphi_v(s) = t (v \cdot \varphi - 3 \varphi \cdot v) \cdot s\\
    [s_1,s_2] &= [s_1,s_2]_0 + t \gamma^\varphi(s_1,s_2)~,
  \end{aligned}
\end{equation*}
where $[s_1,s_2]_0$ denotes the original Lie bracket defined by
equation~\eqref{eq:DiracCurrent} and the brackets involving
$\fh_0$ are unchanged. In particular we set $\rho=\mu|_{\fh_0\otimes
  V}:\fh_0\otimes V\to \fh_0$ to be zero.

We now check that all the Jacobi identities are satisfied to first order in $t$.
For example, the identity
\begin{equation*}
  [\lambda, [v,s]] = [[\lambda,v],s] + [v,[\lambda,s]]~,
\end{equation*}
for $\lambda \in \fh_0$, $v\in V$ and $s \in S$, is equivalent to the
$\lambda$-equivariance of the $[v,s]$ bracket and it is indeed
satisfied: this bracket is not zero but depends on $\varphi$ which is
left invariant by $\lambda$.  To go through all the identities
systematically, we use the notation $[ijk]$ for $i,j,k=0,1,2$ to
denote the identity involving $X \in \fh_{-i}$, $Y \in \fh_{-j}$ and
$Z \in \fh_{-k}$:
\begin{itemize}
\item the $[000]$ Jacobi identity is satisfied by virtue of
  $\fh_0=\fh_\varphi$ being a Lie algebra;
\item the $[001]$ and $[002]$ Jacobi identities are satisfied because
  $S$ and $V$ are $\fh_\varphi$-modules (by restriction);
\item the $[011]$, $[012]$ and $[022]$ Jacobi identities are
  satisfied because the $[SS]$, $[SV]$ Lie brackets are
  $\fh_\varphi$-equivariant;
\item the $[112]$ and $[111]$ Jacobi identities are satisfied by
  virtue of the first and second cocycle conditions \eqref{eq:cc1} and
  \eqref{eq:cc2}, respectively;
\item the $[122]$ and $[222]$ Jacobi identities are trivially
  satisfied to first order in $t$.
\end{itemize}

\subsection{All-orders integrability of the deformation}
\label{sec:all-order-integr}

Although the $[122]$ Jacobi identity is satisfied to first order in
$t$, it experiences an obstruction at second order.  Indeed, for all
$s\in S$ and $v_1,v_2 \in V$, the $[122]$ Jacobi identity is
\begin{equation*}
  [[v_1,v_2],s] \stackrel{?}{=} [v_1,[v_2,s]] - [v_2,[v_1,s]] = t^2
 [\beta^\varphi_{v_1},\beta^\varphi_{v_2}](s)~.
\end{equation*}
One can check that $\beta^\varphi_{v_1}\beta^\varphi_{v_2} \neq
\beta^\varphi_{v_2}\beta^\varphi_{v_1}$ in general, so that we need to cancel this
by modifying the $[v_1,v_2]$ bracket.  The following lemma suggests
how to do this.

\begin{lemma}
  \label{lem:betabeta}
  For all $v,w\in V$, the commutator $[\beta^\varphi_v,\beta^\varphi_w]$ lies in the image of
  $\fh_\varphi$ in $\End(S)$.
\end{lemma}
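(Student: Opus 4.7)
The plan is a direct computation in the Clifford algebra $\Cl(V)$, exploiting the fact that by Proposition~\ref{thm:second} we may assume $\varphi\in\Lambda^4 V$ is nonzero and decomposable. From the Remark following Proposition~\ref{prop:beta} one has $\beta^\varphi_v = -2 v\wedge\varphi - 4\iota_v\varphi$, which sits inside $\Lambda^3 V \oplus \Lambda^5 V$ under the decomposition $\End(S)\cong\bigoplus_{p=0}^5 \Lambda^p V$ used throughout the paper. The goal is thus twofold: first to show that $[\beta^\varphi_v,\beta^\varphi_w]$ lands in the $\Lambda^2 V$ summand (which is the image of $\fso(V)$), and second to show that the resulting element of $\fso(V)$ actually annihilates $\varphi$, so as to lie in $\fh_\varphi$.

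The first step is to constrain the possible components of the commutator using the anti-involution $\sigma$ of $\Cl(V)$. Since $\sigma$ acts as $+1$ on $\Lambda^{0,3,4}V$ and as $-1$ on $\Lambda^{1,2,5}V$ and satisfies $\sigma([A,B]) = -[\sigma A,\sigma B]$, splitting $\beta^\varphi_v = B_3(v)+B_5(v)$ according to form-degree yields that $[B_3(v),B_3(w)]$ and $[B_5(v),B_5(w)]$ lie in $\Lambda^{1,2,5}V$, whereas the mixed brackets $[B_3(v),B_5(w)]$ and $[B_5(v),B_3(w)]$ lie in $\Lambda^{0,3,4}V$. One then invokes the decomposability of $\varphi$---in the form of the Plücker relation~\eqref{eq:plueckerGH} or, equivalently, the first two quadrics of Lemma~\ref{lem:quadrics}---to show that the mixed pieces vanish identically and that the pure pieces contribute only to $\Lambda^2 V$.

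The bookkeeping becomes very manageable via a case split on the causal type of $\varphi$, i.e.\ on the three $\CSO(V)$-orbits of decomposable forms identified before Lemma~\ref{lem:varphi}. In the nondegenerate cases the orthogonal decomposition $V = \Pi\oplus\Pi^\perp$ gives $\varphi^2\in\RR$, and splitting $v=v_\top+v_\perp$ the relevant Clifford products collapse because $\iota_{v_\perp}\varphi=0$ and $v_\top\wedge\varphi=0$: for instance, if $v,w\in\Pi^\perp$ a short computation gives $[\beta^\varphi_v,\beta^\varphi_w] = 8\varphi^2\,v\wedge w \in \Lambda^2\Pi^\perp$, the case $v,w\in\Pi$ produces an element of $\Lambda^2\Pi$, and the mixed case $v\in\Pi$, $w\in\Pi^\perp$ yields $0$. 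The lightlike case $\varphi = \be_+\wedge f$ is handled by refining to $V = \RR\langle\be_+,\be_-\rangle\oplus W$ and using $\be_+^2=0=\varphi^2$, with the analogous null-rotation structure of $\fh_\varphi$ from Section~\ref{sec:defnontrivial}. The principal technical obstacle is the careful tracking of all $\Lambda^p V$ components in the Clifford products; once the commutator is expressed as an explicit element of $\Lambda^2 V$, the verification that it stabilises $\varphi$ is immediate, since in each case the result visibly belongs to $\fso(\Pi)\oplus\fso(\Pi^\perp)$ or its contraction, which is precisely $\fh_\varphi$.
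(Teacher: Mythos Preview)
Your proposal is correct and lands on essentially the same approach as the paper: a case split on the causal type of $\varphi$ (nondegenerate versus degenerate $\Pi$) followed by a direct commutator computation in $\Cl(V)$, ending with the observation that the result visibly lies in $\fso(\Pi)\oplus\fso(\Pi^\perp)$ or its null contraction.

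The differences are in packaging rather than substance. Your preliminary $\sigma$-eigenspace analysis is correct and is a pleasant way to organise which form-degrees can appear, but the paper dispenses with it entirely. More to the point, the paper works with the Clifford-product form $\beta^\varphi_v = v\cdot\varphi - 3\varphi\cdot v$ rather than the expanded $-2v\wedge\varphi - 4\iota_v\varphi$; this is slicker for the commutator because the relations $v_\top\cdot\varphi = -\varphi\cdot v_\top$ and $v_\perp\cdot\varphi = \varphi\cdot v_\perp$ give $\beta^\varphi_v = (4v_\top - 2v_\perp)\cdot\varphi$ at once, and the commutator collapses to $-16[v_\top,w_\top]\varphi^2 + 4[v_\perp,w_\perp]\varphi^2$ with almost no work, since $\varphi^2$ is central. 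Your intermediate claim that Pl\"ucker alone forces the mixed $[B_3,B_5]$ pieces to vanish and the pure pieces into $\Lambda^2V$ is plausible but not actually argued; since you immediately fall back to the case split, this is harmless, but you should be aware that the paper never needs Pl\"ucker here---decomposability enters only through the existence of the $\Pi$-splitting and the scalar (possibly zero) value of $\varphi^2$.
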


\begin{proof}
  There are three cases to consider, depending on whether $\varphi$ is
  timelike, spacelike or lightlike.  In all cases,
  $\varphi^2=\varphi\cdot\varphi\in\End(S)$ is a scalar multiple of
  the identity: positive if $\varphi$ is spacelike, negative if
  $\varphi$ is timelike and zero if $\varphi$ is lightlike.  In the
  first two cases, the $4$-plane $\Pi \subset V$ determined by
  $\varphi$ is nondegenerate and we may decompose
  $V = \Pi \oplus \Pi^\perp$.  We tackle these cases first and then
  finally the case where $\Pi$ is degenerate.

  \emph{$\Pi$ is nondegenerate.}  In this case $\varphi^2$ is a nonzero
  multiple of the identity.  If $v \in \Pi$, then
  $v \cdot \varphi = -\varphi \cdot v$ and $\beta^\varphi_v = 4 v \cdot \varphi$, whereas if
  $v \in \Pi^\perp$, then $v \cdot \varphi = \varphi \cdot v$ and $\beta^\varphi_v = -2
  v\cdot \varphi$.  In general, we can decompose any $v \in V$ as $v =
  v_\top + v_\perp$ with $v_\top \in \Pi$ and $v_\perp \in \Pi^\perp$,
  and $\beta^\varphi_v = 4 v_\top \cdot \varphi - 2 v_\perp \cdot \varphi$.  The
  commutator is given by
  \begin{equation*}
    \begin{split}
      [\beta^\varphi_v, \beta^\varphi_w] &= [4 v_\top \cdot \varphi - 2 v_\perp \cdot \varphi, 4
      w_\top \cdot \varphi - 2 w_\perp \cdot \varphi]\\
      &= 16 [v_\top \cdot \varphi, w_\top \cdot \varphi] + 4 [v_\perp \cdot \varphi,
      w_\perp \cdot \varphi] - 8 [v_\top \cdot \varphi, w_\perp \cdot \varphi] - 8
      [v_\perp \cdot \varphi, w_\top \cdot \varphi]\\
      &= - 16 [v_\top,w_\top] \cdot \varphi^2 + 4 [v_\perp,w_\perp] \cdot
      \varphi^2 +16 \eta(v_\top, w_\perp) \varphi^2 -16 \eta(v_\perp, w_\top) \varphi^2\\
      &= - 16 [v_\top,w_\top] \cdot \varphi^2 + 4 [v_\perp,w_\perp] \cdot
      \varphi^2~.
    \end{split}
  \end{equation*}
  Notice that $\varphi^2$ is a (nonzero) scalar endomorphism, so that
  $[\beta^\varphi_v,\beta^\varphi_w]$ lies in the image of $\fso(V)$ in $\Cl(V)$.
  Moreover, both $ [v_\top,w_\top]$ and $[v_\perp,w_\perp]$ commute
  with $\varphi$ in $\Cl(V)$, whence $[\beta^\varphi_v,\beta^\varphi_w]$ also lies in the
  image of $\stab(\varphi)$.

  \emph{$\Pi$ is degenerate.}  In this case, $V =
  \RR\left<\be_+,\be_-\right> \oplus \pi \oplus \pi^\perp$ and thus
  any $v \in V$ admits a unique decomposition $v = v_+ + v_- + v_\top
  + v_\perp$, where $v_\pm \in \RR\be_\pm$, $v_\top \in \pi$ and
  $v_\perp \in \pi^\perp$.  Now we still have that $v_\top \cdot
  \varphi = - \varphi \cdot v_\top$, $v_\perp \cdot \varphi = \varphi
  \cdot v_\perp$, but also $v_+ \cdot \varphi = \varphi \cdot v_+ = 0$
  and $v_- \cdot \varphi - \varphi \cdot v_- = - 2 \eta(v_-,\be_+) f$.
  Let us abbreviate $\eta(v,\be_+)$ by $\theta(v_-)$, so that $\varphi
  \cdot v_- \cdot \varphi = -2\theta(v_-) \Gamma_+ \cdot f^2$ and
  notice that $\beta^\varphi_v = 4  v_\top \cdot \varphi - 2 v_\perp
  \cdot \varphi - 2 v_- \cdot \varphi - 6 \theta(v_-) f$.  We now
  calculate (omitting the $\cdot$ notation):
  \begin{equation*}
    \begin{split}
      \tfrac14 [\beta^\varphi_v,\beta^\varphi_w] &= [2 v_\top \varphi - v_\perp 
      \varphi - v_- \varphi - 3\theta(v_-) f, 2 w_\top \varphi - w_\perp 
      \varphi - w_- \varphi - 3\theta(w_-) f]\\
      &= 4 [v_\top \varphi, w_\top \varphi] - 2 [v_\top \varphi, w_\perp \varphi] - 2 [v_\top
      \varphi, w_- \varphi] - 6 \theta(w_-) [v_\top \varphi, f] \\
      & \quad {} -2 [v_\perp \varphi, w_\top \varphi] + [v_\perp \varphi, w_\perp \varphi] +
      [v_\perp \varphi, w_- \varphi] + 3\theta(w_-) [v_\perp \varphi, f]\\
      & \quad {} -2 [v_- \varphi, w_\top \varphi] + [v_- \varphi, w_\perp \varphi] +
      [v_- \varphi, w_- \varphi] + 3\theta(w_-) [v_- \varphi, f]\\
      & \quad {} -6 \theta(v_-) [f,w_\top \varphi] + 3 \theta(v_-)
      [f,w_\perp \varphi] + 3 \theta(v_-) [f, w_- \varphi]~.
    \end{split}
  \end{equation*}
  Many of these terms vanish, namely:
  \begin{equation*}
    [v_\top \varphi, w_\top \varphi] = [v_\perp \varphi, w_\perp \varphi] = [v_\top \varphi, w_\perp
    \varphi]  = 0~, \qquad [v_\perp \varphi, f] = [v_- \varphi, f] = 0~,
  \end{equation*}
  whereas we have that
  \begin{equation*}
    \begin{aligned}[m]
      [v_\top \varphi, w_- \varphi] &= - 2 \theta(w_-) v_\top \Gamma_+ f^2\\
      [v_\top \varphi, f ] &= 2 v_\top \Gamma_+ f^2\\
      [v_\perp \varphi, w_- \varphi] &= -2 \theta(w_-) v_\perp \Gamma_+ f^2\\
      [v_- \varphi, w_- \varphi] &= 2 (\theta(v_-) w_- - \theta(w_-) v_-) \Gamma_+ f^2~.
    \end{aligned}
  \end{equation*}
  Putting it all together we arrive at
  \begin{equation*}
    \begin{split}
      \tfrac14 [\beta^\varphi_v,\beta^\varphi_w] &= \theta(v_-) (8
      w_\top + 2 w_\perp) \Gamma_+ f^2 - \theta(w_-) (8 v_\top + 2
      v_\perp) \Gamma_+ f^2 + (\theta(v_-) w_- - \theta(w_-) v_-)
      \Gamma_+ f^2\\
      &=\theta(v_-) (8 w_\top + 2 w_\perp) \Gamma_+ f^2 - \theta(w_-)
      (8 v_\top + 2 v_\perp) \Gamma_+ f^2~,
    \end{split}
  \end{equation*}
  where the last equality follows from the fact that both $v_-$ and
  $w_-$ are proportional to the vector $e_-$.  Since $f^2$ is a
  nonzero scalar multiple of the identity, we see that both terms in
  the RHS are in the image of $\fso(V)$ in $\Cl(V)$ and clearly also
  in the image of $\stab(\varphi)$, due to the presence of the
  $\Gamma_+$.
\end{proof}

Let us then define $\delta: \Lambda^2 V \to \fh_\varphi$ by
\begin{equation}
  \label{eq:delta}
  [\delta(v_1,v_2),s] = [\beta^\varphi_{v_1},\beta^\varphi_{v_2}](s)~,
\end{equation}
for all $s \in S$ and modify the $[VV]$ Lie bracket as
\begin{equation*}
  [v_1,v_2] = t^2 \delta(v_1,v_2)
\end{equation*}
so that the $[122]$ Jacobi identity is now satisfied to order $t^2$.
More is true, however, and all Jacobi identities are now satisfied for
all $t$.

We may summarise our results as follows:

\begin{theorem}
\label{thm:final}
  Let $\varphi \in \Lambda^4 V$ be decomposable and let
  $F = F_{\overline 0} \oplus F_{\overline 1}$ be a $\mathbb Z_2$-graded
  vector space with $F_{\overline 0} =V\oplus \fh_\varphi$ and
  $F_{\overline 1} = S$, where $\fh_\varphi = \fso(V)\cap\stab(\varphi)$.  The
  Lie brackets
  \begin{equation*}
    [v_1,v_2] = t^2 \delta(v_1,v_2)~, \qquad [v,s] = t \beta^\varphi_v(s)~, \qquad
    [s_1,s_2] = [s_1,s_2]_0 + t \gamma^\varphi(s_1,s_2)~,
  \end{equation*}
  with $\beta^\varphi$, $\gamma^\varphi$ and $\delta$ given by
  equations~\eqref{eq:beta} and \eqref{eq:delta}, together with the
  Dirac current $[s_1,s_2]_0$ as in \eqref{eq:DiracCurrent} and the
  adjoint action of $\fh_\varphi$ on itself and its actions on $S$ and
  $V$ given by restricting the spinor and vector representations of
  $\fso(V)$, respectively, define on $F$ a structure of a Lie
  superalgebra for all $t$.
	
  Moreover any filtered deformation of a $\mathbb Z$-graded subalgebra
  of the Poincaré superalgebra which differs only in zero degree is of
  this form.
\end{theorem}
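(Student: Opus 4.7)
The plan is to establish the two assertions of Theorem~\ref{thm:final} separately. The first (integrability) is an explicit verification that the proposed brackets satisfy all super-Jacobi identities on $F = V \oplus S \oplus \fh_\varphi$ for every $t \in \RR$. The second (classification) is essentially a bookkeeping exercise, combining Theorem~\ref{thm:first}, Propositions~\ref{prop:cohgroups} and \ref{thm:second}, and Lemma~\ref{lem:varphi}.

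For integrability, I would organise the Jacobi identities by the tri-index $[ijk]$ with $i,j,k \in \{0,1,2\}$, referring to triples of elements chosen one each from $\fh_{-i}, \fh_{-j}, \fh_{-k}$. The identities $[000]$, $[00k]$ and $[0jk]$ reduce to three facts: $\fh_\varphi$ is a Lie subalgebra of $\fso(V)$; $S$ and $V$ are $\fh_\varphi$-modules by restriction; and $\beta^\varphi$, $\gamma^\varphi$ and $\delta$ are $\fh_\varphi$-equivariant as a consequence of the $\fh_\varphi$-invariance of $\varphi$ together with the explicit formulas \eqref{eq:beta} and \eqref{eq:delta}. The identities $[111]$ and $[112]$ are precisely the cocycle conditions \eqref{eq:cc2} and \eqref{eq:cc1}, which $(\beta^\varphi,\gamma^\varphi)$ satisfy by the construction of Proposition~\ref{prop:beta}. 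The $[122]$ identity holds because $\delta$ was defined by \eqref{eq:delta} and $\fh_\varphi$ acts faithfully on $S$ via the spinor representation.

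The main obstacle is the $[222]$ identity, which reduces to
\begin{equation*}
  \delta(v_1,v_2) \cdot v_3 + \delta(v_2,v_3) \cdot v_1 + \delta(v_3,v_1) \cdot v_2 = 0~,
\end{equation*}
with $\delta(v,w) \in \fh_\varphi$ acting on $V$ via the vector representation of $\fso(V)$. I would verify this by inserting the explicit formula for $\delta$ extracted from the proof of Lemma~\ref{lem:betabeta}. In the nondegenerate case, $\delta(v,w)$ is, up to a nonzero scalar proportional to $\varphi^2$, a linear combination of the 2-forms $v_\top \wedge w_\top$ and $v_\perp \wedge w_\perp$ associated with the orthogonal splitting $V = \Pi \oplus \Pi^\perp$; since $\eta$ is block-diagonal in this splitting, expanding $(a \wedge b) \cdot u = \eta(b,u) a - \eta(a,u) b$ and summing cyclically collapses the expression to zero by symmetry of $\eta$. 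The lightlike case is handled by the same argument, starting from the closed form for $\delta$ in the Witt decomposition $V = \RR\langle \be_+,\be_- \rangle \oplus \pi \oplus \pi^\perp$ displayed at the end of the proof of Lemma~\ref{lem:betabeta}.

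For the classification, let $F$ be a filtered deformation of some $\mathbb Z$-graded subalgebra $\fh \subset \fg$ differing only in degree zero. By Theorem~\ref{thm:first}, $F$ is determined up to filtered isomorphism by the $\fh_0$-invariant class $[\mu|_{\fm\otimes\fm}] \in H^{2,2}(\fm,\fh)$. Proposition~\ref{prop:cohgroups} provides a representative of the form $\beta^\varphi + \gamma^\varphi + \partial\tilde\alpha$, and Proposition~\ref{thm:second} then forces $\varphi$ to be nonzero and decomposable, $\fh_0 = \fh_\varphi$, and $\Im(\tilde\alpha) \subset \fh_\varphi$. But then $\tilde\alpha \in \Hom(V, \fh_0) = C^{2,1}(\fm,\fh)$, so $\partial\tilde\alpha$ is a coboundary in $C^{\bullet,\bullet}(\fm,\fh)$; replacing the representative by $\beta^\varphi + \gamma^\varphi$ we may take $\alpha = 0$, $\beta = \beta^\varphi$ and $\gamma = \gamma^\varphi$. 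By Lemma~\ref{lem:varphi} only the $\CSO(V)$-orbit of $\varphi$ matters, leaving just the spacelike, timelike and lightlike cases. Finally, the $[122]$ Jacobi identity forces $\delta$ to satisfy \eqref{eq:delta}, which pins it down uniquely since $\fh_\varphi$ acts faithfully on $S$; thus $F$ carries exactly the Lie superalgebra structure described in the first part of the theorem.
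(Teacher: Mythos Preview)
Your overall strategy matches the paper's, and the classification half is fine. The integrability half, however, has a genuine gap: you have not accounted for the order-$t^2$ part of the $[112]$ Jacobi identity. The cocycle condition~\eqref{eq:cc1} is only the order-$t$ component of $[112]$. Once $\delta$ has been introduced and $[V,V]=t^2\delta$, the identity $[[s,s],v]=2[s,[v,s]]$ acquires a new $t^2$ piece,
\[
  \delta\bigl(v,[s,s]_0\bigr) \;=\; 2\,\gamma^\varphi\!\bigl(\beta^\varphi_v(s),s\bigr)\qquad\text{for all }s\in S,\ v\in V,
\]
which is precisely equation~\eqref{eq:SSV} in the paper and is \emph{not} a consequence of the cocycle conditions or of the definition~\eqref{eq:delta} of $\delta$. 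The paper treats this as one of the two identities that ``remain to be checked'' and verifies it by an explicit Clifford-algebra computation, separately in the nondegenerate and degenerate cases, exactly as for~\eqref{eq:VVV}. Without this check your proof does not establish that $F$ is a Lie superalgebra.

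The fix is straightforward and in the same spirit as your treatment of $[222]$: insert the closed forms for $\delta$ from Lemma~\ref{lem:betabeta} and for $\gamma^\varphi$ from Proposition~\ref{prop:beta}, and compare both sides after pairing with test vectors. But you should flag the identity explicitly and carry out (or at least sketch) the computation; as written, your claim that ``$[112]$ is precisely the cocycle condition~\eqref{eq:cc1}'' is false at order $t^2$.
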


\begin{proof}
  Two Jacobi components remain to be checked: the $[112]$ component,
  which is equivalent to
  \begin{equation}
    \label{eq:SSV}
    \delta(v,[s,s]_0)\stackrel{?}{=}
    2\gamma^\varphi(\beta^\varphi_v(s),s) \qquad \forall s\in S, v \in
    V~,
  \end{equation}
  and the $[222]$ component, which is equivalent to
  \begin{equation}
    \label{eq:VVV}
    [\delta(v,w),u] + [\delta(w,u),v] + [\delta(u,v),w]
    \stackrel{?}{=} 0 \qquad \forall u,v,w \in V~.
  \end{equation}

  As in the proof of Lemma~\ref{lem:betabeta}, we prove the identities
  \eqref{eq:SSV} and \eqref{eq:VVV} by calculating in $\Cl(V)$ and
  breaking the calculation into two cases, according to whether or not
  the plane associated with $\varphi$ is degenerate.  In proving
  identity~\eqref{eq:SSV} we will use the abbreviation $z := [s,s]_0$.

  \emph{$\Pi$ is nondegenerate.}  In the proof of
  Lemma~\ref{lem:betabeta} we derived the expression
  \begin{equation}
    \label{eq:deltaNonDeg}
    \delta(v,w) = -16 [v_\top, w_\top] \cdot \varphi^2 + 4 [v_\perp,
    w_\perp] \cdot \varphi^2~,
  \end{equation}
  for any $v,w \in V$. Let us calculate, for $z=[s,s]_0$,
  \begin{equation*}
    \eta([\delta(v,z),w],u) = -16
    \eta([[v_\top,z_\top],w_\top],u_\top)+ 4
    \eta([[v_\perp,z_\perp],w_\perp],u_\perp)~.
  \end{equation*}
  Using that
  \begin{equation}
    \label{eq:nestedcomm}
    [[v,z],w]= 4 \eta(v,w) z - 4 \eta(z,w) v~,
  \end{equation}
  we can write
  \begin{multline*}
    \eta([\delta(v,z),w],u) = - 64 \eta(v_\top, w_\top) \eta(z_\top,
    u_\top) + 64 \eta(z_\top, w_\top) \eta(v_\top, u_\top)\\
    + 16 \eta(v_\perp, w_\perp) \eta(z_\perp, u_\perp) - 16
    \eta(z_\perp, w_\perp) \eta(v_\perp, u_\perp)~.
  \end{multline*}
  On the other hand,
  \begin{equation*}
   \eta (2[\gamma^\varphi(\beta^\varphi_v(s),s),w],u) = -2 \left<u
      \cdot \beta^\varphi_w \cdot \beta^\varphi_v \cdot s, s\right> -
    2 \left<\sigma(\beta^\varphi_v) \cdot u \cdot \beta^\varphi_w
      \cdot s, s\right>~,
  \end{equation*}
  where $\sigma$ is the anti-involution in $\Cl(V)$ defined by the
  symplectic structure.  One calculates in $\Cl(V)$ to find
  \begin{equation*}
    \sigma(\beta^\varphi_v) \cdot u \cdot \beta^\varphi_w = 4 (2
    v_\top + v_\perp)(u_\top-u_\perp)(2w_\top+w_\perp)~,
  \end{equation*}
  and
  \begin{equation*}
    u \cdot \beta^\varphi_w \cdot \beta^\varphi_v  = 4 (u_\top +
    u_\perp) (w_\perp v_\perp + 2 w_\perp v_\top - 2 w_\top v_\perp -
    4 w_\top v_\top)~,
  \end{equation*}
  and hence
  \begin{equation*}
    \eta (2[\gamma^\varphi(\beta^\varphi_v(s),s),w],u) = -16 \left<s,
      [[u_\top, w_\top], v_\top] \cdot s\right>+ 4 \left<s,
      [[u_\perp, w_\perp], v_\perp] \cdot s\right>~.
  \end{equation*}
  We use equation~\eqref{eq:nestedcomm} again to arrive at
  \begin{multline*}
    \eta (2[\gamma^\varphi(\beta^\varphi_v(s),s),w],u) = -64
    \eta(u_\top, v_\top) \left<s, w_\top \cdot s\right> + 64
    \eta(w_\top, v_\top) \left<s, u_\top \cdot s\right>\\ + 16
    \eta(u_\perp, v_\perp) \left<s, w_\perp \cdot s\right> - 16
    \eta(w_\perp, v_\perp) \left<s, u_\perp \cdot s\right>~,
  \end{multline*}
  which agrees with $\eta([\delta(v,z),w],u)$ after using the
  definition of the Dirac current.

  To prove the identity~\eqref{eq:VVV}, we again depart from the
  expression~\eqref{eq:deltaNonDeg} for $\delta(v,w)$, so that in
  $\Cl(V)$,
  \begin{equation*}
    \begin{split}
      [\delta(v,w), u ] &= [-16 [v_\top, w_\top] \cdot \varphi^2 + 4
      [v_\perp, w_\perp] \cdot \varphi^2, u_\top + u_\perp]\\
      &= -16 [[v_\top, w_\top], u_\top] \cdot \varphi^2 + 4 [[v_\perp,
      w_\perp], u_\perp] \cdot \varphi^2~,
    \end{split}
  \end{equation*}
  using that $\varphi^2$ is central and the fact that
  $[v_\top,w_\top] \in \fso(\Pi)$ (resp. $[v_\perp,w_\perp] \in
  \fso(\Pi^\perp)$) acts trivially on $\Pi^\perp$ (resp. $\Pi$).  It
  is clear that the $[222]$ Jacobi identity follows in this case from
  the Jacobi identity of the commutator in the associative algebra
  $\Cl(V)$.
  
  \emph{$\Pi$ is degenerate.}  This case is computationally more
  involved, but it is again simply a calculation in $\Cl(V)$.  Let us
  prove first the identity~\eqref{eq:VVV}.  In the proof of
  Lemma~\ref{lem:betabeta} we showed that
  \begin{equation*}
    \delta(v,w) = 8 \theta(v_-) (4 w_\top + w_\perp) \cdot \varphi \cdot f -
    8 \theta(w_-) (4 v_\top + v_\perp) \cdot \varphi \cdot f~,
  \end{equation*}
  where we recall that $\theta(v_-) = \eta(v,\be_+)$.  Therefore in
  $\Cl(V)$,
  \begin{equation*}
    \begin{split}
      [\delta(v,w),u] &= [\delta(v,w), u_\top + u_\perp + u_-+ u_+]\\
      &= +8 \theta(v_-) [(4 w_\top + w_\perp) \cdot \varphi  \cdot f,
      u_\top + u_\perp + u_-]\\
      & \quad {} - 8 \theta(w_-) [(4 v_\top + v_\perp)  \cdot \varphi
      \cdot f, u_\top + u_\perp + u_-]
    \end{split}
  \end{equation*}
  where we have used that $\delta(v,w)$ leaves $\be_+$ invariant.
  Next we use the following results:
  \begin{equation*}
    \begin{aligned}[m]
      [w_\top  \cdot \varphi  \cdot f, u_\top] &= 2 \eta(w_\top, u_\top) \varphi \cdot  f\\
      [w_\perp  \cdot \varphi  \cdot f, u_\perp] &= 2 \eta(w_\perp, u_\perp) \varphi \cdot  f\\
      [w_\top \cdot  \varphi \cdot  f, u_\perp] &= 0
   \end{aligned}
    \qquad\qquad
    \begin{aligned}[m]   
      [w_\perp  \cdot \varphi  \cdot f, u_\top] &= 0\\
      [w_\top  \cdot \varphi  \cdot f, u_-] &= - 2 \theta(u_-) w_\top  \cdot f^2\\
      [w_\perp  \cdot \varphi  \cdot f, u_-] &= - 2 \theta(u_-) w_\perp \cdot  f^2~,
    \end{aligned}
  \end{equation*}
  and arrive at
  \begin{equation*}
    \begin{split}
    \tfrac18 [\delta(v,w),u] &= 8 \theta(v_-) \eta(w_\top,u_\top) \varphi \cdot f - 8
    \theta(w_-) \eta(v_\top, u_\top) \varphi \cdot f \\
    & \quad {} + 2 \theta(v_-) \eta(w_\perp,u_\perp) \varphi \cdot f - 2 \theta(w_-)
    \eta(v_\perp,u_\perp) \varphi \cdot f \\
    & \quad {} - 8 \theta(u_-) \theta(v_-) w_\top  \cdot f^2 + 8 \theta(u_-)
    \theta(w_-) v_\top  \cdot f^2\\
    & \quad {} - 2 \theta(u_-) \theta(v_-) w_\perp  \cdot f^2 + 2 \theta(u_-)
    \theta(w_-) v_\perp  \cdot f^2~,
    \end{split}
  \end{equation*}
  which when inserted in the Jacobi identity vanishes, due to the
  terms cancelling pairwise, thus proving the identity~\eqref{eq:VVV}.
  The identity~\eqref{eq:SSV} is proved in a similar way, so we will
  be brief.  We now find that the left-hand side of \eqref{eq:SSV} is
  given by
  \begin{equation*}
    -8 \theta(v_-) (4 z_\top + z_\perp) \Gamma_+  + 8 \theta(z_-) (4
    v_\top + v_\perp) \Gamma_+~,
  \end{equation*}
  and this is precisely what we obtain for the right-hand side.

  Finally, the last claim of the theorem follows from
  Theorem~\ref{thm:first}.
\end{proof}

In summary, we find three isomorphism classes of nontrivial filtered
deformations of $\ZZ$-graded subalgebras of the Poincaré superalgebra
$\fg$ which differ only in degree zero.  They are characterised by a
decomposable $\varphi \in \Lambda^4 V$.  Such a $\varphi$ defines a
stabiliser $\fh_\varphi \subset \fso(V)$ and also a filtered
deformation of the $\ZZ$-graded subalgebra
$\fh_\varphi \oplus S \oplus V \subset \fg$ given by
\begin{equation}
  \label{eq:FLSA}
  \begin{aligned}[m]
    [A,B] &= AB - BA\\
    [A,s] &= As\\
    [A,v] &= Av
  \end{aligned}
  \qquad\qquad
  \begin{aligned}[m]
    [s,s] &= [s,s]_0 + t \gamma^\varphi(s,s)\\
    [v,s] &= t \beta^\varphi(v,s)\\
    [v,w] &= t^2 \delta(v,w)~,
  \end{aligned}
\end{equation}
for all $A,B \in \fh_\varphi$, $s \in S$ and $v,w \in V$, and where
the maps $\beta^\varphi : V \otimes S \to S$ and $\gamma^\varphi :
\odot^2 S \to \fh_\varphi$ are as in equation~\eqref{eq:beta} and $\delta:
\Lambda^2 V \to \fh_\varphi$ is as in equation~\eqref{eq:delta}.

By Lemma~\ref{lem:varphi}, $\CSO(V)$-related $\varphi$'s give rise
to isomorphic filtered deformations, so it is enough to choose a
representative $\varphi$ from each orbit.  A possible choice is the
following:
\begin{enumerate}
\item $\varphi = \be_0 \wedge \be_1 \wedge \be_2 \wedge \be_3$, where
  $(\be_\mu)$ is an $\eta$-orthonormal basis for $V$.  The stabiliser
  is $\fh_\varphi \cong \fso(1,3) \oplus \fso(7)$.  The Lie brackets
  on $\fh_\varphi \oplus V$ can be read from
  Lemma~\ref{lem:betabeta}, and we find that they give rise to a Lie
  algebra isomorphic to $\fso(2,3) \oplus \fso(8)$.  This is the Lie
  algebra of isometries of the Freund--Rubin backgrounds $\AdS_4 \times
  S^7$.  The resulting Lie superalgebra on $\fh_\varphi \oplus S
  \oplus V$ is isomorphic to the Killing superalgebra of this family
  of backgrounds; namely, $\fosp(8|4)$.
\item $\varphi = \be_7 \wedge \be_8 \wedge \be_9 \wedge  \be_\ten$,
  where again $(\be_\mu)$ is an $\eta$-orthonormal basis for $V$.  The
  stabiliser is $\fh_\varphi \cong \fso(4) \oplus \fso(1,6)$.  The Lie
  brackets on $\fh_\varphi \oplus V$ are isomorphic to $\fso(5) \oplus
  \fso(2,6)$, which is the isometry Lie algebra of the Freund--Rubin
  backgrounds $S^4 \times \AdS_7$.  The resulting filtered deformation
  is isomorphic to the Lie superalgebra $\fosp(2,6|4)$, which is the
  Killing superalgebra of this family of backgrounds.
\item $\varphi = \be_+ \wedge \be_1 \wedge \be_2 \wedge \be_3$,
  relative to an $\eta$-Witt basis $(\be_+,\be_-,\be_i)$ for $V$.  The
  stabiliser is $\fh_\varphi = (\fso(3) \oplus \fso(6)) \ltimes \RR^9$
  and the Lie brackets on $\fh_\varphi \oplus V$ give it the structure
  of a Lie algebra isomorphic to the isometry Lie algebra of the
  Cahen--Wallach spacetime underlying the Kowalski-Glikman pp-wave.
  The resulting Lie superalgebra is isomorphic to the Killing
  superalgebra of the Kowalski-Glikman wave, which is itself a
  contraction (in the sense of Inönü--Wigner) of both of the
  Freund--Rubin Killing superalgebras.
\end{enumerate}

In summary, we recover the classification of maximally supersymmetric
vacua of 11-di\-men\-sional supergravity via their Killing
superalgebras.

\section{Discussion and conclusion}
\label{sec:disc-concl}

In this paper we have determined the (isomorphism classes of) Lie
superalgebras which are filtered deformations of $\mathbb Z$-graded
subalgebras $\fh = V\oplus S\oplus \fh_0$, with
$\fh_0 \subset \fso(V)$, of the eleven-di\-men\-sional Poincaré
superalgebra. We have found that aside from the Poincaré superalgebra
itself ($\fh_0 = \fso(V)$) and its $\mathbb Z$-graded subalgebras,
there are three other Lie superalgebras corresponding to the symmetry
superalgebras of the non-flat maximally supersymmetric backgrounds of
eleven-di\-men\-sional supergravity: the two (families of) Freund--Rubin
backgrounds and their common Penrose limit.

In so doing we have recovered by cohomological means the connection
$D$ on the spinor bundle which is defined by the supersymmetry
variation of the gravitino.  We could say that we have, in a very real
sense, rediscovered eleven-di\-men\-sional supergravity from the Spencer
cohomology of the Poincaré superalgebra.

More remarkable still is perhaps the fact that the classification of
nontrivial filtered deformations of subalgebras of the Poincaré
superalgebra precisely agrees with the classification of Killing
superalgebras of non-flat maximally supersymmetric backgrounds of
eleven-di\-men\-sional supergravity.  To be clear, what is remarkable is
not that we recover these Killing superalgebras -- after all, it can
be shown in full generality that the symmetry superalgebra of a
supersymmetric background is a filtered deformation of some subalgebra
of the Poincaré superalgebra --- but that we find \emph{no other}
filtered deformations.  We interpret this as encouraging evidence as
to the usefulness of both the notions of super Poincaré structures and
of symmetry superalgebras as organisational tools in the
classification problem of supersymmetric supergravity
backgrounds.

An interesting question is whether every filtered deformation of a
subalgebra of the Poincaré superalgebra is geometrically realised as
the Killing superalgebra of a supersymmetric background.  First of
all, as shown by the (undeformed) subalgebras of the Poincaré
superalgebra, these are only contained in the maximal such
superalgebra (namely, the Poincaré superalgebra itself).  This is not
surprising since it is only the supertranslation ideal which is
actually generated by the Killing spinors.  More worrying, though, are
examples of filtered deformations which are not yet known to be
realised geometrically (such as the deformation of the M2-brane
Killing superalgebra found in \cite{JMFSuperDeform}, which suggests
very strongly the existence of a half-BPS black anti-de~Sitter
membrane, whose construction continues to elude us), or those such as
the putative $N=28$ pp-wave conjectured in \cite{Fernando:2004jt} and
which was shown in \cite{Gran:2010tj} not to exist.

Before concluding, we would also like to mention an interesting
relation with the off-shell pure spinor superfield formulation of
eleven-dimensional supergravity (see, e.g., the review
\cite{Cederwall:2013vba} and references therein). The starting point
of the pure spinor approach is the observation that the bosonic
equations of motion of eleven-di\-men\-sional supergravity reside in
the direct sum of irreducible $\fso(V)$-modules with Dynkin labels
$[11000]$ and, respectively, $[10002]$,
cf.~\cite[equation~(4.14)]{Cederwall:2013vba}. Now pure spinors are
the Dirac spinors $s\in S \otimes \CC$ with vanishing Dirac current 
$k(s,s)$. For them, the associated supercharge $Q$ satisfies $Q^2=0$
and one can see that the cohomology of $Q$ encodes the (linearised)
equations of motion.

In our approach one can check that the Spencer cohomology group
$H^{0,2}(\fm,\fg)$ of the Poincaré superalgebra $\fg$ is isomorphic
to $[11000]\oplus[10002]$, i.e., it encodes the equations of motion.
This fact suggests the possibility of modifying the definition of a
super Poincaré structure $(M,\mathcal D)$ as a Tanaka structure
whose ``symbol space'' $\fm(x)$ has been deformed along directions in
$H^{0,2}(\fm,\fg)$. It might be interesting to investigate these more
general Tanaka structures and understand differences and similarities
with the pure spinor approach, also in view of possible applications
to the construction of off-shell formulations of supergravity
theories.

\section*{Acknowledgments}

The research of JMF is supported in part by two grants from the UK
Science and Technology Facilities Council (ST/J000329/1 and
ST/L000458/1), and that of AS is fully supported by a Marie-Curie
research fellowship of the ''Istituto Nazionale di Alta Matematica''
(Italy). We are grateful to our respective funding agencies for their
support. During the final stretch of writing, JMF was a guest of the
Grupo de Investigación ``Geometría diferencial y convexa'' of the
Universidad de Murcia, and he wishes to thank Ángel Ferrández
Izquierdo for the invitation, the hospitality and for providing such a
pleasant working atmosphere.

\appendix

\section{Clifford conventions}
\label{sec:clifford-conventions}

The proofs of a couple of results are easier if we work relative to a
basis for the Clifford algebra.  In this appendix we set out the
conventions which will be employed in this paper, especially to prove
Proposition~\ref{prop:beta} and Lemma~\ref{lem:quadrics}.

We start with some properties of the Clifford algebra associated to an
eleven-dimensional Lorentzian vector space $(V,\eta)$ with ``mostly
minus'' signature.  The Clifford algebra $\Cl(V)$, with relations
\begin{equation*}
  v^2 = - \eta(v,v) \1 \qquad\forall v \in V~,
\end{equation*}
is isomorphic as a real associative algebra to two copies of the
algebra of real $32 \times 32$ matrices.  It follows from this
isomorphism that $\Cl(V)$ has two inequivalent irreducible Clifford
modules, which are real and of dimension $32$.

The Clifford algebra $\Cl(V)$ is filtered (and $\ZZ_2$-graded) and the
associated graded algebra is the exterior algebra $\Lambda^\bullet
V$.  An explicit vector space isomorphism $\Lambda^\bullet V
\stackrel{\cong}{\rightarrow} \Cl(V)$ can be described as follows.

Let $(\be_\mu)$, for $\mu=0,1,\dots,9,\ten$, be an $\eta$-orthonormal
basis; that is,
\begin{equation*}
  \eta(\be_\mu,\be_\nu) = \eta_{\mu\nu} =
  \begin{pmatrix}
    1 & 0 \\ 0 & -I_{10}
  \end{pmatrix}~.
\end{equation*}
The Clifford algebra $\Cl(V)$ is generated by the image of $V$
under the map $V \to \Cl(V)$ which sends $\be_\mu$ to
$\Gamma_\mu$, with
\begin{equation*}
  \Gamma_\mu \Gamma_\nu + \Gamma_\nu \Gamma_\mu = - 2 \eta_{\mu\nu} \1~.
\end{equation*}
Notice that due to our choice of a mostly minus $\eta$, $\Gamma_0^2 =
-\1$.  We use the notation $\Gamma_{\mu_1\cdots\mu_p}$ for the totally
antisymmetric product
\begin{equation*}
  \Gamma_{\mu_1\cdots\mu_p} = \Gamma_{[\mu_1} \Gamma_{\mu_2} \cdots
  \Gamma_{\mu_p]} := \tfrac1{p!} \sum_{\sigma \in S_p} (-1)^\sigma
  \Gamma_{\mu_{\sigma(1)}} \cdots \Gamma_{\mu_{\sigma(p)}}~,
\end{equation*}
with $S_p$ the symmetric group in $\{1,\dots,p\}$ and $(-1)^\sigma$
the sign of the permutation $\sigma \in S_p$.

The explicit isomorphism $\Lambda^\bullet V \to \Cl(V)$ is built out
of the maps $\Lambda^p V \to \Cl(V)$ given by sending
\begin{equation*}
  \be_{\mu_1} \wedge \dots \wedge \be_{\mu_p} \mapsto
  \Gamma_{\mu_1\dots\mu_p}
\end{equation*}
and extending linearly.  Thus an $\eta$-orthonormal basis for $V$
induces a basis for $\Cl(V)$ given by the $\Gamma_{\mu_1\cdots\mu_p}$
for $p=0,1,\dots,11$.  The volume element
$\Gamma_{11} = \Gamma_0 \Gamma_1 \cdots \Gamma_\ten$ is central in
$\Cl(V)$ and satisfies $\Gamma_{11} \Gamma_{11} = \1$.  The two
non-isomorphic irreducible Clifford modules $S_\pm$ of $\Cl(V)$ are
distinguished by the action of $\Gamma_{11}$, where $\Gamma_{11}$ acts
like $\pm \1$ on $S_\pm$.  We will work with $S = S_-$ in this paper.

Endomorphisms of $S$ can be described in terms of elements of
$\Cl(V)$.  A basis for the endomorphisms of $S$ by the image in
$\End(S)$ of $\Gamma_{\mu_1\cdots\mu_p}$ for $p=0,1,\dots,5$.  We will
often tacitly use this isomorphism
\begin{equation*}
  \End(S)\cong\bigoplus_{p=0}^5\Lambda^p V
\end{equation*}
in the paper and let $p$-forms with $p=0,\ldots,5$ act on $S$.

There is an action of $\fso(V)$ on $S$ via the embedding of $\fso(V)$ in
$\Cl(V)$.  This is described as follows.  If
$L_{\mu\nu} = - L_{\nu\mu} \in \fso(V)$ is defined by
\begin{equation*}
  L_{\mu\nu} (\be_\rho) = \eta_{\rho\nu} \be_\mu - \eta_{\rho\mu} \be_\nu~
\end{equation*}
then it is embedded in $\Cl(V)$ as
\begin{equation}
  \label{eq:spingens}
  L_{\mu\nu} \mapsto -\tfrac12 \Gamma_{\mu\nu} = -\tfrac14
  [\Gamma_\mu, \Gamma_\nu]~.
\end{equation}
Indeed, we have the following commutator in $\Cl(V)$:
\begin{equation*}
  \left[-\tfrac12 \Gamma_{\mu\nu}, \Gamma_\rho\right] = \eta_{\rho\nu}
  \Gamma_\mu - \eta_{\rho\mu} \Gamma_\nu~.
\end{equation*}

On $S$ we have an $\fso(V)$-equivariant symplectic structure
$\left<-,-\right>$.  Relative to a Majorana basis for $S$ where the
$\Gamma_\mu$ are represented by \emph{real} matrices, we can choose
the symplectic structure defined by the matrix representing
$\Gamma_0$.  If $s_1,s_2 \in S$, it is often convenient to write
$\left<s_1,s_2\right>$ as $\overline{s_1}s_2$.  We have that
\begin{equation*}
  \overline{s_1} \Gamma_{\mu_1\cdots\mu_p} s_2 = \varepsilon_p
  \overline{s_2} \Gamma_{\mu_1\cdots\mu_p} s_1~,
\end{equation*}
where $\varepsilon_p=+1$ for $p=1,2,5$ and $\varepsilon_p = -1$ for
$p=0,3,4$,  which reflects the isomorphisms of $\fso(V)$-modules
\begin{equation*}
  \Lambda^2 S \cong \Lambda^0V \oplus \Lambda^3V \oplus \Lambda^4V
  \qquad\text{and}\qquad \odot^2 S \cong V \oplus \Lambda^2V \oplus
  \Lambda^5V~.
\end{equation*}
Three easy consequences of this fact are the following:
\begin{enumerate}
\item for $v \in V$,
  \begin{equation}
    \label{eq:nondeg1forms}
    \left<s, v \cdot s\right> = 0\qquad \forall s \in S \implies
    v = 0~;
  \end{equation}
\item for $\zeta \in \Lambda^2 V$,
  \begin{equation}
    \label{eq:nondeg2forms}
    \left<s, \zeta \cdot s\right> = 0\qquad \forall s \in S \implies
    \zeta = 0~;
  \end{equation}
\item and for $\theta \in \Lambda^5 V$,
  \begin{equation}
    \label{eq:nondeg5forms}
    \left<s, \theta \cdot s\right> = 0\qquad \forall s \in S \implies
    \theta = 0~.
  \end{equation}
\end{enumerate}
Another consequence of this fact is the following isomorphism of
$\fso(V)$-modules
\begin{equation*}
  S\otimes S \cong \bigoplus_{p=0}^5\Lambda^p V~,
\end{equation*}
where the $\fso(V)$-submodule of $S \otimes S$ isomorphic to
$\Lambda^q V$ is given by
\begin{equation}
  \label{eq:lambdaVinSS}
  \Lambda^q V\cong\left\{\sum s_i\otimes
    s^{'}_i~ \middle | ~ \sum\overline{s_i} \Gamma_{\mu_1\cdots\mu_p}
    s'_i=0~\text{for all}~p\neq q,\, 0\leq p\leq 5\right\}
\end{equation}
for all $q=0,\ldots,5$ and that $\sbar\Gamma_{\mu_1\cdots\mu_p} s = 0$
except when $p=1,2,5$.

On occasion we will also need to use an $\eta$-Witt basis
$(\be_+,\be_-,\be_i)$, with $i=1,\dots,9$, for $V$, where
$\eta(\be_+,\be_-) =1$ and $\eta(\be_i,\be_j) = -\delta_{ij}$.  Given
an $\eta$-orthonormal basis, we may obtain an $\eta$-Witt basis by
$\be_{\pm} = \frac1{\sqrt2} (\be_0 \pm \be_\ten)$ and
$\be_1,\dots,\be_9$ coinciding.  The image in $\Cl(V)$ of $\be_\pm$
will be denoted $\Gamma_\pm$ and obey $\left(\Gamma_\pm\right)^2 = 0$.

Finally, we record a number of useful identities to perform
calculations in the Clifford algebra.

If $v \in V$ and $\theta \in \Lambda^p V$ their Clifford product in
$\Cl(V)$ satisfy
\begin{equation}
  \label{eq:VonForms}
  \begin{split}
    v \cdot \theta &= v \wedge \theta - \iota_v \theta\\
    \theta \cdot v &= (-1)^p \left( v \wedge \theta + \iota_v \theta\right)~.
  \end{split}
\end{equation}

The Fierz identity expresses the rank-one endomorphism
$s_1\overline {s_2}$ of $S$ defined by $(s_1\overline{s_2})(s) =
\left<s_2,s\right> s_1$ in terms of the standard basis of $\End(S)$.
We shall only need the special case where $s_1 = s_2$.  The identity
reads
\begin{equation}
  \label{eq:Fierz}
  s\sbar = -\tfrac1{32} \left(\sbar \Gamma^\mu s
    \Gamma_\mu + \tfrac12 \sbar \Gamma^{\mu\nu} s
    \Gamma_{\mu\nu} + \tfrac1{5!} \sbar \Gamma^{\mu_1\cdots\mu_5} s
    \Gamma_{\mu_1\cdots\mu_5} \right)~.
\end{equation}

The following identities come in handy when using the Fierz identity:
\begin{equation*}
  \begin{split}
    \Gamma_\mu \Gamma_{\nu_1\cdots\nu_p} \Gamma^\mu &=
    (-1)^{p+1}(11-2p) \Gamma_{\nu_1\cdots\nu_p}\\
    \Gamma_{\mu_1\mu_2} \Gamma_{\nu_1\cdots\nu_p} \Gamma^{\mu_1\mu_2} &=
    (11-(11-2p)^2) \Gamma_{\nu_1\cdots\nu_p}\\
    \Gamma_{\mu_1\cdots\mu_5} \Gamma_{\nu_1\cdots\nu_p}
    \Gamma^{\mu_1\cdots\mu_5} &= (-1)^{p+1} \left((11-2 p)^4-90 (11-2
      p)^2+1289\right) (11-2 p) \Gamma_{\nu_1\cdots\nu_p}~,
  \end{split}
\end{equation*}
where $\Gamma^\mu$ is defined by
$\Gamma_\nu=\eta_{\nu\mu}\Gamma^\mu$.  It follows from these
identities that
\begin{equation}
  \label{eq:trilinears}
  \tfrac12 \sbar \Gamma_{\mu\nu} s \Gamma^{\mu\nu} s = 5 \sbar
  \Gamma_\mu s \Gamma^\mu s \qquad\text{and}\qquad
  \tfrac1{5!} \sbar \Gamma_{\mu_1\cdots \mu_5} s \Gamma^{\mu_1\cdots
    \mu_5} s = -6 \sbar \Gamma_\mu s \Gamma^\mu s~,
\end{equation}
consistent with the fact that the endomorphism $s\sbar$ annihilates
$s$ due to the symplectic nature of the spinor inner product.

\section{Some representations of \texorpdfstring{$\fso(V)$}{so(V)}}
\label{sec:some-representations}

The Lie algebra $\fso(V)$ is a real form of the complex simple Lie
algebra of type $B_5$. We will therefore use the Dynkin label
$[n_1\dots n_5]$, $n_i \in \NN$, to refer to the (real) irreducible
module with highest weight $\sum_i n_i \lambda_i$, where $\lambda_i$
are a choice of fundamental weights. The following dictionary is
helpful. The module $V$ has Dynkin label $[10000]$, whereas the
adjoint module $\fso(V) \cong \Lambda^2 V$ has label $[01000]$ and the
spinor module $S$ has label $[00001]$.  Other representations which
will play a rôle are shown in Table~\ref{tab:some-irred-modul}.  The
representations with a 0 subscript are the kernels of Clifford
multiplication inside $V \otimes S$ or $V\otimes\Lambda^p V$ with
$p\geq 1$. In other words, they are the irreducible $\fso(V)$-modules
defined by the short exact sequences:
\begin{gather*}
  0 \rightarrow (V \otimes S)_0 \rightarrow V \otimes
  S \stackrel{\text{cl}}{\rightarrow} S \rightarrow 0\\
  0 \rightarrow (V \otimes \Lambda^pV)_0 \rightarrow V \otimes
  \Lambda^p V \stackrel{\text{cl}}{\rightarrow} \Lambda^{p-1}V \oplus
  \Lambda^{p+1}V \rightarrow 0~,
\end{gather*}
where $\Lambda^1V= V$ and $\Lambda^0V = \RR$.  Notice that for $p=1$,
there is an isomorphism of modules $(V\otimes\Lambda^1 V)_0 \cong
\odot^2_0 V$, the $\eta$-traceless symmetric square of $V$.

\begin{table}[h!]
  \centering
  \caption{Some irreducible modules of $\fso(V)$}
  \label{tab:some-irred-modul}
  \begin{tabular}{c*{2}{|>{$}r<{$}}}
    \multicolumn{1}{c|}{Label} & \multicolumn{1}{c|}{Alias} & \multicolumn{1}{c}{$\dim$}\\\hline
    {} [00000] & \RR & 1\\
    {} [10000] & V & 11\\
    {} [00001] & S & 32\\
    {} [01000] & \Lambda^2 V & 55\\
    {} [00100] & \Lambda^3 V & 165\\
    {} [00010] & \Lambda^4 V & 330\\
    {} [00002] & \Lambda^5 V & 462
  \end{tabular}
  \qquad\qquad
  \begin{tabular}{c*{2}{|>{$}r<{$}}}
    \multicolumn{1}{c|}{Label} & \multicolumn{1}{c|}{Alias} & \multicolumn{1}{c}{$\dim$}\\\hline
    {} [10001] & (V \otimes S)_0 & 320\\
    {} [20000] & \odot^2_0V & 65\\
    {} [11000] & (V \otimes \Lambda^2V)_0 & 429\\
    {} [10100] & (V \otimes \Lambda^3V)_0 & 1430\\
    {} [10010] & (V \otimes \Lambda^4V)_0 & 3003\\
    {} [10002] & (V \otimes \Lambda^5V)_0 & 4290
  \end{tabular}
\end{table}

\bibliographystyle{utphys}
\bibliography{Spencer}

\end{document}